\documentclass[11pt,a4paper]{article}
\usepackage{fullpage}
\usepackage{graphicx}
\usepackage{amsmath,amsfonts,amssymb,amsthm}
\usepackage{hyperref}
\usepackage{cleveref}
\usepackage{enumitem}

\theoremstyle{plain}
\newtheorem{theorem}{Theorem}[section]
\newtheorem{lemma}[theorem]{Lemma}
\newtheorem{claim}[theorem]{Claim}
\newtheorem{corollary}[theorem]{Corollary}
\newtheorem{observation}[theorem]{Observation}

\theoremstyle{definition}
\newtheorem{definition}{Definition}[section]

\usepackage{xpatch}
\xpatchcmd{\proof}{\itshape}{\proofheadfont}{}{}
\newcommand{\proofheadfont}{\bfseries}
\makeatletter
\xpatchcmd{\proof}{\@addpunct{.}}{\@addpunct{:}}{}{}
\makeatother

\newcommand{\congest}{${\mathsf{CONGEST}}$}
\newcommand{\LCA}{\operatorname{LCA}}
\newcommand{\poly}{{\rm poly}}
\newcommand{\cA}{{\cal A}}
\newcommand{\Sketch}{\mathsf{sketch}}
\newcommand{\CanSketch}{\mathsf{csketch}}
\newcommand{\LCALabel}{\mathsf{anc}}
\newcommand{\LDS}{\mathsf{LDS}}
\newcommand{\LA}{\mathsf{LA}}
\newcommand{\LD}{\mathsf{LD}}
\newcommand{\ID}{\mathsf{id}}
\newcommand{\EID}{\mathsf{eid}}
\newcommand{\UID}{\mathsf{uid}}
\newcommand{\dilation}{\mbox{\tt d}}
\newcommand{\congestion}{\mbox{\tt c}}
\newcommand{\depth}{\operatorname{depth}}
\newcommand{\Gates}{R}
\newcommand{\lcagates}{r}

\title{Near-Optimal Distributed Computation of Small Vertex Cuts}
\author{
Merav Parter\thanks{This project is funded by the European Research Council (ERC) under the European Union’s Horizon 2020 research and innovation programme (grant agreement No. 949083), and by the Israeli Science Foundation (ISF), grant No. 2084/18.}\\ 
\small Weizmann Institute \\
\small merav.parter@weizmann.ac.il
\and				
Asaf Petruschka \\
\small Weizmann Institute \\
\small asaf.petruschka@weizmann.ac.il 
}

\date{}

\begin{document}
\maketitle

\begin{abstract}
We present near-optimal algorithms for detecting small vertex cuts in the \congest\ model of distributed computing. 
Despite extensive research in this area, our understanding of the \emph{vertex} connectivity of a graph is still incomplete, especially in the distributed setting. To this date, all distributed algorithms for detecting cut vertices suffer from an inherent dependency in the maximum degree of the graph, $\Delta$. Hence, in particular, there is no truly sub-linear time algorithm for this problem, not even for detecting a \emph{single} cut vertex. 
We take a new algorithmic approach for vertex connectivity which allows us to bypass the existing $\Delta$ barrier.

As a warm-up to our approach, we show a simple $\widetilde{O}(D)$-round\footnote{Throughout the paper, we use the notation $\widetilde{O}$ to hide poly-logarithmic in $n$ terms.}
randomized algorithm for computing all cut vertices in a $D$-diameter $n$-vertex graph. This improves upon the $O(D+\Delta/\log n)$-round algorithm of [Pritchard and Thurimella, ICALP 2008].

Our key technical contribution is an $\widetilde{O}(D)$-round randomized algorithm for computing all cut \emph{pairs} in the graph, improving upon the state-of-the-art $O(\Delta \cdot D)^4$-round algorithm by [Parter, DISC '19]. Note that even for the considerably simpler setting of \emph{edge} cuts, currently $\widetilde{O}(D)$-round algorithms are known \emph{only} for detecting pairs of cut edges.

Our approach is based on employing the well-known linear graph sketching technique [Ahn, Guha and McGregor, SODA 2012] along with the heavy-light tree decomposition of [Sleator and Tarjan, STOC 1981]. Combining this with a careful characterization of the survivable subgraphs, allows us to determine the connectivity of $G \setminus \{x,y\}$ for every pair $x,y \in V$, using $\widetilde{O}(D)$-rounds. We believe that the tools provided in this paper are useful for omitting the $\Delta$-dependency even for larger cut values.
\end{abstract}
\newpage

\tableofcontents
\newpage

\section{Introduction and Our Contribution}
The vertex connectivity of the graph is a central concept in graph theory and extensive attention has been paid to developing algorithms to compute it in various computational models. Recent years have witnessed an enormous progress in our understanding of vertex cuts, from a pure graph theoretic perspective \cite{PettieY21} to many algorithmic applications  \cite{NanongkaiSY19,LiNPSY21,PettieY21,HeLW21}. Despite this exciting movement, our algorithmic toolkit for handling vertex cuts is still somewhat limited.   A large volume of the work, in the centralized setting, has focused on fast algorithms for detecting minimum vertex cuts of size at most $k$, for some small number $k$. Until recently, near-linear time algorithms where known only for $k\leq 2$ \cite{Tarjan72,HopcroftT73}. 
A sequence of recent breakthrough results \cite{ChenKLPPS,LiNPSY21, SaranurakY22} provide almost-linear time sequential algorithms for computing the vertex connectivity (even for large connectivity values).

As we discuss soon, the situation is considerably worse in distributed settings, where the problem is still fairly open already for $k=1$.
Throughout, we consider the \congest\ model \cite{Peleg:2000}. In this model, each node holds a 
processor with a unique and arbitrary ID of $O(\log n)$ bits, and initially only knows the IDs of its neighbors in the graph. The execution proceeds in synchronous rounds, where in each round, each node can send a message of size $O(\log n)$ to 
each of its neighbors. The primary complexity measure is the number of communication rounds. 
For $n$-vertex $D$-diameter graphs, Pritchard and Thurimella \cite{pritchard2011fast} presented a randomized algorithm for detecting a (single) cut vertex (a.k.a articulation point) within $O(D+\Delta/\log n)$ \congest\ rounds, where $\Delta$ is the maximum degree of the graph. The authors of \cite{pritchard2011fast} conclude their paper by noting:

\begin{quote}\cite{pritchard2011fast}
\emph{It would be interesting to know if our distributed cut vertex algorithm could be synthesized with the
cut vertex algorithm of \cite{thurimella1997sub} to yield further improvement. Alternatively, a lower bound showing that no
$O(D)$-time algorithm is possible for finding cut vertices would be very interesting.}
\end{quote}

No progress on the complexity of this problem has been done since then. For small cut values $k$, Parter \cite{parter2019small} employed the well-known fault-tolerant sampling technique \cite{weimann2013replacement,SP21} for detecting $k$ vertex cuts in $(\Delta \cdot D)^{\Theta(k)}$ deterministic rounds. Turning to approximation algorithms, for $k =\Omega(\log n)$, Censor-Hillel, Ghaffari and  Kuhn \cite{censor2014distributed} provided a $O(\log n)$ approximation for computing the \emph{value} of the vertex connectivity of the graph within $\widetilde{O}(D+\sqrt{n})$ rounds
The authors of \cite{censor2014distributed} also presented a lower bound of $\widetilde{\Omega}(D+\sqrt{n/k})$ $V$-\congest\ rounds. In the $V$-\congest\ model, each \emph{node} (rather than an edge) is restricted to send only $O(\log n)$ bits, in total, in every round. 
As shown in this paper, this lower bound does not hold in the standard \congest\ model.  

We follow the terminology from \cite{pritchard2011fast}: a \emph{cut vertex} is a vertex $x$ such that $G \setminus \{x\}$ is not connected. A \emph{cut pair} is a pair of vertices $x,y$ such that $G \setminus \{x,y\}$ is not connected. For brevity,
we refer to cut vertices and cut pairs as \emph{small cuts}.
Our main results in this paper are near-optimal algorithms for detecting these small cuts, in the sense that for every small cut, there is at least one vertex in the graph that learns it.
Our first contribution is in presenting a (perhaps surprisingly) simple randomized algorithm\footnote{As usual, all presented randomized algorithms in this paper have success guarantee of $1-1/n^c$, for any given constant $c>1$.} that can detect all cut vertices in the graph in $\widetilde{O}(D)$ rounds.  The edge-congestion\footnote{The edge congestion of a given algorithm is the worst-case bound on the total number of messages exchanged through a given edge $e$ in the graph.} of the algorithm is $\widetilde{O}(1)$ bits\footnote{We exploit this bounded congestion for detecting cut pairs.}.

\begin{theorem}\label{thm:distributed-single-cut}
There is a randomized algorithm that w.h.p.\ identifies all single cut-vertices in $G$ within $\widetilde{O}(D)$ rounds. The edge congestion is $\widetilde{O}(1)$. In the output, each vertex $x \in V$ learns if it is a cut vertex.
\end{theorem}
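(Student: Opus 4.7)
The plan is to combine linear $\ell_0$-graph-sketching (in the style of Ahn, Guha, and McGregor) with subtree aggregation on a BFS tree. The key structural observation is: for any rooted spanning tree $T$ of $G$ with root $r$ and any non-root $v$, removing $v$ splits $V \setminus \{v\}$ into the subtrees $T_{c_1}, \ldots, T_{c_k}$ rooted at $v$'s children together with the ``upper part'' $U_v := V \setminus T_v$. If we form the auxiliary multigraph $H_v$ whose $k{+}1$ nodes are these parts and whose edges are the edges of $G \setminus \{v\}$ that go between two different parts, then $v$ is a cut vertex iff $H_v$ is disconnected (and the root is handled identically with $U_r = \emptyset$). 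Indeed, each part is internally connected in $G \setminus \{v\}$ using only tree edges, so connectivity of $H_v$ is equivalent to connectivity of $G \setminus \{v\}$. It therefore suffices that every $v$ locally decides whether $H_v$ is connected.

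For preprocessing I would build a BFS tree $T$ rooted at an arbitrary vertex in $O(D)$ rounds, and equip it with interval (preorder) labels via a pipelined subtree-size aggregation followed by a top-down labeling sweep. These labels let any $v$ classify each neighbor $u \in N(v)$ as lying in a specific child-subtree of $v$ or in $U_v$. Then each vertex $u$ samples $K = \Theta(\log n)$ independent AGM-style linear $\ell_0$-sketches of its incident edges, with the random coefficients of each sketch hashed from edge identifiers via public randomness, so both endpoints of every edge contribute identical summands for that edge. Consequently, XOR-combining the per-vertex sketches over any set $S \subseteq V$ yields a sketch of $E(S, V \setminus S)$, because edges inside $S$ cancel. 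I then pipeline an upward aggregation in $T$: every vertex transmits to its parent the coordinate-wise XOR of the sketches of all vertices in its subtree. With $\widetilde{O}(1)$-bit messages this finishes in $O(D) + \poly\log n = \widetilde{O}(D)$ rounds and $\widetilde{O}(1)$ edge congestion; at the end every non-root $v$ holds $\tau(T_c)$, a sketch of $E(T_c, V \setminus T_c)$, for each child $c$, along with $\tau(T_v)$.

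The local step at $v$ uses sketch linearity to ``subtract'' its own edges: $v$ already knows which of its neighbors lie in each $T_c$ and which lie in $U_v$, so it can produce sketches of $E(\{v\}, T_c)$ and $E(\{v\}, U_v)$ internally, XOR them into $\tau(T_c)$ and $\tau(T_v)$, and obtain sketches of $E(T_c, V \setminus T_c \setminus \{v\})$ and $E(U_v, T_v \setminus \{v\})$, i.e.\ exactly the sketches of the edges of $H_v$ incident to each part. Then $v$ runs Boruvka's algorithm locally on $H_v$ using its $K = \Theta(\log n)$ independent per-part sketches: in each of $O(\log n)$ phases each current supernode samples an outgoing edge via one fresh sketch and merges accordingly, and $v$ declares itself a cut vertex iff more than one component remains. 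The conceptual obstacle to flag is that per-child tests alone are insufficient, because a child's subtree may reach $U_v$ only through a sibling subtree; genuine connectivity testing on the auxiliary graph $H_v$ is therefore essential, and it is the linearity of AGM sketches that lets $v$ assemble precisely the required sketches without extra communication. Correctness follows by a union bound over the $O(n \log n)$ $\ell_0$-samples, giving the $1 - 1/n^c$ guarantee.
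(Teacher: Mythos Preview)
Your proposal is correct and follows essentially the same approach as the paper: aggregate linear graph sketches over a BFS tree so that each vertex $x$ holds the sketch of every component of $T\setminus\{x\}$, cancel $x$'s own incident edges via linearity, and then locally simulate Bor\r{u}vka in $G\setminus\{x\}$.

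One detail you gloss over is worth making explicit. During the local Bor\r{u}vka simulation, when a part's sketch returns an outgoing edge $(u,w)$, the vertex $x$ must determine \emph{which} part contains the other endpoint $w$ in order to perform the merge. Since $(u,w)$ is in general not incident to $x$, knowing the interval labels only for your own neighbors is not enough. The fix (which the paper does explicitly via its extended edge identifiers $\EID_T$) is to have every vertex exchange its interval label with all neighbors in one round and then bake both endpoints' labels into the edge identifier that the $\ell_0$-sampler recovers; this keeps identifiers at $\widetilde{O}(1)$ bits and lets $x$ classify any recovered endpoint into the correct child-subtree or the upper part. With that adjustment your argument matches the paper's proof.
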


This settles the question raised in \cite{pritchard2011fast}. Our algorithm is based on the well-known \emph{graph-sketching} technique of Ahn, Guha and McGregor \cite{ahn2012analyzing}.  This technique has admitted numerous applications in the context of connectivity computation under various computational settings, e.g., \cite{kapron2013dynamic,kapralov2014spanners,GibbKKT15,KingKT15,MashreghiK18,GhaffariP16,DuanConnectivitySODA17,DoryP21}. Yet, to the best of our knowledge, it has not been employed before in the context of \congest\ algorithms for minimum vertex-cut computation. 

We then turn to consider the problem of detecting cut pairs. 
It has been noted widely in the literature that there is a sharp qualitative difference between a single failure and two failures. This one-to-two jump has been accomplished by now for a wide variety of fault-tolerant settings, e.g., reachability oracles \cite{Choudhary16}, distance oracles \cite{duan2009dual}, distance preservers \cite{parter2015dual,GuptaK17,ParterDISC20} and vertex-cuts \cite{HopcroftT73,BattistaT89,BattistaT96,GeorgiadisILP15}. While it is relatively easy to extend our algorithm of Theorem \ref{thm:distributed-single-cut} to detect cut pairs in $\widetilde{O}(D^2)$ rounds, providing a near-optimal complexity of $\widetilde{O}(D)$ rounds turns out to be quite involved. Our key technical contribution is:

\begin{theorem}\label{thm:distributed-cut-pairs}
There is a randomized algorithm that w.h.p.\ identifies all cut pairs in $G$ within $\widetilde{O}(D)$ rounds. For each cut pair $x,y$, either $x$ or $y$ learns that fact. 
\end{theorem}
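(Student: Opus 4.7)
The plan is to extend the sketch-based argument behind Theorem~\ref{thm:distributed-single-cut} from single cut vertices to pairs, using heavy-light decomposition to contain the congestion. First, I would build a BFS tree $T$ rooted at an arbitrary vertex $r$ in $O(D)$ rounds, compute a heavy-light decomposition (HLD) of $T$ and DFS intervals so every vertex can test ancestor/descendant relationships from $O(\log n)$-bit labels, and equip every vertex $v$ with an AGM linear sketch $\Sketch(v)$ of the multiset of its incident edges. Summing sketches over any vertex set $S$ then yields a sketch of the cut $(S,V\setminus S)$, from which a random crossing edge can be extracted w.h.p.

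Next comes a combinatorial characterization of cut pairs. For a candidate $\{x,y\}$, their removal breaks $T$ into a structured collection of \emph{survivable} subtrees: in the ancestor--descendant case where $y$ is a proper ancestor of $x$, the pieces are $V\setminus T_y$, the side-subtrees hanging off the tree path from $y$ down to $x$, and the child-subtrees of $x$; in the incomparable case they are the child-subtrees of $x$, those of $y$, and the remaining top part $V\setminus(T_x\cup T_y)$. The pair is cut iff a Boruvka-style merging simulation over these pieces, using sketches of their boundaries with all edges incident to $\{x,y\}$ suppressed, terminates with two or more components. The suppression is possible because the sketches are linear: subtracting $\Sketch(x)$ and $\Sketch(y)$ from a piece's sketch cancels the forbidden contributions.

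To realize the test distributively I would first upcast the subtree sketches $\Sketch(T_v)$ along $T$ in $\tilde O(D)$ rounds with $\tilde O(1)$ congestion, exactly as in Theorem~\ref{thm:distributed-single-cut}, giving every vertex its subtree sketch. The HLD is then used to redistribute information so that, for every candidate pair, at least one of $x,y$ learns the boundary sketches of every survivable piece. Concretely, on each heavy path $P$ of length $\ell$ the prefix/suffix sums of subtree sketches along $P$, and the aggregated sketches of the light subtrees hanging off $P$, are pipelined to all vertices of $P$ in $\tilde O(\ell)$ rounds; since any root-to-leaf walk crosses only $O(\log n)$ heavy paths and all paths work in parallel, the global cost stays $\tilde O(D)$. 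The designated endpoint of each pair then runs the $O(\log n)$-round Boruvka simulation locally and reports whether $\{x,y\}$ is a cut pair.

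The main obstacle I expect is scheduling this redistribution so that per-edge congestion stays poly-logarithmic, despite the fact that for a fixed $x$ the potential partners $y$ can be $\Theta(n)$ in number---naively one would want a separate aggregate per choice of $y$. HLD is precisely the tool for bypassing this: the ancestors of $x$ split into only $O(\log n)$ heavy-path segments, so $\tilde O(1)$ aggregated sketches per vertex (rather than one per candidate ancestor) suffice, and the side-subtrees along any ancestral path can be batched per heavy path. A further subtlety is routing information between the two heavy-path ``halves'' defined by $\LCA(x,y)$ in the incomparable case; I would handle this by letting the LCA serve as a broker that gathers sketches from both sides along the heavy paths it belongs to and relays them to the two endpoints, again exploiting the $O(\log n)$ heavy-path depth to bound the total traffic.
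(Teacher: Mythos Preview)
Your outline has the right ingredients (sketches, Bor\r{u}vka, heavy--light decomposition) but contains a genuine gap in both the dependent and the independent case.

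\textbf{Dependent case.} Take $y$ an ancestor of $x$. The pieces of $T\setminus\{x,y\}$ are the child subtrees of $x$, the top piece $V\setminus T_y$, and the \emph{side subtrees} hanging off the internal vertices of $\pi(y,x)$. The number of side subtrees can be $\Theta(\Delta\cdot D)$, so $x$ cannot receive one sketch per piece in $\widetilde O(D)$ rounds. Your fix is to ``batch per heavy path,'' i.e.\ replace many side subtrees by a single XOR-aggregated sketch. But then the initial part you hand to Bor\r{u}vka is a \emph{union} of side subtrees that need not be connected in $G\setminus\{x,y\}$. If this union has no outgoing edge, your local simulation terminates and declares it a single component, while in reality it may split into several---so you will miss cut pairs. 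A related problem is edge cancellation: to turn a piece's $G$-sketch into a $(G\setminus\{x,y\})$-sketch you must cancel exactly the edges from $y$ into that piece, not subtract $\Sketch(y)$ wholesale; doing the latter inserts spurious edges into pieces that don't contain $y$. The paper avoids both issues by \emph{not} working with pieces of $T\setminus\{x,y\}$ at all: for each $y$ it first builds a spanning tree $\widetilde T$ of $G\setminus\{y\}$ (using the outgoing edges already found by the single-cut algorithm), recomputes ancestry labels and sketches with respect to $\widetilde T$, and then runs the single-cut-vertex routine inside $G\setminus\{y\}$. Now $x$ needs only the $\widetilde T$-subtree sketches of its $\widetilde T$-children, received directly from them. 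The per-$y$ algorithm touches only edges incident to $V(T_y)$, so all $n$ of them schedule in $\widetilde O(D)$ rounds.

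\textbf{Independent case.} Letting $\LCA(x,y)$ act as a broker does not control congestion: a single vertex can be the LCA of $\Theta(n^2)$ independent pairs. More fundamentally, for a fixed $x$ the set of potential independent mates $y$ is not contained in $O(\log n)$ heavy-path segments of $\pi(s,x)$---$y$ lives on a different root-to-leaf path---so the HLD argument you give for the dependent case does not carry over. The paper's solution here is substantially more elaborate: it computes per-vertex \emph{connectivity trees} $\widehat T_x$ to shortlist candidate mates, classifies components by a ``sensitivity'' notion, designs $xy$-connectivity algorithms that communicate only along a dedicated $x$--$y$ path and edges incident to carefully bounded vertex sets $\LDS(x,y)\cup\LDS(y,x)$, and then proves a ``promise'' on path congestion by splitting pairs into light and heavy (and heavy further into mutual and non-mutual). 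None of this machinery is present or hinted at in your plan.
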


We observe that even for the simpler problem of edge-connectivity (see Remark on the Edge Connectivity below), an $\widetilde{O}(D)$-round algorithm is currently only known for edge cuts of size at most two \cite{pritchard2011fast,Daga20}. Hence, we are now able to match the complexity of these two problems for small cut values.  Our algorithm is based on distinguishing between two structural cases depending on the locations of the cut pair $x,y$ in a BFS tree of $G$. The first case which we call \emph{dependent} handles the setting where the $x$ and $y$ have ancestry/descendant relations. The second \emph{independent} case assumes that $x$ and $y$ are not dependent, i.e., $\LCA(x,y)\notin \{x,y\}$, where $\LCA(x,y)$ is the lowest (or least) common ancestor of $x$ and $y$ in the BFS tree. Each of these cases calls for a different approach. For a more in-depth technical overview, see \Cref{sec:approach}. 

We believe that the tools provided in this paper could hopefully pave the way towards detecting larger vertex cuts with no dependency in the maximum degree $\Delta$ (as it is the case for the state-of-the-art algorithm in \cite{parter2019small}).

\emph{Remark on the Edge Connectivity.} It is widely known that in undirected graphs, vertex connectivity and vertex cuts are
significantly more complex than edge connectivity and edge cuts, for which now the following result are
known: an $\widetilde{O}(m)$-time centralized exact algorithm \cite{karger1999random,Ghaffari0T20,GawrychowskiMW20} and an $\widetilde{O}(D+\sqrt{n})$ exact distributed algorithms \cite{DoryEMN21}. For constant values of edge connectivity a $\poly(D)$-round algorithm is given in \cite{parter2019small}.

\subsection{Our Approach, in a Nutshell}\label{sec:approach}

We provide the key ideas of our algorithms. Our end goal is to simulate a connectivity algorithm in the graph $G \setminus \{x,y\}$ for \emph{every} pair of vertices $x,y \in V$. Note that this is not trivial already for a single $x,y$ pair as the diameter of the subgraph\footnote{Here we mean the \emph{strong} diameter of $G\setminus \{x,y\}$, i.e. the diameter of the graph induced by $G$ on $V \setminus \{x,y\}$.
Its \emph{weak} diameter, defined as the maximal distance \emph{in $G$} between any $u,v \in V \setminus \{x,y\}$, remains at most $D$, which we crucially exploit.} $G \setminus \{x,y\}$ might be as large as $\Omega(\Delta D)$, hence using on-shelf connectivity algorithms leads to a round complexity of $O(\min\{D+\sqrt{n}, \Delta D\})$. 
We bypass this $\Delta$ dependency by using the edges incident to the vertices $x,y$ as \emph{shortcuts}. Then, to minimize the congestion imposed by running possibly $n^2$ connectivity algorithms in parallel, we employ a preprocessing phase in which we collect \emph{graph-sketch} information (explained next) at each vertex $x$. This information allows each vertex $x$ to pinpoint a bounded number of cut-mate suspects. In addition, it allows $x$, in certain cases, to locally simulate connectivity queries without using further communication.

Throughout, let $T$ be a BFS tree rooted at some source $s$, and denote the $T$-paths by $\pi(\cdot,\cdot)$. 
We start by employing the well-known \emph{heavy-light tree decomposition} technique by Sleator and Tarjan \cite{SleatorT83}. This classifies the edges of $T$ into light and heavy edges. The useful properties are that each vertex $v$ has $O(\log n)$ light edges on its tree path $\pi(s,v)$, and is the parent of \emph{one} heavy edge, connecting $v$ to its unique heavy child. It is easy to compute this decomposition on $T$ in $\widetilde{O}(D)$ rounds. For a vertex $x$, let $T_x$ be the subtree of $T$ rooted at $x$. 

\paragraph{Basic Tools: Graph Sketches and the Bor\r{u}vka Algorithm.}
A \emph{graph sketch} of a vertex $v$ is a randomized string of $\widetilde{O}(1)$ bits that compresses $v$'s edges \cite{ahn2012analyzing}. The linearity of these sketches allows one to infer, given the sketches of subset of vertices $S$, an outgoing cut edge from $S$ to $V \setminus S$ (if exists) with constant probability.  A common approach for deducing the graph connectivity merely from the sketches of the vertices is based on the well-known Bor\r{u}vka algorithm \cite{Boruvka}. This algorithm works in $O(\log n)$ phases, where in each phase, from each growable component an outgoing edge is selected. All these outgoing edges are added to the forest, while ignoring cycles. Each such phase reduces the number of growable components by constant factor, thus within $O(\log n)$ phases a maximal spanning forest\footnote{A maximal spanning forest is defined as the union of spanning trees for all connected components.}
is computed.
Since this algorithm only requires the computation of outgoing edges it can be simulated using $O(\log n)$ independent sketches for each of the vertices. In our algorithms, we aggregate graph sketches over the BFS tree $T$ which allows each vertex $x$ to locally simulate Bor\r{u}vka in the graph $G \setminus \{x\}$. This is illustrated in our algorithm for detecting a single cut vertex, described next. 

\paragraph{Warm Up: Detecting Single Cut Vertices.}
Our algorithm starts by letting each vertex $v$ locally compute its individual $\Sketch_G(v)$. Then, by aggregating the sketches (using their linearity) from the leaf vertices to the root $s$ over the BFS tree $T$, each vertex $v$ learns its subtree-sketch $\Sketch_G (V(T_v))$. Once this is completed, it is easy to let each vertex $x \in V$ learn the $G$-sketch information of all the connected components in $T \setminus \{x\}$. We then show that $x$ can locally modify these $G$-sketches into $(G \setminus \{x\})$-sketches. At this point, the vertex $x$ can locally apply the Bor\r{u}vka algorithm in $G \setminus \{x\}$ and deduce if $G \setminus \{x\}$ is connected.

We now turn to consider the considerably more challenging task of detecting cut pairs. We classify these pairs into dependent and independent. 

\paragraph{Detecting Dependent Cut Pairs.} 
Our approach for the dependent case is based on designing algorithms $\{\cA_y\}_{y \in V}$, where $\cA_y$ detects all $xy$ cut pairs of the form\footnote{This is a critical point where only $x$ learns if $y \in \pi(s,x)$ is its cut-mate (by running Alg. $\cA_y$), but $y$ might not learn its descendant cut-mates, such as $x$.} $x \in V(T_y)$. We show that each such an algorithm $\cA_y$ can be designed in a way that sends a total of $\widetilde{O}(1)$ messages only along edges incident to $V(T_y)$, and runs in $\widetilde{O}(D)$ rounds. The standard random delay technique allows us then to schedule the execution of all $n$ algorithms $\{\cA_y\}_{y \in V}$ within $\widetilde{O}(D)$ rounds. At a high level, each algorithm $\cA_y$ is based on employing the single cut vertex detection algorithm in the graph $G \setminus \{y\}$. Our challenge is then twofold: first, the diameter of the graph $G \setminus \{y\}$ might be as large as $\Omega(\Delta D)$, and second, communication is restricted to use only edges incident to $V(T_y)$. We overcome these challenges by using $y$ as a coordinator, providing global computation services and communication shortcuts that essentially enables efficient simulation (in both dilation and congestion) of the algorithm for detecting single cut vertices in $G \setminus \{y\}$.

\paragraph{Detecting Independent Cut Pairs.}
The most technically involved case is where $x,y$ are independent, namely, \emph{do not} have ancestry relations in $T$. A-priori, the number of such potential cut-mates $y$ for a given vertex $x$ might be even linear in $n$. To filter out irrelevant options, the algorithm starts by computing at each vertex $x$ a tree $\widehat{T}_x$ that encodes the connectivity between $s$ and the vertices in $V_x=V(T_x) \setminus \{x\}$ in the graph $G \setminus \{x\}$. Let $\mathcal{C}_x$ denote the collection of maximal connected components in the graph $G[V_x]$. The tree $\widehat{T}_x$ is the union of paths of the form $\pi(s,u_C)\circ (u_C,v_C)$ for every component $C \in \mathcal{C}_x$, where $v_C$ is some representative vertex in $C$, and $u_C$ is outside the subtree $T_x$.
It is then easy to observe that in case $\hat{T}_x$ is non-empty, all independent cut-mates $y$ must appear on some path $\pi(s,u_C)$ for $C \in \mathcal{C}_x$.
For a given suspect $y$, we call the $\mathcal{C}_x$-components $C$ for which $y \in \pi(s,u_C)$, $y$-\emph{sensitive}. Our argument has the following structure.

\smallskip
\noindent\textbf{Multiple $xy$-Connectivity Algorithms, Under a Promise.}
For a fixed $xy$ pair, we design an algorithm $\mathcal{A}^P_{x,y}$ that determines the connectivity in $G \setminus \{x,y\}$ \emph{given} an $x$-$y$ path $\Pi_{x,y}$ (on which $x,y$ can exchange messages). The algorithm $\mathcal{A}^P_{x,y}$ has the special property that it sends messages either along $\Pi_{x,y}$, or along edges incident to a restricted subset of vertices in $T_x, T_y$, defined as follows.
Let $\LDS(x,y) \subset V(T_x)$ be the set of all vertices which are descendants of the \emph{light} children of $x$, and belong to a $y$-sensitive component in $\mathcal{C}_x$. The set $\LDS(y,x)$ is defined in an analogous manner.
$\mathcal{A}^P_{x,y}$ is then guaranteed to send $\widetilde{O}(1)$ messages only along the edges of $\Pi_{x,y}$ and the edges incident to $\LDS(x,y) \cup \LDS(y,x)$. This restriction is crucial in order to run multiple $\mathcal{A}^P_{x,y}$ algorithms, for distinct $xy$ pairs, in parallel. Using the properties of the heavy-light decomposition and our sensitivity definition, one can show that each vertex $w \in V$ belongs to the $\LDS(x,y)$ sets of at most $\widetilde{O}(D)$ pairs $xy$. The main challenge is in bounding the overlap between the $\Pi_{x,y}$ paths across different $xy$ pairs. 
We show that given a subset $Q \in V \times V$, the collection of $\{\mathcal{A}^P_{x,y} \mid (x,y) \in Q\}$ algorithms
can be scheduled in parallel in $\widetilde{O}(D)$ rounds using the random delay approach \cite{leighton1994packet,Ghaffari15}, given that following promise holds for $Q$:

\begin{quote}[\textbf{Promise}:]
\emph{There is a path collection $\mathcal{P}_Q=\{\Pi_{x,y} \mid (x,y)\in Q\}$ such that each path has length $O(D)$, and each edge appears on $\widetilde{O}(D)$ paths in $\mathcal{P}_Q$. }
\end{quote}

On a high level, each algorithm $\mathcal{A}^P_{x,y}$ works by letting $x$ and $y$ jointly simulate the Bor\r{u}vka algorithm in $G \setminus \{x,y\}$. The main challenge is that the communication is restricted to the edges incident to $\LDS(x,y) \cup \LDS(y,x)$, despite the fact that one should also take into account the remaining vertices in $T_x,T_y$, e.g., descendants of the \emph{heavy} children of $x,y$. In each Bor\r{u}vka phase, we maintain the invariant that $x,y$ jointly hold the sketches of connected subsets (called \emph{parts}) in $G \setminus \{x,y\}$, where we split the responsibility between $x,y$ in a careful manner. We mainly distinguish between parts that contain a heavy child of $x,y$ and the remaining \emph{light} parts that are contained in $\LDS(x,y) \cup \LDS(y,x)$. The merges of the light parts are implemented by using communication between vertices in $\LDS(x,y) \cup \LDS(y,x)$. The merges concerning the heavy parts are implemented by using the direct communication over the $\Pi_{x,y}$ path. Each such Bor\r{u}vka phase is implemented in $\widetilde{O}(D)$ rounds. At the end of the simulation, $x,y$ both learn whether $G \setminus \{x,y\}$ is connected. 

\smallskip
\noindent\textbf{Omitting the Promise by Classifying to Light and Heavy Pairs.}
While the promise clearly holds for $\widetilde{O}(D)$ pairs, it does not hold for all $\Omega(n^2)$ pairs, in general. Our approach is based on classifying the independent $xy$ pairs into two classes: \emph{light} and \emph{heavy}.
This classification is based on the trees $\widehat{T}_x, \widehat{T}_y$, as well as on the heavy-light decomposition of $T$. Informally, for a light pair $xy$, one can define a $\Pi_{x,y}$ path that intersects a light subtree of either $x$ or $y$.  
These paths can be shown to have a bounded overlap, hence satisfying the promise. Handling the heavy pairs is more involved. Here we take a mixed approach. We define a special subset of the heavy pairs for which the promise can be satisfied (called \emph{mutual pairs}). This subset is chosen carefully to guarantee the following, perhaps surprising, property: the remaining (non-mutual) heavy pairs $x,y$ can be decided locally, at either $x$ or $y$. 
Our key observation is that for a $xy$ heavy pair, the graph $G \setminus \{x,y\}$ is connected iff one of the heavy children of $x,y$ is connected to $s$ in $G \setminus \{x,y\}$. Hence, it is mainly essential for $x,y$ to collect a sketch information on the components of these heavy children in $\mathcal{C}_x,\mathcal{C}_y$.

\subsection{Preliminaries}\label{sec:prelim}

Throughout the paper, fix a connected $n$-vertex graph $G = (V, E)$, and a BFS tree $T$ for $G$ rooted at an arbitrary source $s \in V$. We denote the unique tree path from $u$ to $v$ by $\pi(u,v,T)$. When the tree $T$ is clear from context, we may omit it and write $\pi(u,v)$. We use the $\circ$ operator for path concatenation. An (undirected) edge between vertices $u$,$v$ is denoted by $(u,v)$.

\paragraph{Heavy-Light Tree Decomposition.}
We now present our heavy-light terminology, the notion of \emph{compressed paths}, and their distributed computation.

For a non-leaf vertex $v \in V(T)$, its \emph{heavy child}, denoted $v_h$, is the (unique) child $v'$ of $v$ maximizing\footnote{Ties are broken arbitrarily and consistently.} the number of vertices in its subtree $T_{v'}$. Any other child of $v$ is a \emph{light child}.
A vertex is \emph{heavy} if it is a heavy child, and \emph{light} otherwise (so the root $s$ is light).
A tree edge is \emph{heavy} if it connects a vertex to its heavy child, and \emph{light} otherwise.
If $(u,u')$ is a heavy (resp., light) edge in the path $\pi(s,v)$, then $u$ is a \emph{heavy ancestor} (resp., \emph{light ancestor}) of $v$, and $v$ is a \emph{heavy descendant} (resp, \emph{light descendant}) of $u$.
(Note that in our terminology, a `heavy ancestor' need not be a heavy vertex itself, and similarly for all combinations of heavy/light and ancestor/descendant).
We denote by $\LA(v)$ (resp., $\LD(v)$) the set of $v$'s light ancestors (resp., descendants).
It is easy to show that $\pi(s,v)$ contains $O(\log n)$ \emph{light} vertices and edges, hence also $\vert \LA(v) \vert = O(\log n)$

\begin{definition}[Compressed paths]
Let $v \in V(T)$. Let $L = [s = v_0, v_1, \dots, v_k]$ be the ordered list of the light vertices on the root-to-$v$ path $\pi(s, v, T)$. The \emph{compressed path} of $v$ with respect to $T$, denoted $\pi^*(s, v, T)$ consists of the list $L$, along with a table mapping each $v_i$ to the number of heavy vertices appearing between $v_i$ and $v_{i+1}$ in $\pi(s, v, T)$ (where we define $v_{k+1} = v$). Note that the compressed path $\pi^*(s,v,T)$ has bit-length $O(\log^2 n)$.
\end{definition}

Observe that compressed paths can be used as \emph{ancestry labels} in $T$: given $\pi^*(s,u,T)$ and $\pi^*(s,v,T)$, one can check whether $\pi(s,u,T)$ is a prefix of $\pi(s,v,T)$, and hence determine whether $u$ is an ancestor of $v$ in $T$.

\begin{lemma}\label{lem:distributed_heavy_light}
	For every tree $T$, there is an $\widetilde{O}(D(T))$-rounds $\widetilde{O}(1)$-congestion algorithm letting each vertex $v$ of $T$ learn its heavy/light classification and its compressed path $\pi^* (s,v, T)$. 
\end{lemma}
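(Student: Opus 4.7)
The plan is a two-pass algorithm on $T$: a bottom-up convergecast to compute subtree sizes (from which we read off the heavy/light classification), followed by a top-down broadcast to disseminate compressed paths.

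In the first pass, each leaf sends $1$ upward; each internal vertex $v$ waits for one $O(\log n)$-bit message from every child $v'$ carrying $|T_{v'}|$, then forwards $\sum_{v'} |T_{v'}| + 1$ to its parent. Standard convergecast analysis gives $O(D(T))$ rounds and a single $O(\log n)$-bit message per tree edge. Once $v$ knows the sizes of all its children's subtrees, it locally selects its heavy child (largest subtree, with ties broken by ID) and notifies each child of its heavy/light status via one additional bit. Every vertex then knows its own classification.

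In the second pass, I exploit the following incremental rule: given $\pi^*(s, v, T)$, a child $v'$ can construct $\pi^*(s, v', T)$ using only its own ID and the heavy/light bits of $v$ and $v'$: if $v$ is heavy, first increment by one the count stored at the current last entry of the list $L$ (recording that $v$ itself is a heavy vertex lying between the last light ancestor and $v'$); then, if $v'$ is light, append $(v', 0)$ to $L$. Starting from the trivial root path $\pi^*(s, s, T) = ([s], \{s \mapsto 0\})$, each vertex transmits its $O(\log^2 n)$-bit compressed path (i.e., $O(\log n)$ messages) to each child as soon as it has assembled it. By the standard top-down pipelining of $O(\log n)$ messages along a tree of depth $D(T)$, a depth-$d$ vertex completes reception by round $O(d + \log n)$, for a total running time of $O(D(T) + \log n) = \widetilde{O}(D(T))$. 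Each tree edge carries only $O(1)$ messages in the first pass and $O(\log n)$ messages (one compressed path) in the second, so the total edge-congestion is $\widetilde{O}(1)$.

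The main technical point will be formalizing the pipelining schedule in the second pass and verifying the incremental update rule against the definition of $\pi^*$; both reduce to a routine case analysis on whether $v$ and $v'$ are heavy or light, and I do not foresee any deeper obstacle.
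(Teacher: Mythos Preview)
Your proposal is correct and follows essentially the same two-pass approach as the paper: a bottom-up aggregation of subtree sizes to determine the heavy/light classification, followed by a top-down pass in which each vertex derives its compressed path from that of its parent and its own classification. You spell out the incremental update rule and the pipelining more explicitly than the paper does, but the underlying argument is the same.
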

\def\APPENDHEAVYLIGHTDIST{
    \begin{proof}
		First, each vertex $v$ learns its subtree size $\vert V(T_v)\vert $ by bottom-up aggregation on $T$. By passing these sizes to the parents, within another round each vertex can classify its children as either heavy or light. Within one more round, each vertex is informed of its classification by its parent. Computing the compressed paths can now be executed in a top-down fashion, as a vertex can deduce its own compressed path from the compressed path of its parent and its own heavy/light classification.
	\end{proof}
}\APPENDHEAVYLIGHTDIST

\paragraph{Graph Sketches.}
We give a formal but brief definition of graph sketches. We follow \cite{DoryP21}, and refer the reader to Section 3.2.1 therein for a detailed presentation of the subject.
Throughout, let $\oplus$ denote the bitwise-XOR operator.
The first required ingredients are randomized \emph{unique edge identifiers}:
\begin{lemma}[{\cite[Lemma 3.8]{DoryP21}}, {\cite[Lemma 2.4]{GhaffariP16}}]\label{lem:uids}
    Using a random seed $\mathcal{S}_{\ID}$ of $O(\log^2 n)$ random bits, one can compute a collection of $M=\binom{n}{2}$ $O(\log n)$-bit identifiers for the pairs in $\binom{V}{2}$, denoted $\mathcal{I}=\{\UID(e_1), \ldots, \UID(e_{M})\}$, with the following property: For any nonempty subset $E' \subseteq E$ with $\vert E' \vert \neq 1$, $\Pr[\oplus_{e \in E'} \UID(e) \in \mathcal{I}] \leq 1/n^{10}$. Furthermore, for any $e = (u,v)$, the identifier $\UID(e)$ can be computed from $\ID(u)$, $\ID(v)$ and the random seed $\mathcal{S}_{\ID}$.
\end{lemma}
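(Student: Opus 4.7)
Plan: Construct the identifiers via a small-bias (pseudorandom) distribution, so that the XOR of the identifiers on any nonempty subset is nearly uniform over $\{0,1\}^L$ with $L=\Theta(\log n)$.

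First, fix a canonical, deterministic enumeration of the pairs in $\binom{V}{2}$ using vertex IDs (e.g., order $\{u,v\}$ with $\ID(u)<\ID(v)$ as the integer $n\cdot\ID(u)+\ID(v)\in[n^2]$). Let $M=\binom{n}{2}$ and set $L=13\log n$. Using $\mathcal{S}_\ID$, sample an $\epsilon$-biased distribution $\mathbf{X}\in\{0,1\}^{ML}$ via (e.g.) the Alon--Goldreich--H\aa stad--Peralta construction with $\epsilon=1/n^{13}$. Its seed length is $O(\log(ML/\epsilon))=O(\log n)$, comfortably fitting in the stated $O(\log^2 n)$ budget (the slack also allows, if desired, combining the small-bias sample with an $O(\log n)$-wise independent polynomial hash). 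Define $\UID(e)$ to be the $L$-bit block of $\mathbf{X}$ indexed by $e$; by construction it is computable from $\ID(u),\ID(v)$ and $\mathcal{S}_\ID$.

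Next, establish the key XOR-uniformity property: for any nonempty $F\subseteq\binom{V}{2}$, the XOR $Y:=\bigoplus_{e\in F}\UID(e)$ is $\epsilon$-biased over $\{0,1\}^L$. Indeed, for any nonempty character $S\subseteq[L]$, $\langle S,Y\rangle$ equals a single character of $\mathbf{X}$ indexed by the nonempty set $\{(e,j):e\in F,\,j\in S\}$, whose bias is at most $\epsilon$ by construction. Fourier inversion then gives $|\Pr[Y=\tau]-2^{-L}|\leq\epsilon$ pointwise.

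To conclude, fix $E'\subseteq E$ with $|E'|\geq 2$ and union-bound over $q\in\binom{V}{2}$:
\[
\Pr\!\left[\bigoplus_{e\in E'}\UID(e)\in\mathcal{I}\right]\;\leq\;\sum_{q\in\binom{V}{2}}\Pr\!\left[\bigoplus_{e\in E'\triangle\{q\}}\UID(e)=0\right].
\]
Since $|E'|\geq 2$ forces $E'\neq\{q\}$, each symmetric difference $E'\triangle\{q\}$ is nonempty, so the previous paragraph bounds every summand by $2^{-L}+\epsilon\leq 2/n^{13}$. Summing over $\binom{n}{2}\leq n^{2}/2$ choices of $q$ gives $\leq 1/n^{10}$, as required.

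The main obstacle I anticipate is handling $E'$ of \emph{arbitrarily large} size. A plain $O(\log n)$-wise independent hash (the naive way of spending $O(\log^2 n)$ random bits) gives the bound only for $|E'|=O(\log n)$; it provably fails for larger subsets (e.g., if $X_3:=X_1\oplus X_2$, the triple is pairwise independent yet $X_1\oplus X_2\oplus X_3=0$ deterministically). Routing the construction through a small-bias space side-steps this uniformly in $|E'|$, which is why I would not attempt a pure $k$-wise-independence proof.
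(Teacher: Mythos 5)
Your proof is correct, and it takes essentially the route of the sources the paper cites for this lemma (the paper itself states \Cref{lem:uids} as imported from prior work without reproving it): generate the $M\cdot L$ identifier bits from an $\epsilon$-biased sample space so that every nonempty XOR of blocks is pointwise $\epsilon$-close to uniform, then union-bound over $q\in\binom{V}{2}$ after rewriting $\bigoplus_{e\in E'}\UID(e)=\UID(q)$ as the vanishing of the XOR over the nonempty set $E'\triangle\{q\}$ --- which is exactly where the hypothesis $\vert E'\vert\neq 1$ is used. Your parameter choices ($L=13\log n$, $\epsilon=1/n^{13}$) do yield the claimed $1/n^{10}$ bound within the stated seed budget, and your closing remark about why plain $O(\log n)$-wise independence would not suffice for arbitrarily large $E'$ is accurate.
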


Next, we define the notion of \emph{extended edge identifiers}, formed by augmenting $\UID(e)$ with the IDs and $T$-ancestry labels of the endpoints based on compressed paths, namely $\LCALabel_T(v) = \pi^* (s, v, T)$. Formally, an edge $e =(u,v)$ we have
\begin{gather}\label{eq:extend-ID}
\EID_T(e)=
[\UID(e), \ID(u), \ID(v), \LCALabel_T(u), \LCALabel_T(v)]~.
\end{gather}

We are ready to define the sketches. We now follow \cite{DuanConnectivityArxiv16,DuanConnectivitySODA17, DoryP21} and use pairwise independent hash functions for this purpose. Choose $L = c \log n$ pairwise independent hash functions $h_1, \ldots, h_{L}:\{0,1\}^{\Theta(\log n)} \to \{0, \ldots, 2^{\log M}-1\}$, and for each $i \in [1, L]$ and $j \in [0,\log M]$ define the edge set 
$E_{i,j} =\{ e \in E \mid h_i(e) \in [0,2^{\log M-j})\}$.
Each $h_i$ can be defined by a random seed of logarithmic length \cite{TCS-010}. Thus, a random seed $\mathcal{S}_h$ of length $O(L \log n)$ can be used to determine the collection of all $L$ hash functions.
For each vertex $v$ and indices $i,j$, let $E_{i,j}(v)$ be the edges incident to $v$ in $E_{i,j}$. 
The $i^{th}$ \emph{basic sketch unit} of a vertex $v$ is then given by:
\begin{gather*}
	\Sketch_{G,i}(v)= 
	\big[\bigoplus_{e \in E_{i,0}(v)} \EID_T (e),\ldots, \bigoplus_{e \in E_{i, \log M}(v)} \EID_T (e)\big].
\end{gather*}
We extend this definition to vertex subsets by XORing, i.e., for $S \subseteq V$, 
$\Sketch_{G, i}(S)=\oplus_{v \in S}\Sketch_{G,i}(v).$
The \emph{sketch} of a vertex $v$ is formed by  concatenating its basic sketch units: 
\begin{gather*}
	\Sketch_G(v)= 
	[\Sketch_{G,1}(v),\Sketch_{G,2}(v), \ldots\Sketch_{G,L}(v)]~.
\end{gather*}
Again, we extend this definition to $S \subseteq V$ by
$\Sketch_G(S)=\bigoplus_{v \in S}\Sketch_G(v)$. 

\begin{observation}\label{obs:0-sketch}
    $\Sketch_{G} (V) = 0$, where $0$ denote the all-$0$s string of appropriate length.
\end{observation}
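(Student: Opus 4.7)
The plan is to unfold definitions and exploit the fact that each edge has exactly two endpoints, so it contributes to exactly two vertices' sketches, whereupon XOR cancellation finishes the argument.

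First I would reduce the claim componentwise. Since $\Sketch_G(V) = \bigoplus_{v \in V} \Sketch_G(v)$ is the concatenation over $i \in [1,L]$ of $\bigoplus_{v \in V}\Sketch_{G,i}(v)$, and each basic sketch unit is itself a tuple indexed by $j \in [0, \log M]$, it suffices to prove that for every pair $(i,j)$,
\[
\bigoplus_{v \in V}\ \bigoplus_{e \in E_{i,j}(v)} \EID_T(e) \;=\; 0.
\]

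Next I would swap the order of summation. Each edge $e \in E_{i,j}$ has exactly two endpoints $u,v \in V$, and by the definition $E_{i,j}(w) = \{e \in E_{i,j} : w \in e\}$, the edge $e$ appears in $E_{i,j}(w)$ for precisely $w \in \{u,v\}$. Hence every term $\EID_T(e)$ with $e \in E_{i,j}$ occurs exactly twice in the double XOR above, while edges not in $E_{i,j}$ contribute nothing. Since $\oplus$ is associative and commutative and $x \oplus x = 0$ for any bit string $x$, the whole expression telescopes to the all-zeros string of the appropriate length. Concatenating the zero results across the $(i,j)$-coordinates yields $\Sketch_G(V) = 0$.

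There is no genuine obstacle here — the statement is a direct consequence of the linearity of XOR and the fact that each edge has exactly two endpoints. The only care needed is notational: making sure the double sum is correctly re-indexed and that the observation is stated for the entire concatenated sketch rather than for a single basic unit. I would keep the proof to just a few lines.
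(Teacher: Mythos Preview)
Your proposal is correct and matches the paper's own proof essentially line for line: reduce to a fixed $(i,j)$-entry, observe that each $\EID_T(e)$ for $e\in E_{i,j}$ appears exactly twice (once per endpoint) in the XOR, and conclude via $x\oplus x=0$. The paper's version is just a bit terser.
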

\begin{proof}
    For each $i,j$, the $j^{th}$ entry of $\Sketch_{G,i} (V)$ contains 
    $
    \bigoplus_{v \in V} \bigoplus_{e \in E_{i,j}(v)} \EID(e) ~.
    $
    Each term in this XOR appears precisely twice: $\EID(e)$ of $e = (u,w) \in E_{i,j}$ appears once for $u$ and once for $w$. 
    Hence, this entry equals $0$, and the result follows.
\end{proof}

The main use of graph sketches is in finding outgoing edges:
\begin{lemma}[{\cite[Lemma 3.11]{DoryP21}}]\label{lem:sketch-property}
	For any subset $S$, given a basic sketch unit $\Sketch_{G,i}(S)$ and the seed $\mathcal{S}_{\ID}$ one can compute, with constant probability\footnote{Over the choice of the random seeds $\mathcal{S}_{\ID}$ and $\mathcal{S}_{h}$.} $\EID_T (e)$ for an outgoing edge $e$ from $S$ in $G$, if such exists.
\end{lemma}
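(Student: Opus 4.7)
The plan is to first unpack the algebra of $\Sketch_{G,i}(S) = \bigoplus_{v \in S} \Sketch_{G,i}(v)$. Expanding the inner sums, the $j$-th entry becomes $\bigoplus_{v \in S} \bigoplus_{e \in E_{i,j}(v)} \EID_T(e)$, where each edge $e = (u,w)$ contributes $\EID_T(e)$ once per endpoint that lies in $S$. Hence edges internal to $S$ cancel (appearing twice), edges with no endpoint in $S$ never appear, and so the $j$-th entry equals $\bigoplus_{e \in \partial S \cap E_{i,j}} \EID_T(e)$, where $\partial S$ denotes the cut between $S$ and $V \setminus S$. This is exactly the generalization of \Cref{obs:0-sketch}, which is the special case $S = V$ with $\partial S = \emptyset$.

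Next I would show that, assuming $\partial S \neq \emptyset$, some level $j^* \in \{0, 1, \ldots, \log M\}$ isolates a single outgoing edge with constant probability. For $j^* = \lfloor \log |\partial S| \rfloor$, the expected size of $\partial S \cap E_{i,j^*}$ is $\Theta(1)$, since by definition $h_i$ maps each edge into $E_{i,j^*}$ with probability $2^{-j^*}$. A standard second-moment (Chebyshev) calculation, which uses only the pairwise independence of $h_i$, then lower-bounds the probability of exact isolation, i.e.\ $|\partial S \cap E_{i,j^*}| = 1$, by a universal constant. Since the recovery algorithm does not know $|\partial S|$, it simply tries all $O(\log n)$ values of $j$ and keeps whichever yields a valid single-edge recovery.

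Verification is then handled by \Cref{lem:uids}. Parsing a candidate entry according to the $\EID_T$ format of \eqref{eq:extend-ID}, its leading field is exactly $\bigoplus_{e \in \partial S \cap E_{i,j}} \UID(e)$. By \Cref{lem:uids}, the probability that any XOR of two or more distinct UIDs lands in the codebook $\mathcal{I}$ is at most $1/n^{10}$, so a union bound over the $O(\log n)$ levels keeps the probability of a spurious match negligible. The algorithm therefore accepts any entry whose leading field lies in $\mathcal{I}$ and outputs the single $\EID_T(e)$ it encodes: if $|\partial S \cap E_{i,j}| = 1$ this recovers the genuine outgoing edge (and its endpoint IDs and ancestry labels come out correctly, being read off the same entry); if $|\partial S \cap E_{i,j}| \geq 2$ the rejection step fires with probability $1 - O(\log n / n^{10})$; and the all-zero case $|\partial S \cap E_{i,j}| = 0$ is trivially detected and discarded.

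The main obstacle, as I see it, is the isolation argument of step two rather than anything algebraic: one must verify that pairwise independence of $h_i$ alone (rather than a stronger $k$-wise independence) already suffices for constant-probability isolation at the correct level. This is the classical $\ell_0$-sampling-style second-moment computation, and it is the spot where the probabilistic content of the lemma sits. The remaining pieces --- the cancellation identity for $\Sketch_{G,i}(S)$, the codebook-based verification via \Cref{lem:uids}, and the union bound over levels --- all follow directly from the definitions set up earlier in the excerpt.
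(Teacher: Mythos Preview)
The paper does not prove this lemma; it is stated with a citation to \cite[Lemma 3.11]{DoryP21} and used as a black box. Your proposal is a correct reconstruction of the standard $\ell_0$-sampling argument underlying that result: the cancellation identity reducing each entry to a XOR over $\partial S \cap E_{i,j}$, the pairwise-independence second-moment bound giving constant-probability isolation at the level $j^* \approx \log|\partial S|$, and the UID codebook of \Cref{lem:uids} for verification. This is exactly the approach in the cited references (Duan et al., Dory--Parter), so there is nothing to contrast against here.
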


The following lemma lets us ``cancel" edges from the sketches:

\begin{lemma}\label{lem:cancel-sketch-property}
	Let $S \subseteq V$, and let $E' \subseteq E$ be a set of outgoing edges from $S$. Then, given  $\Sketch_{G} (S)$, the random seeds $\mathcal{S}_h, \mathcal{S}_{\ID}$, and the extended identifiers $\EID_T(e)$ of all $e \in E'$, one can compute a \emph{cancellation-sketch} $\CanSketch(E')$ for $E'$ such that $\Sketch_{G \setminus E'} (S) = \Sketch_{G}(S) \oplus \CanSketch(E')$.
\end{lemma}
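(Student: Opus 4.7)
The plan is to define $\CanSketch(E')$ entry-by-entry to exactly encode the contribution of the edges of $E'$ to the sketch of $S$, and then verify via a straightforward parity argument that XORing it into $\Sketch_{G}(S)$ yields the sketch of $S$ in $G \setminus E'$. The construction is purely a local computation on the data we are handed; no communication or randomness is introduced beyond recomputing hashes from $\mathcal{S}_h$.

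First I would unpack the $(i,j)$-th coordinate of $\Sketch_{G,i}(S)$, which equals $\bigoplus_{v \in S} \bigoplus_{e \in E_{i,j}(v)} \EID_T(e)$. Regroup this sum by edges rather than by vertices: an edge $e = (u,w) \in E_{i,j}$ contributes $\EID_T(e)$ once for each endpoint lying in $S$. Thus any edge with both endpoints in $S$ contributes twice and cancels, any edge with no endpoint in $S$ does not appear, and any edge in $E_{i,j}$ with exactly one endpoint in $S$, i.e.\ an outgoing edge, contributes exactly $\EID_T(e)$ once. Exactly the same identity holds for $\Sketch_{G \setminus E', i}(S)$ with the edge set $E_{i,j}$ replaced by $E_{i,j} \setminus E'$, since the hash-based definition of $E_{i,j}$ is invariant under edge deletions.

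Since $E'$ is given to consist solely of outgoing edges from $S$, taking the XOR of the two expressions shows that the $(i,j)$-th coordinates differ by precisely $\bigoplus_{e \in E' \cap E_{i,j}} \EID_T(e)$. I therefore define the $(i,j)$-th coordinate of $\CanSketch(E')$ to be exactly this quantity, and concatenate over all $i \in [1,L]$ and $j \in [0,\log M]$ to form the full cancellation sketch. By construction, $\Sketch_{G \setminus E'}(S) = \Sketch_G(S) \oplus \CanSketch(E')$.

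It remains to check that $\CanSketch(E')$ is computable from the stated inputs. The seed $\mathcal{S}_h$ determines the hash functions $h_1, \dots, h_L$, and for each $e \in E'$ the identifier $\UID(e)$ is contained inside the provided $\EID_T(e)$, so one can evaluate $h_i(e)$ and test membership in each $E_{i,j}$ locally; the seed $\mathcal{S}_{\ID}$ is only needed if one wishes to recompute $\UID$'s from raw endpoint IDs, which is not required here since the $\EID_T(e)$'s are given. The only ``subtle'' point is the parity argument above, which crucially uses the hypothesis that every $e \in E'$ is outgoing from $S$; if $E'$ contained an internal edge of $S$, that edge would contribute $0$ (not $\EID_T(e)$) to the difference, and the cancellation formula would over-correct. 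This is the one place where the hypothesis must be invoked explicitly, but it poses no real obstacle.
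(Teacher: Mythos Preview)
Your proposal is correct and follows essentially the same approach as the paper: both arguments rewrite each coordinate of $\Sketch_{G,i}(S)$ as a XOR over the outgoing edges of $S$ (using the double-counting cancellation for internal edges), observe that removing $E'$ deletes exactly the $E'$-terms from this XOR, and define $\CanSketch(E')$ coordinate-wise as $\bigoplus_{e \in E' \cap E_{i,j}} \EID_T(e)$, which is locally computable from the given $\EID_T(e)$'s and the seed $\mathcal{S}_h$.
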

\def\APPENDCANCELSKETCH{
\begin{proof}
	Let $Out(S)$ be the set of outgoing edges from $S$ in $G$, and $Out_{i,j}(S) = Out(S) \cap E_{i,j}$. Observe that for each $i \in \{1, \dots, L\}$:
\begin{align*}
    \Sketch_{G,i}(S)
    &= 
    [\bigoplus_{v \in S} \bigoplus_{e \in E_{i,0}(v)} \EID_T (e),\ldots, \bigoplus_{v \in S} \bigoplus_{e \in E_{i,\log M}(v)} \EID_T (e) ] \\
    &=
    [\bigoplus_{e \in Out_{i,0}(S)} \EID_T (e),\ldots, \bigoplus_{e \in Out_{i, \log M}(S)} \EID_T (e)]
\end{align*}
	where the last equality is true as each edge $e$ with both endpoints inside $S$ appears either $0$ or $2$ times in each XOR. Now let $Out'(S)$ and $Out'_{i,j}(S)$ be defined exactly as $Out(S)$ and $Out_{i,j}(S)$, but with respect to $G \setminus E'$ instead of $G$. Then as $Out(S)$ is the disjoint union of $Out'(S)$ and $E'$, we obtain
	\begin{gather*}
		\Sketch_{G,i}(S) \oplus \Sketch_{G \setminus E',i}(S) = 
		[\bigoplus_{e \in E' \cap E_{i, 0}} \EID_T (e),\ldots, \bigoplus_{e \in E' \cap E_{i, \log M}} \EID_T (e)].
	\end{gather*}
	The right-hand side of the above equation can be computed from the given extended IDs of $E'$ and the random seeds $\mathcal{S}_h, \mathcal{S}_{\ID}$, and we denote it by $\CanSketch_{i}(E')$. Note that by the above equation, we have that $\Sketch_{G \setminus E',i}(S) = \Sketch_{G,i}(S) \oplus \CanSketch_{i}(E')$. Therefore, we can define the cancellation-sketch as the concatenation $\CanSketch(E') = [\CanSketch_{1}(E'), \ldots, \CanSketch_{L}(E')]$, and the lemma follows.
\end{proof}
}\APPENDCANCELSKETCH

\paragraph{Distributed Scheduling.}
The congestion of an algorithm $\cA$ is defined by the worst-case upper bound on the number of messages exchanged through a given graph edge when simulating $\cA$.  Throughout, we make extensive use of the following random delay approach of \cite{leighton1994packet}, adapted to the \congest\ model. 
\begin{theorem}[{\cite[Theorem 1.3]{Ghaffari15}}]\label{thm:delay}
	Let $G$ be a graph and let $\cA_1,\ldots,\cA_m$ be $m$ distributed algorithms, each algorithm takes at most $\dilation$ rounds, and where for each edge of $G$, at most $\congestion$ messages need to go through it, in total 
	\emph{over all} these algorithms. Then, there is a randomized distributed
	algorithm that w.h.p.\ runs all the algorithms in $\widetilde{O}(\congestion +\dilation)$ rounds.
\end{theorem}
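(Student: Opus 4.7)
The plan is to use the classical Leighton--Maggs--Rao random-delay scheduling technique, adapted to the \congest\ setting. I would choose a window parameter $W = \Theta(\congestion)$ and independently assign each algorithm $\cA_j$ a uniformly random start delay $\delta_j \in \{0, 1, \dots, W - 1\}$, so that $\cA_j$'s internal round $t$ is simulated at global time $\delta_j + t$. The total schedule length is $W + \dilation = O(\congestion + \dilation)$ global steps, and it remains to show that each global step can be executed in $\widetilde{O}(1)$ actual \congest\ rounds.

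The key concentration claim is: for every edge $e$ and every global time $r$, the number $N_{e,r}$ of messages scheduled to cross $e$ at global time $r$ is $O(\log n)$ with probability $1 - 1/\poly(n)$. Observe that a single algorithm $\cA_j$ contributes at most $1$ to $N_{e,r}$---it uses $e$ at most once per internal round, and it can contribute only at internal round $r - \delta_j$, if at all. Thus $N_{e,r}$ is a sum of independent Bernoulli variables indexed by $j$ with $\mathbb{E}[N_{e,r}] \leq \congestion / W = O(1)$, since the total number of edge-usages of $e$ across all algorithms is at most $\congestion$. A Chernoff bound plus a union bound over the $\poly(n)$ edge-time pairs gives the claim. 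Each global step can then be realized by $O(\log n)$ real \congest\ rounds: in each such round every edge forwards one queued message, so after $O(\log n)$ rounds all queues are drained. Multiplying by the schedule length yields a total of $\widetilde{O}(\congestion + \dilation)$ rounds.

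The main subtlety, and what I expect to be the trickiest step, is justifying the independence hidden in the Chernoff bound: one must ensure that the multiple messages a single algorithm sends over its lifetime do not produce correlated contributions to the same $(e,r)$ slot. This is handled cleanly by the observation that each algorithm uses a given edge at most once per internal round, so within a single algorithm all contributions to a fixed $(e,r)$ slot collapse to one Bernoulli trial, and independence across algorithms then follows from the independence of the $\delta_j$'s. A secondary technicality is verifying that the simulated algorithms' internal transcripts are unchanged by the imposed delays---an easy induction on the internal round $t$ of each $\cA_j$ suffices, since $\cA_j$'s computation at round $t$ depends only on messages received by round $t-1$, which are correctly delivered by global time $\delta_j + t - 1$.
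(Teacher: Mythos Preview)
The paper does not prove this theorem; it quotes it as a black box from \cite{Ghaffari15}. Your sketch is precisely the standard random-delay argument underlying that reference and is correct. The only point worth tightening is the union bound: there are $|E|\cdot(\congestion+\dilation)$ edge--time pairs, so the Chernoff tail should be calibrated to $1/\poly(n,\congestion,\dilation)$ rather than $1/\poly(n)$; this only changes the hidden polylogarithmic factors and is harmless in every application in this paper, where $\congestion,\dilation=\widetilde{O}(D)\le\poly(n)$.
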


\section{Single Cut Vertices} \label{sec:single-cut}

In this section we describe the distributed algorithm for detecting single vertex cuts of Theorem \ref{thm:distributed-single-cut}.
This serves both as a warm-up to our approach in the subsequent sections devoted to cut pairs detection, as well as for a detailed presentation of basic tools used in these next sections. 
We assume each vertex $v$ is equipped with its heavy/light classification in $T$ (recall that $T$ is a BFS tree of $G$ rooted at $s$), and with its ancestry label which is its compressed path, $\LCALabel_T (v) = \pi^* (s, v, T)$. This can be achieved in $\widetilde{O}(D)$ rounds by Lemma \ref{lem:distributed_heavy_light}.

\paragraph{Step 1: Computing Subtree Sketches.}
The source $s$ samples the random seeds $\mathcal{S}_{\ID}, \mathcal{S}_h$ of $\widetilde{O}(1)$ bits, and broadcasts them to all vertices. Using Lemma \ref{lem:uids}, each vertex $v$ can then locally compute the $\UID(e)$ for each edge $e$ incident to $v$. By letting all neighbors in $G$ exchange their $\LCALabel_T$-labels, each $\UID(e)$ can be concatenated with the required information to create $\EID(e)$. This provides each vertex $v$ with all the information needed to lcoally compute $\Sketch_{G} (v)$. By XOR-aggregation of the individual sketches from the leaves of $T$ upwards, each vertex $v$ obtains its subtree sketch, given by $\Sketch_G(V(T_v))=\oplus_{u \in T_v} \Sketch_{G} (v)$.
Next, each vertex passes its subtree sketch to its parent, so that each vertex now holds the subtree sketch for each of its children. Finally, the source $s$ also broadcasts its subtree sketch, which is $\Sketch_G(V)$, to all the other vertices.
The $\widetilde{O}(D)$-rounds and $\widetilde{O}(1)$-congestion bounds follow as we have performed a constant number of broadcasts and aggregations of $\widetilde{O}(1)$-bit strings over $T$, which has diameter $O(D)$ as a BFS tree of $G$ \cite{Peleg:2000}.

\paragraph{Step 2: Local Bor\r{u}vka Simulation.} This step is locally applied at every vertex $x$, and requires no additional communication rounds. We show that given the information of Step 1, $x$ can locally simulate the Bor\r{u}vka's algorithm \cite{Boruvka} in the graph $G \setminus \{x\}$, and thus determine if $G \setminus \{x\}$ is connected.
Let $x_1, \ldots, x_{k}$ be the children of $x$ in $T$. We assume that $x \neq s$; the case $x = s$ is easier and requires only slight modifications. The connected components in $T \setminus \{x\}$ are denoted by 
$
\mathcal{P}_x=\{ V(T_{x_j}) \mid j=1, \dots, k \} \cup \{V \setminus V(T_x)\}.
$
By Step 1, $x$ holds the $G$-sketch of each component in $\mathcal{P}_x$: It has explicitly received $\Sketch_G(V(T_{x_j}))$ from each child $x_j$. The sketch of the remaining component can be locally inferred as $\Sketch_G (V \setminus V(T_x))=\Sketch_G (V) \oplus \Sketch_G (V(T_x))$. 
For a neighbor $u$ of $v$, $x$ can use $\LCALabel_T (u)$ (found in $\EID_T ((x,u))$) to determine component containing $u$ in $\mathcal{P}_x$. Therefore, using \Cref{lem:cancel-sketch-property}, $x$ can cancel its incident edges to obtain $\Sketch_{G \setminus \{x\}} (P)$ for every $P \in \mathcal{P}_x$.

We are now ready to describe the Bor\r{u}vka execution, which is very similar to the (centralized) decoding algorithm of \cite{DoryP21}. The algorithm consists of $O(\log n)$ phases. Each phase $i$ is given as input a partitioning $\mathcal{P}_{x,i} = \{P_{i,1}, \ldots, P_{i,k_i}\}$ of $V \setminus \{x\}$ into connected \emph{parts}, along with their sketch information $\Sketch_{G \setminus \{x\}} (P_{i,j})$. The initial partitioning given to the first phase is $\mathcal{P}_{x,0} = \mathcal{P}_x$. The output of the phase is a coarser partitioning  $\mathcal{P}_{x,i+1}$ along with the $(G\setminus \{x\})$-sketch information of the new parts. A part $P_{i,j} \in \mathcal{P}_{x,i}$ is said to be \emph{growable} if it has at least one outgoing edge to a vertex in $V \setminus (P_{i,j} \cup \{x\})$. To obtain outgoings edges from the growable parts in $\mathcal{P}_{x,i}$, the algorithm uses the $i^{th}$ basic-unit sketch $\Sketch_{G \setminus \{x\}, i}(P_{i,j})$ of each $P_{i,j} \in \mathcal{P}_{x,i}$. By Lemma \ref{lem:sketch-property}, from every growable part $P_{i,j} \in \mathcal{P}_{x,i}$, we get $\EID_T (e)$ for one outgoing edge $e=(u,v)$ with constant probability.
To find the part $P_{i,j'}$ containing the other endpoint of $e$ (to be merged with $P_{i,j}$), we use the $T$-ancestry labels found in $\EID_T(e)$. Say this endpoint is $v$. We determine the component of $v$ in $T \setminus \{x\}$, i.e.\ the part $P_{0,q}$ containing $v$ in $\mathcal{P}_{x,0}$, by querying the ancestry relation between $v$ and the children of $x$ using their $\LCALabel_T$-labels. Then $v$ belongs to the unique component $P_{i,j'} \in \mathcal{P}_{x,i}$ containing $P_{0,q}$. The sketch information for the next phase $i+1$ is given by XORing over the sketches of the parts in $\mathcal{P}_{x,i}$ that got merged into a single component in $\mathcal{P}_{x,i+1}$.

Note that it is important to use fresh randomness (i.e.\ independent sketch information) in each of the Bor\r{u}vka phases \cite{ahn2012analyzing,kapron2013dynamic,DuanConnectivityArxiv16}. Since each growable component gets merged with constant probability, the expected number of growable components is reduced by a constant factor in each phase. Thus after $O(\log n)$ phases, the expected number of growable components is at most $1/n^5$, and by Markov's inequality we conclude that w.h.p.\ there are no growable components. 
Namely, at this point, the partition is just the connected components of $G \setminus \{x\}$ (as a part is not growable iff it has no outgoing edges in $G\setminus\{x\}$, i.e.\ it is a connected component of this graph).
We therefore determine that $G \setminus \{x\}$ is disconnected (i.e.\ that $x$ is a cut vertex) iff this partition has more than one part.

This concludes the proof of Theorem \ref{thm:distributed-single-cut}.
Finally, we note that by tracking the merges throughout the Bor\r{u}vka simulation, $x$ can also find a subset $\widetilde{E}$ of the outgoing edges received throughout the simulation such $(T \setminus \{x\}) \cup \widetilde{E}$ is a maximal spanning forest of $G \setminus \{x\}$. This becomes useful in next sections.

\section{Dependent Cut Pairs} \label{sec:depend}
In this section we present an $\widetilde{O}(D)$-round algorithm for detecting \emph{dependent} cut pairs in $G$, i.e.\ pairs $xy$ where $x$ is a descendant of $y$ in the BFS tree $T$ rooted at $s$. Recall that our approach is based on scheduling the execution of algorithms $\{\cA_y\}_{y \in V}$, where $\cA_y$ detects all cut pairs $xy$ such that $x \in T_y$ (see overview in Section \ref{sec:approach}). By employing the single cut vertices detection algorithm of Section \ref{sec:single-cut} as a common preprocessing phase prior to the execution of the $\{\cA_y\}_{y \in V}$ algorithms, we may assume that there are no single cut vertices in $G$. Furthermore, by carefully examining the properties of this algorithm, we may assume that every $v \in V$ holds the following preprocessing information:

\begin{itemize}
	\item The random seeds $\mathcal{S}_{\ID},\mathcal{S}_h$.
	\item $\EID_T (e)$ for every edge $e$ incident to $v$.
	\item $\vert V(T_v)\vert $ and $\vert V(T_{v'})\vert $ for every $T$-child $v'$ of $v$.
	\item $\Sketch_{G} (v)$, $\Sketch_{G} (V)$, $\Sketch_{G} (V(T_v))$ and $\Sketch_{G} (V(T_{v'}))$ for every $T$-child $v'$ of $v$.
	\item An edge set $\widetilde{E}(v) \subseteq E \setminus E(T)$ such that $\widetilde{T}(v) = (T \setminus \{v\}) \cup  \widetilde{E}(v)$ is a spanning tree of $G \setminus \{v\}$.
	For each $e \in  \widetilde{E}(v)$, its extended identifier $\EID_T (e)$ is known.
\end{itemize}

\begin{lemma} \label{lem:Ay}
	Assuming all vertices know their preprocessing information, there is an $\widetilde{O}(D)$-rounds, $\widetilde{O}(1)$-congestion algorithm $\cA_y$ that detects all cut pairs $xy$ where $x \in V(T_y)$. The algorithm $\cA_y$ sends messages only on edges incident to $V(T_y)$.
\end{lemma}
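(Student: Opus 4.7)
My plan is to implement $\cA_y$ as a distributed simulation of the single-cut-vertex algorithm of \Cref{thm:distributed-single-cut} on the graph $G \setminus \{y\}$, with $y$ serving as a central coordinator. Since $y \in V(T_y)$, every edge incident to $y$ is already in the allowed edge set, so $y$ can freely exchange messages with all of its $G$-neighbors; this is the ``communication shortcut'' that lets us bypass the potentially $\Omega(\Delta D)$ diameter of $G\setminus\{y\}$.

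First, $y$ announces its ID to its neighbors in one round; each $v \in V(T_y)\setminus\{y\}$ then converts its preprocessed $\Sketch_G(v)$ into $\Sketch_{G\setminus\{y\}}(v)$ by cancelling the edge $(v,y)$ if present, via \Cref{lem:cancel-sketch-property}. Aggregating subtree sketches bottom-up along $T_y$, every $v \in V(T_y)\setminus\{y\}$ learns $\Sketch_{G\setminus\{y\}}(V(T_v))$ and the analogous sketches of each of its $T$-children, in $\widetilde{O}(D)$ rounds and $\widetilde{O}(1)$ congestion on $T_y$-edges. Applying \Cref{obs:0-sketch} to $G\setminus\{y\}$, each $v$ can further compute $\Sketch_{G\setminus\{y\}}(V\setminus V(T_v)\setminus\{y\})=\Sketch_{G\setminus\{y\}}(V(T_v))$. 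In parallel, $y$ uses its preprocessed $\Sketch_G$-values and its own incident edges' $\EID$s to compute $\Sketch_{G\setminus\{y\}}$ of the external parts $V\setminus V(T_y)$ and $V(T_{y_j})$ for each $T$-child $y_j$.

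The goal is then for each $v$ to locally simulate Bor\r{u}vka on $G\setminus\{v,y\}$ exactly as in \Cref{sec:single-cut}, thereby deciding whether $v$ is a cut of $G\setminus\{y\}$. For the internal side, the parts $V(T_{v'})$ are handled entirely at $v$ by cancelling its own incident edges from the child sketches. The place where I expect the main technical difficulty to lie is the external side: the natural $T\setminus\{v,y\}$-partition of the external region has up to $\deg_T(y)=\Omega(\Delta)$ parts, so $y$ cannot ship all their sketches to every $v$ while keeping the congestion $\widetilde{O}(1)$; yet collapsing them into a single external super-part is unsafe, since that super-part need not be connected in $G\setminus\{v,y\}$ and Bor\r{u}vka could then under-count components and miss the cut.

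My plan to resolve this is to exploit the preprocessed spanning tree $\widetilde{T}(y)=(T\setminus\{y\})\cup\widetilde{E}(y)$ of $G\setminus\{y\}$, whose structure is known in full at $y$. Since every component of $\widetilde{T}(y)\setminus\{v\}$ is connected in $G\setminus\{v,y\}$, using these components as the initial external partition yields a correct Bor\r{u}vka simulation. The idea is for $y$ to pre-aggregate, via its knowledge of $\widetilde{E}(y)$, the $\widetilde{O}(1)$-bit sketch of each $\widetilde{T}(y)\setminus\{v\}$-subtree that is relevant to $v$, and to pipeline these sketches top-down along $T_y$ in a single pass where the message sent across each $T_y$-edge consists only of the one $\widetilde{O}(1)$-bit sketch of the $\widetilde{T}(y)$-subtree lying above that edge. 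Each $v$ then combines this with what it already has (internal subtree sketches and $\Sketch_{G\setminus\{y\}}(V\setminus V(T_v)\setminus\{y\})$) to reconstruct a valid connected initial partition, and runs the local Bor\r{u}vka simulation of \Cref{sec:single-cut} to decide whether $vy$ is a cut pair. All communication stays on edges of $T_y$ or edges incident to $y$ (hence incident to $V(T_y)$), and the total cost remains $\widetilde{O}(D)$ rounds with $\widetilde{O}(1)$ congestion, as required.
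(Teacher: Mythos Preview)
Your high-level plan matches the paper's: simulate the single-cut algorithm on $G\setminus\{y\}$ using $\widetilde{T}=\widetilde{T}(y)$ as the underlying spanning tree, with $y$ as coordinator. You also correctly pinpoint the main obstacle, namely that the external $T\setminus\{v,y\}$-partition is unsafe and one must instead start Bor\r{u}vka from the $\widetilde{T}\setminus\{v\}$-components. However, two concrete steps are missing, and the second is a genuine gap.

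First, your mechanism for supplying each $v$ with the sketches of its $\widetilde{T}\setminus\{v\}$-components does not work as described. A ``single top-down pass on $T_y$ carrying one $\widetilde{O}(1)$-bit sketch per $T_y$-edge'' cannot suffice: for a $\widetilde{T}$-child $v'$ of $v$, the set $V(\widetilde{T}_{v'})$ in general strictly contains $V(T_{v'})$, because it may swallow entire other components $C_j$ (whenever the corresponding $p_j$ lies in $V(T_{v'})$), and which $C_j$'s get swallowed differs across the children $v'$. One ``above'' sketch plus the $T$-subtree sketches cannot reconstruct these. The paper handles this by having $y$ compute $\Sketch_{\widetilde{G}}(V(\widetilde{T}_{r_i}))$ for each $i$ via the component tree $\widetilde{CT}$, routing this value to each $p_i$, and then aggregating the resulting $\beta$-values \emph{bottom-up on the rerooted trees} $\widetilde{T}^{(i)}$ rather than top-down on $T_{y_i}$ (Step~3 of \Cref{sec:depend}).

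Second, and more fundamentally, you never address how $v$ identifies which $\widetilde{T}\setminus\{v\}$-component contains the far endpoint $w$ of an outgoing edge extracted during Bor\r{u}vka. Your sketches carry $\EID_T$, hence only $T$-ancestry labels of $w$. But when $w\in C_j$ for some $j\neq i$ (with $v\in C_i$), deciding which of $v$'s $\widetilde{T}$-subtrees contains $C_j$ requires knowing the structure of $\widetilde{CT}$ (equivalently $\widetilde{E}(y)$), which has up to $\deg_T(y)$ nodes and cannot be disseminated within $\widetilde{O}(1)$ congestion. The paper's fix is to recompute all sketches with new extended identifiers $\EID_{\widetilde{T}}$ that embed hybrid $\widetilde{T}$-ancestry labels (\Cref{cl:new-ancestry-labels}); the hybrid is engineered so that for every $w\in C_0$ the new label coincides with the old $T$-label, which is essential because $\cA_y$ cannot contact vertices outside $T_y$ to hand them new labels.
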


We next describe the algorithm $\cA_y$.
Throughout, let $\widetilde{E} = \widetilde{E}(y)$, $\widetilde{T} = \widetilde{T}(y)$, $\widetilde{G} = G \setminus \{y\}$, and denote the $T$-children of $y$ by $y_1, \dots, y_k$.

\paragraph{Step 0: Local Computation of Component Tree for $\widetilde{T}$ in $y$.} 
This preliminary step is executed by local computation in $y$. It constructs the \emph{component tree} $\widetilde{CT}$, which is obtained from $\widetilde{T}$ by contracting each connected component of $\widetilde{T} \setminus \widetilde{E}$ into a single node. Note that $\widetilde{T} \setminus \widetilde{E} = T \setminus \{y\}$, namely the nodes in $\widetilde{CT}$ correspond to connected components of $T \setminus \{y\}$. More concretely, for every $i = 1, \dots, k$ the component $C_i = V(T_{y_i})$ is a node of $\widetilde{CT}$, and (unless $y = s$) there is another node for the component $C_0 = V \setminus V(T_y)$.
Each edge $(C_i,C_j)$ in $\widetilde{CT}$ corresponds to the unique $\widetilde{E}$-edge incident to both $C_i$ and $C_j$. Observe that the extended edge identifiers which are known to $y$ by the preprocessing contain the $T$-ancestry labels of all the endpoints of the edges in $\widetilde{E}$, as well as those of the $y_i$'s. Using these ancestry labels, $y$ can determine the components incident to each edge $e \in \widetilde{E}$, and therefore construct $\widetilde{CT}$ locally.

For clarity of presentation we assume $y \neq s$; the special case $y=s$ is easier, and requires only slight modifications. We set $s$ as the root of $\widetilde{T}$, and accordingly $C_0$ is the root of $\widetilde{CT}$. For each $i = 1, \dots , k$, denote by $e_i = (r_i, p_i)$ the unique edge in $\widetilde{E}$ connecting $C_i$ to its parent in $\widetilde{CT}$, where $r_i$ is the endpoint of $e_i$ inside $C_i$, and $p_i$ is the endpoint lying in the parent component. See \Cref{fig:component-tree} for an illustration.

\begin{figure}
	\begin{center}
		\includegraphics[height=7cm]{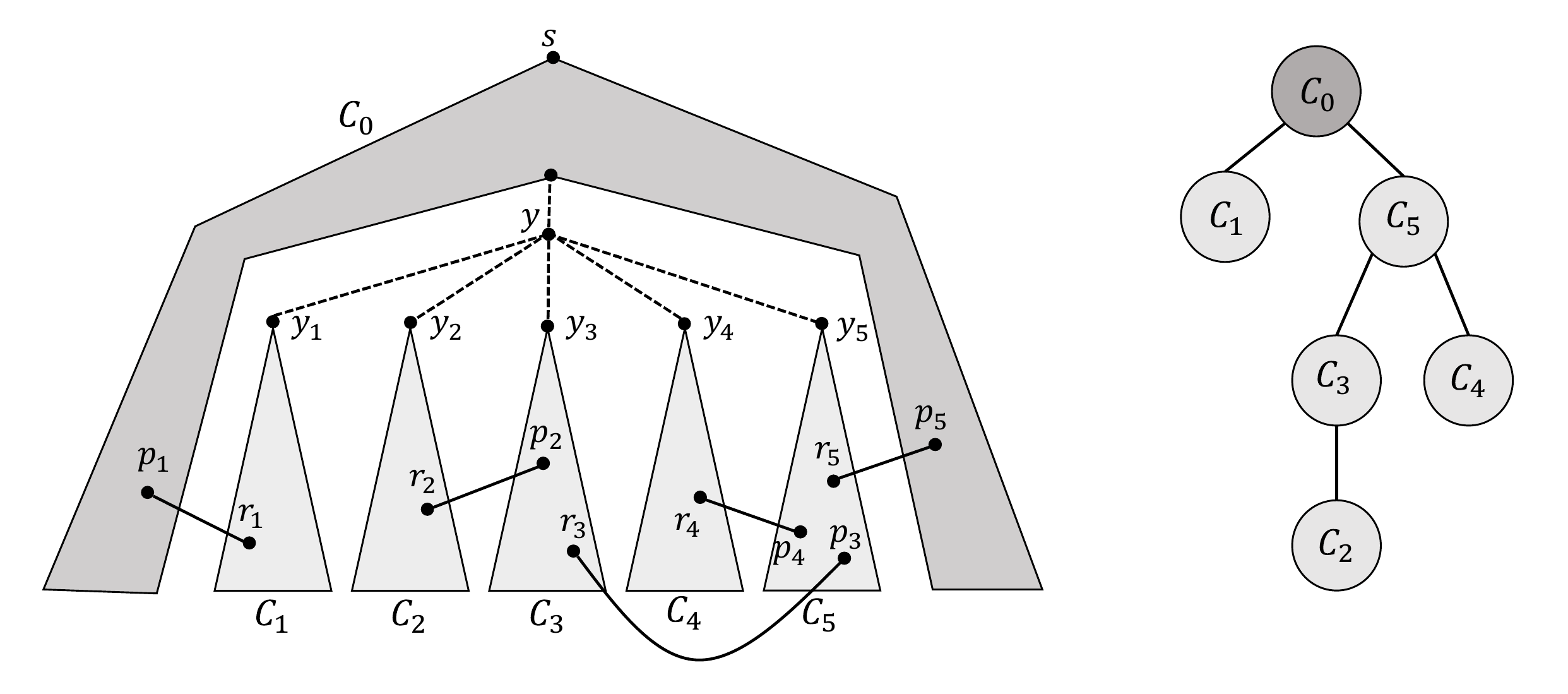}
		\caption{ \sf Left: Illustration of the trees $T$ and $\widetilde{T}$. The dashed edges are $T$-edges incident to $y$, and the solid edges are $\widetilde{E}$-edges. The components $C_0, C_1, \dots, C_5$ are each internally connected via original $T$-edges. The tree $\widetilde{T}$ is obtained by removing $y$ and its incident edges from $T$, and adding the $\widetilde{E}$ edges.
		Right: The component tree $\widetilde{CT}$. 
			 \label{fig:component-tree} 
		}
	\end{center}
\end{figure}

\paragraph{Step 1: Construction of $\widetilde{T}$.}
The goal of this step is for each vertex in $V(T_y) \setminus \{y\}$ to learn its parent in $\widetilde{T}$. First, $y$ sends its children their corresponding edges from $\widetilde{E}$, so each $y_i$ learns $\EID_T (e_i = (r_i, p_i))$. The $y_i$'s propagate (in parallel) these edges down their $T$-subtrees, so that all vertices of component $C_i$ learn $\EID_T(e_i)$. Then, a BFS procedure with source vertex $r_i$ is executed inside each tree $T_{y_i}$ (in parallel). This completes the step, since the $\widetilde{T}$-parent of each vertex in $C_i$ is its BFS-parent from this last procedure, except for $r_i$ whose $\widetilde{T}$-parent is $p_i$.

\paragraph{Step 2: Computing $\widetilde{T}$-Ancestry labels.}
In later steps, we locally simulate Bor\r{u}vka's algorithm similarly to Section \ref{sec:single-cut}, but with the initial components being parts of $\widetilde{T}$. In order to identify which components get merged by the outgoing edges, we need ancestry labels with respect to $\widetilde{T}$ rather than $T$.
As we are restricted to send messages only on $V(T_y)$-incident edges, we would like the $T$- and $\widetilde{T}$-labels to coincide for vertices in $C_0$ (as some of them cannot be informed of new labels). Note that the compressed paths of $v \in C_0$ w.r.t.\ $T$ and $\widetilde{T}$ are generally different, even though $\pi(s,v,T) = \pi(s,v,\widetilde{T})$, as the these trees have different heavy-light notions. Hence, instead of relying solely on compressed paths in $\widetilde{T}$, we take a hybrid approach and define new labels based on breaking each $\widetilde{T}$-path to a $T$-part and a strictly $\widetilde{T}$-part, and compressing them accordingly.
We still have the challenge of computing (at least part of) the heavy-light decomposition of $\widetilde{T}$. As the diameter of $\widetilde{T}$ might be $\Omega(\Delta D)$, we cannot use simple bottom-up or top-down computations on $\widetilde{T}$. The key to overcoming this is utilizing $y$ as a coordinator, enabling the parts $C_i$ to work in parallel.
The full details appear in the proof of the next claim, which is deferred to \Cref{sec:depend-imp-details} on implementation details.

\begin{claim}\label{cl:new-ancestry-labels}
	In $\widetilde{O}(D)$-rounds of computation with $\widetilde{O}(1)$ congestion, in which messages are sent only on $V(T_y)$-incident edges, one can compute $\widetilde{T}$-ancestry labels $\LCALabel_{\widetilde{T}} (\cdot)$ of $\widetilde{O}(1)$ bits, such that \emph{every} vertex $v$ of $\widetilde{T}$ learns $\LCALabel_{\widetilde{T}} (v)$.
\end{claim}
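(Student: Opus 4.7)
My plan is to produce hybrid labels that leave the preprocessed $T$-labels intact on $C_0$ (most $C_0$-vertices cannot be updated under the $V(T_y)$-incidence restriction) and extend them on $V(T_y)$. Concretely, set $\LCALabel_{\widetilde{T}}(v)=\LCALabel_T(v)$ for $v\in C_0$. For $v\in V(T_y)\setminus\{y\}$, let $C_{j_v}$ be $v$'s component in $T\setminus\{y\}$, let $C_{j_1(v)}$ be the unique $\widetilde{CT}$-ancestor of $C_{j_v}$ that is a child of $C_0$, and let $e_{j_1(v)}=(r_{j_1(v)},p_{j_1(v)})$ with $r_{j_1(v)}\in V(T_y)$; set
\[
\LCALabel_{\widetilde{T}}(v)\;=\;\bigl(\LCALabel_T(p_{j_1(v)}),\;\sigma(v)\bigr),
\]
where $\sigma(v)$ is the $\widetilde{O}(1)$-bit compressed path label from $r_{j_1(v)}$ to $v$ under the heavy-light decomposition of the $\widetilde{T}$-subtree rooted at $r_{j_1(v)}$. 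This splits the $\widetilde{T}$-path from $s$ to $v$ at the $C_0$/$V(T_y)$ boundary; ancestry queries decompose cleanly into a $T$-ancestry test between the recorded entry points $p_{j_1(\cdot)}$ and an ancestry test within the strictly-$\widetilde{T}$ suffix.

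I would compute the labels in three phases, each parallelizable across the $k$ components $C_i$ by using $y$ as a coordinator. Phase~1 computes $\widetilde{T}$-subtree sizes and heavy-light classifications. Since $y$ knows $\widetilde{CT}$ and $\{|C_j|\}$, it computes locally $|\widetilde{T}_{r_j}|$ (the sum of $|C_{j'}|$ over descendants $C_{j'}$ of $C_j$ in $\widetilde{CT}$) for every $j$, and delivers $|\widetilde{T}_{r_j}|$ to $p_j$ by routing through the $y$-$y_i$ edge (where $C_i$ is the parent component of $C_j$ in $\widetilde{CT}$) and down $C_i$ along $T$-edges. Within each $C_i$, a bottom-up aggregation rooted at $r_i$ in the $\widetilde{T}$-rooting from Step~1 combines these external contributions with internal subtree sizes, so every $v\in C_i$ learns $|\widetilde{T}_v|$ and can classify its $\widetilde{T}$-children; the heavy/light bit for each attached root $r_j$ is determined at $p_j$ and returned along $e_j$. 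Phase~2 handles the $T$-part: $y$ reads $\LCALabel_T(p_{j_1(i)})$ off the preprocessed $\EID_T(e_{j_1(i)})$, sends it over the $y$-$y_i$ edge, and $y_i$ broadcasts it inside $C_i$.

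Phase~3 is the main work: producing $\sigma(v)$ for every $v\in V(T_y)$. I would decompose $\sigma(v)=\rho(j_v)\circ\tau(v)$, where $\rho(j)$ is the compressed path from $r_{j_1(j)}$ to $r_j$ (identical for all $v$ in $C_j$) and $\tau(v)$ is the internal compressed path from $r_{j_v}$ to $v$ inside $C_{j_v}$. The internal $\tau(v)$ is produced by top-down propagation from $r_i$ within each $C_i$ in parallel, in $O(D)$ rounds. The cross-component prefix $\rho(j)$ I would assemble at $y$: using its $\widetilde{CT}$-knowledge, the heavy/light bits of the $r_j$'s collected in Phase~1, and an $\widetilde{O}(1)$-bit per-component summary of the internal $r_i$-to-$p_j$ segments supplied back by each $C_i$, $y$ stitches the pieces along each $\widetilde{CT}$-chain and returns the finished $\rho(j)$ on the $y$-$y_j$ edge for broadcast inside $C_j$.

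The main obstacle I expect is keeping the $y$/component exchanges of Phases~1 and 3 within the $\widetilde{O}(1)$ congestion bound on each $y$-$y_i$ edge, since the number of $(p_j,r_j)$ transitions inside a single $C_i$ can be $\Omega(|C_i|)$ and naive forwarding blows up the congestion. Overcoming this requires each $C_i$ to pre-aggregate these transitions into a single compact summary---leveraging the fact that only $O(\log n)$ light $\widetilde{CT}$-crossings can contribute to any one $\rho(j)$---and scheduling the per-component procedures via Theorem~\ref{thm:delay} so that the overall $\widetilde{O}(D)$-round and $\widetilde{O}(1)$-congestion targets are met.
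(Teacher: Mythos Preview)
Your label definition and three-phase outline match the paper's proof almost exactly. The real difference---and a genuine gap---is in the routing you use to shuttle per-$\widetilde{CT}$-edge information between $y$ and the components. In Phase~1 you deliver $|\widetilde{T}_{r_j}|$ to $p_j$ by sending it across the $(y,y_i)$ edge where $C_i$ is the $\widetilde{CT}$-parent of $C_j$, and in Phase~3 you ask each $C_i$ to send back a summary of all its internal $r_i$-to-$p_j$ segments over $(y,y_i)$. As you correctly note, a single $C_i$ may have $\Omega(|C_i|)$ children in $\widetilde{CT}$, so both directions overload $(y,y_i)$. Your proposed fix---pre-aggregating into an $\widetilde{O}(1)$-bit per-component summary---does not work: the $r_i$-to-$p_j$ compressed paths for the many $p_j$'s inside one $C_i$ are distinct strings with no shared structure that compresses to $\widetilde{O}(1)$ bits, and the $O(\log n)$ bound you invoke constrains only the number of light crossings on a single root-to-leaf path, not the fan-out of a component in $\widetilde{CT}$.

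The paper avoids this obstacle with a different routing rule: information associated with the $\widetilde{CT}$-edge $(C_i,C_j)$ is always carried on the channel of the \emph{child} component, i.e., through $T_{y_j}$ and across the $\widetilde{E}$-edge $e_j=(r_j,p_j)$, never through $T_{y_i}$. Concretely, in Step~A $y$ sends $|\widetilde{T}_{r_j}|$ down $T_{y_j}$ to $r_j$, which forwards it across $e_j$ to $p_j\in C_i$; in Step~C each $p_j$ (having computed $\pi^*(r_i,p_j,\widetilde{T}_{r_i})$ internally) sends it across $e_j$ to $r_j$, which forwards it up $T_{y_j}$ to $y$. Since there is exactly one $\widetilde{CT}$-edge per child $y_j$, each $(y,y_j)$ edge carries a single $\widetilde{O}(1)$-bit message in each direction, giving the claimed $\widetilde{O}(1)$ congestion without any aggregation trick.
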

\def\APPENDNEWANCESTRYLABELS{
\begin{proof}[Proof of Claim \ref{cl:new-ancestry-labels}]
	We define the labels $\LCALabel_{\widetilde{T}}$ as follows. If $v \in C_0$, we simply take its $T$-ancestry label, i.e.\ $\LCALabel_{\widetilde{T}} (v) = \LCALabel_T (v) = \pi^* (s, v, T)$. We now define $\LCALabel_{\widetilde{T}} (v)$ for $v \in C_i$, $i \neq 0$. The $s$-$v$ $\widetilde{T}$-path decomposes as $\pi(s, v, \widetilde{T}) = \pi(s,p_j,T) \circ (p_j, r_j) \circ \pi(r_j, v, \widetilde{T}_{r_j})$ for the unique $p_j \in C_0$ which is an ancestor of $v$ in $\widetilde{T}$. Then $\LCALabel_{\widetilde{T}} (v)$ is obtained by heavy-light compression of these paths as $\LCALabel_{\widetilde{T}} (v) = \pi^* (s, p_j, T) \circ (p_j, r_j) \circ \pi^* (r_j, v, \widetilde{T}_{r_j})$. (The heavy-light notions in the first segment are those of $T$, while in the last segment they are those of $\widetilde{T}$.)
	
	Given the $\LCALabel_{\widetilde{T}}$-labels of any $u,v \in V(\widetilde{T})$, one can easily determine if $\pi(s, u, \widetilde{T})$ is a prefix of $\pi(s, v, \widetilde{T})$, and thus if $u$ is an $\widetilde{T}$-ancestor of $v$. As compressed paths require $O(\log^2 n)$ bits, each $\LCALabel_{\widetilde{T}}$-label consists of $O(\log^2 n) = \widetilde{O}(1)$ bits.
	
	We now compute the $\LCALabel_{\widetilde{T}}$-labels. The vertices of $C_0$ already hold them, as they are equal to their $\LCALabel_T$-labels. It remains to compute them for the vertices of $C_1 \cup \cdots \cup C_k$. This is done in three steps, as follows.
	
	\textbf{Step A: Heavy-Light Decomposition of $\widetilde{T}$.}
	Our first task is letting each $v \in C_1 \cup \cdots \cup C_k$ learn its heavy/light classification in $\widetilde{T}$. This essentially involves computing subtree sizes in $\widetilde{T}$. At first glance, one might consider computing these by simple aggregation on $\widetilde{T}$. This approach fails, the height of $\widetilde{T}$ might be $\Omega(\Delta \cdot D)$. To overcome this, we use $y$ as a coordinator to jump-start the aggregation. Recall that all component sizes $\vert C_i\vert $ are known to $y$ by the preprocessing. Therefore, $y$ can locally compute the $\widetilde{T}$-subtree sizes of all the $r_i$'s: $\vert V(\widetilde{T}_{r_i})\vert $ is the sum-of-sizes of components lying in the subtree of $C_i$ in the component tree $\widetilde{CT}$. Then, $y$ sends $\vert V(\widetilde{T}_{r_i})\vert $ to each child $y_i$, and this is propagated down $T_{y_i}$ (in parallel), so that each $r_i$ learns its $\widetilde{T}$-subtree size. Each $r_i$ passes this information also to its $\widetilde{T}$-parent $p_i$. For any vertex $v \in V \setminus \{y\}$, define
	\[
	\alpha_v =
	\begin{cases}
		\text{if $v = p_i$:} & \vert V(\widetilde{T}_{r_i})\vert  + 1, \\
		\text{otherwise:} & 1.
	\end{cases}
	\]
	Then by this point, every vertex $v \in C_1 \cup \cdots \cup C_k$ knows its corresponding value $\alpha_v$. For $i = 1, \dots, k$, let $\widetilde{T}^{(i)}$ be the tree induced on $C_i$ by $\widetilde{T}$, where the parents in $\widetilde{T}^{(i)}$ are the same as in $\widetilde{T}$. Equivalently, $\widetilde{T}^{(i)}$ is the tree obtained by rerooting $T_{y_i}$ at the vertex $r_i$. Each of its leaves is either an original $\widetilde{T}$-leaf or a $p_j$ vertex for some $j$. The crux is that for each $v \in C_i$ it holds that $\vert V(\widetilde{T}_{v})\vert  = \sum_{u \in \widetilde{T}^{(i)}_v} \alpha_u$. That is, the $\widetilde{T}$-subtree size of $v$ is equal to the sum-of-$\alpha$'s in its $\widetilde{T}^{(i)}$-subtree. By executing bottom-up sum-aggregation of the $\alpha_v$'s in each of the trees $\widetilde{T}^{(i)}$ in parallel, each $v \in C_1 \cup \cdots \cup C_k$ learns its $\widetilde{T}$-subtree size. Each such vertex $v$ then passes its $\widetilde{T}$-subtree size to its parent, enabling the parents to classify their children into light or heavy in $\widetilde{T}$, and inform them of their classification.
	
	\textbf{Step B: Computing Compressed $\widetilde{T}$-Paths Inside the $C_i$'s}.
	In this step, each $v \in C_i$, $i \neq 0$, learns the compressed path $\pi^* (r_i, v, \widetilde{T}_{r_i})$. The main observation is that if a vertex is given the compressed path of its parent, it can easily deduce its own compressed path (as it know its own heavy/light classification). Therefore, the required compressed paths can be computed in a top-down fashion on each $\widetilde{T}^{(i)}$ (in parallel).
	
	\textbf{Step C: Obtaining The $\LCALabel_{\widetilde{T}}$-labels.}
	For $j = 1, \dots, k$ define
	\[
	\pi_j = 
	\begin{cases}
		\text{if $p_j \in C_i$ with $i \neq 0$:} & \pi^*(r_i, p_j, \widetilde{T}_{r_i}), \\
		\text{if $p_j \in C_0$:} & \pi^*(s, p_j, T) = \LCALabel_T (p_j).
	\end{cases}
	\]
	Observe that by Step B, the $p_j$'s know their corresponding $\pi_j$'s. To send this information to $y$, each $p_j$ sends $\pi_j$ to $r_j$, and the messages are then forwarded upwards on each $T_{y_j}$ (in parallel), along with the heavy/light classification of the $r_j$'s. Using the information of $\{\pi_j\}_j$, the component tree $\widetilde{CT}$ and the heavy/light classifications of the $r_j$'s, $y$ can locally compute $\LCALabel_{\widetilde{T}} (r_i)$ for each $r_i$, and send this label to the corresponding child $y_i$. Then, $\LCALabel_{\widetilde{T}} (r_i)$ is broadcasted on each $T_{y_i}$. Finally, $\LCALabel_{\widetilde{T}} (v)$ can be locally deduced in $v$ from the information in $\LCALabel_{\widetilde{T}} (r_i)$ and $\pi^* (v, \widetilde{T}_{r_i})$, where the latter is known to $v$ by Step B.
\end{proof}
}

\paragraph{Step 3: Computing Sketches w.r.t.\ $\widetilde{G}$ and $\widetilde{T}$.}
We define new extended edge identifiers for edges of $\widetilde{G} = G \setminus \{y\}$, based on its spanning tree $\widetilde{T}$. For an edge $e = (u,v)$ of $G \setminus \{y\}$, let
\begin{gather*}
	\EID_{\widetilde{T}}(e) = 
	[\UID(e), \ID(u), \ID(v), \LCALabel_{\widetilde{T}}(u), \LCALabel_{\widetilde{T}}(v)].
\end{gather*}
Next, for every vertex $v \in V \setminus \{y\}$ we define $\Sketch_{\widetilde{G}} (v)$ exactly as $\Sketch_{G} (v)$, only ignoring edges incident to $y$ in the sampling, and using the $\EID_{\widetilde{T}}$ identifiers for the edges.
Computing these new sketches requires $\widetilde{O}(1)$ rounds of communication, in which every $v \in C_1 \cup \cdots \cup C_k$ sends $\LCALabel_{\widetilde{T}}(v)$ to all its $\widetilde{G}$-neighbors. As the $T$- and $\widetilde{T}$-ancestry labels coincide on the vertices of $C_0$, every vertex $v \in V \setminus \{y\}$ can now determine $\EID_{\widetilde{T}} (e)$ for every $\widetilde{G}$-edge $e$ incident to $v$, and use the random seed $\mathcal{S}_h$ to compute $\Sketch_{\widetilde{G}} (v)$.

Our next goal is letting each vertex $x \in C_1 \cup \cdots \cup C_k$ learn the $\widetilde{G}$-sketch of its $\widetilde{T}$-subtree (not $T$-subtree), namely $\Sketch_{\widetilde{G}} (V(\widetilde{T}_x)) = \bigoplus_{v \in \widetilde{T}_x} \Sketch_{\widetilde{G}} (v)$.
This is done by using $y$ as a coordinator, similarly to the $\widetilde{T}$-subtree sum computation in the proof of \Cref{cl:new-ancestry-labels}.
We start by XOR-aggregation of the $\widetilde{G}$-sketches upwards on each $T_{y_i}$ (in parallel), which takes $\widetilde{O}(D)$ rounds, and results in each $y_i$ learning  $\Sketch_{\widetilde{G}} (C_i)$. Within $\widetilde{O}(1)$ rounds, these component $\widetilde{G}$-sketches are passed to $y$ from its children. $y$ can now locally compute the $\widetilde{T}$-subtree $\widetilde{G}$-sketch of each $r_i$ as follows:
$
\Sketch_{\widetilde{G}} (V(\widetilde{T}_{r_i})) = \oplus_{j \in J(i)} \Sketch_{\widetilde{G}} (C_j)
$,
where $J(i)$ is the set of all indices $j$ such that $C_j$ is the subtree of $C_i$ in the component tree $\widetilde{CT}$. Then, $y$ sends each of its children $y_i$ the $\widetilde{T}$-subtree sketch of $r_i$, and which is propagated down on each $T_{y_i}$ (in parallel) so that each $r_i$ learns $\Sketch_{\widetilde{G}} (V(\widetilde{T}_{r_i}))$. The $r_i$'s then send this information to their $\widetilde{T}$-parent, which are the $p_i$'s. For each vertex $v$ of $\widetilde{T}$, let
$$
\beta_v =
\begin{cases}
	\text{if $v = p_j$:} & \Sketch_{\widetilde{G}} (v) + \Sketch_{\widetilde{G}} (V(\widetilde{T}_{r_j})) \\
	\text{otherwise:} & \Sketch_{\widetilde{G}} (v)
\end{cases}
$$
Then by this point, every $v \in C_1 \cup \cdots \cup C_k$ know its $\beta_v$ value.
For $i = 1, \dots, k$, let $\widetilde{T}^{(i)}$ be the tree induced on $C_i$ by $\widetilde{T}$, where the parents in $\widetilde{T}^{(i)}$ are the same as in $\widetilde{T}$. Equivalently, $\widetilde{T}^{(i)}$ is the tree obtained by rerooting $T_{y_i}$ at the vertex $r_i$. Each of its leaves is either an original $\widetilde{T}$-leaf or a $p_j$ vertex for some $j$. The crux is that for each $x \in C_i$ it holds that $\Sketch_{\widetilde{G}} (V(\widetilde{T}_x)) = \oplus_{v \in \widetilde{T}^{(i)}_{x}} \beta_v$. That is, the $\widetilde{T}$-subtree $\widetilde{G}$-sketch of $x$ is equal to the sum-of-$\beta$'s in its $\widetilde{T}^{(i)}$-subtree.
Hence, we complete the computation in this step by executing bottom-up XOR-aggregation of the $\beta_v$ values in each of the trees $\widetilde{T}^{(i)}$ in parallel.

Finally, as $\Sketch_{\widetilde{G}} (V\setminus \{y\})$ is just the all-$0$'s string (by \Cref{obs:0-sketch}), it is also known to each $x \in C_1 \cup \cdots \cup C_k$.

\paragraph{Step 4: Local Bor\r{u}vka Simulation In $G \setminus \{x,y\}$.}
This entire step is executed by local computation in which each $x \in C_1 \cup \cdots \cup C_k$ determines whether it is a cut vertex in $G \setminus \{y\}$, or equivalently if $xy$ is a cut pair in $G$. This is done by locally simulating Bor\r{u}vka's algorithms using the sketches of the components of $\widetilde{T} \setminus \{x\}$ (which are known to $x$ by Step 3) in an identical manner to the last step of the (single) cut vertices detection algorithm of Section \ref{sec:single-cut}, replacing $G$ and $T$ there with $G \setminus \{y\}$ and $\widetilde{T}$. We note that the new ancestry labels, extended identifiers and sketches, \emph{computed w.r.t.\ $\widetilde{T}$}, are important for this simulation to follow through exactly as in Section \ref{sec:single-cut}. This completes the proof of Lemma \ref{lem:Ay}.

We conclude this section by describing the scheduling of the algorithms $\{A_y\}_{y \in V}$: 

\begin{lemma}
	The collection of algorithms $\{\cA_y\}_{y \in V}$ can be executed simultaneously within $\widetilde{O}(D)$ rounds, w.h.p.
\end{lemma}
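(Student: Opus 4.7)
The plan is to apply the random-delay scheduling theorem (Theorem \ref{thm:delay}) to the family $\{\cA_y\}_{y \in V}$. Lemma \ref{lem:Ay} already gives dilation $\dilation = \widetilde{O}(D)$ per algorithm, so the only real task is to bound the total congestion $\congestion$ on any single edge across the whole collection.

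Fix an edge $e = (u,w) \in E$. An algorithm $\cA_y$ is allowed to send messages over $e$ only if $e$ is incident to $V(T_y)$, i.e.\ only if $u \in V(T_y)$ or $w \in V(T_y)$. The condition $u \in V(T_y)$ is equivalent to $y$ being an ancestor of $u$ in the BFS tree $T$, and similarly for $w$. Since $T$ is a BFS tree of the $D$-diameter graph $G$, its depth is at most $D$, so each of $u,w$ has at most $D+1$ ancestors in $T$. Therefore the number of indices $y$ for which $\cA_y$ ever uses edge $e$ is at most $2(D+1) = O(D)$. Combined with the per-algorithm congestion bound of $\widetilde{O}(1)$ from Lemma \ref{lem:Ay}, the total number of messages crossing $e$ over all algorithms is $\congestion = O(D) \cdot \widetilde{O}(1) = \widetilde{O}(D)$.

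Plugging $\dilation = \widetilde{O}(D)$ and $\congestion = \widetilde{O}(D)$ into Theorem \ref{thm:delay} yields a randomized schedule that, w.h.p., executes all of $\{\cA_y\}_{y \in V}$ in $\widetilde{O}(\dilation + \congestion) = \widetilde{O}(D)$ rounds, completing the proof. The only subtlety worth double-checking is that the preprocessing phase (computing the heavy-light decomposition, the sketches $\Sketch_G(\cdot)$, and the spanning trees $\widetilde{T}(v)$) is executed once as a common prefix in $\widetilde{O}(D)$ rounds via the single-cut algorithm of \Cref{sec:single-cut}, so it does not enter the scheduling analysis above; I do not expect any further obstacle.
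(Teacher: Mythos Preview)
Your proof is correct and essentially identical to the paper's argument: both bound the congestion on an edge $e=(u,w)$ by observing that $\cA_y$ uses $e$ only when $y$ is an ancestor of $u$ or $w$, giving $O(D)$ relevant algorithms and hence total congestion $\widetilde{O}(D)$, then invoke Theorem~\ref{thm:delay}. The paper phrases the ancestor condition as $y \in \pi(s,u,T) \cup \pi(s,w,T)$, but this is the same thing.
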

\begin{proof}
	The key observation is that every edge $e$ participates in $O(D)$ algorithms. Specifically, since each algorithm $\cA_y$ exchanges messages only on edges incident to $V(T_y)$, we get that the algorithms using $e=(u,v)$ are exactly $\{\cA_y \mid y \in \pi(s,u,T) \cup \pi(s,v,T) \}$. Therefore, the total number of messages sent through $e = (u,v)$ in the collection of $n$ algorithms $\{A_y\}_{y \in V}$ is at most $\widetilde{O}(1)\cdot (\vert \pi(s,u, T)\vert +\vert \pi(s,v, T)\vert )=\widetilde{O}(D)$. The result follows by employing Theorem \ref{thm:delay} with congestion and dilation bounds of $\widetilde{O}(D)$.
\end{proof}

\subsection{Implementation Details for \Cref{sec:depend}}\label{sec:depend-imp-details}
\APPENDNEWANCESTRYLABELS

\section{Independent Cut Pairs} \label{sec:indep}

We now turn to consider the case where the cut pair $xy$ is independent, i.e., $x,y$ have no ancestor/descendant relations. Throughout this section, for every vertex $x \in V$, let $V_x=V(T_x)\setminus \{x\}$. Recall that we assume that there is no single cut vertex in the graph. Our algorithm is based on the introduced notion of $x$-\emph{connectivity trees}, $\widehat{T}_x$, computed locally at each vertex $x$. Let $\mathcal{C}_x=\{C_1, \ldots, C_k\}$ denote the maximal connected components in the induced graph $G[V_x]$. 
For each $C \in \mathcal{C}_x$, the tree $\widehat{T}_x$ contains a path $\pi_x(s,C)=\pi(s,u_C) \circ (u_C,v_C)$, where $(u_C, v_C)$ is a $G$-edge such that $v_C \in C$, and $x \notin \pi(s,u_C)$. Therefore, $\widehat{T}_x$ encodes the connectivity of $s$ to $V_x$ in the graph $G \setminus \{x\}$. For every $v \in V_x$, let $C_{x,v}$ denote the component containing $v$ in $\mathcal{C}_x$.  When $v=x_h$, we denote $H_x=C_{x,x_h}$ and call it the \emph{heavy component} of $x$. See illustration in \Cref{fig:connectivity-tree}.
We next describe the computation of these $\widehat{T}_x$ trees, and later on show how they guide the identification of independent cut pairs.

Throughout, we assume that the vertices hold all the preprocessing information as in \Cref{sec:depend}, and that the ID of each vertex $v$ contains also its compressed-path $\LCALabel_T(v) = \pi^*(s,v)$.

\begin{figure*}
	\begin{center}
		\includegraphics[height=5.6cm]{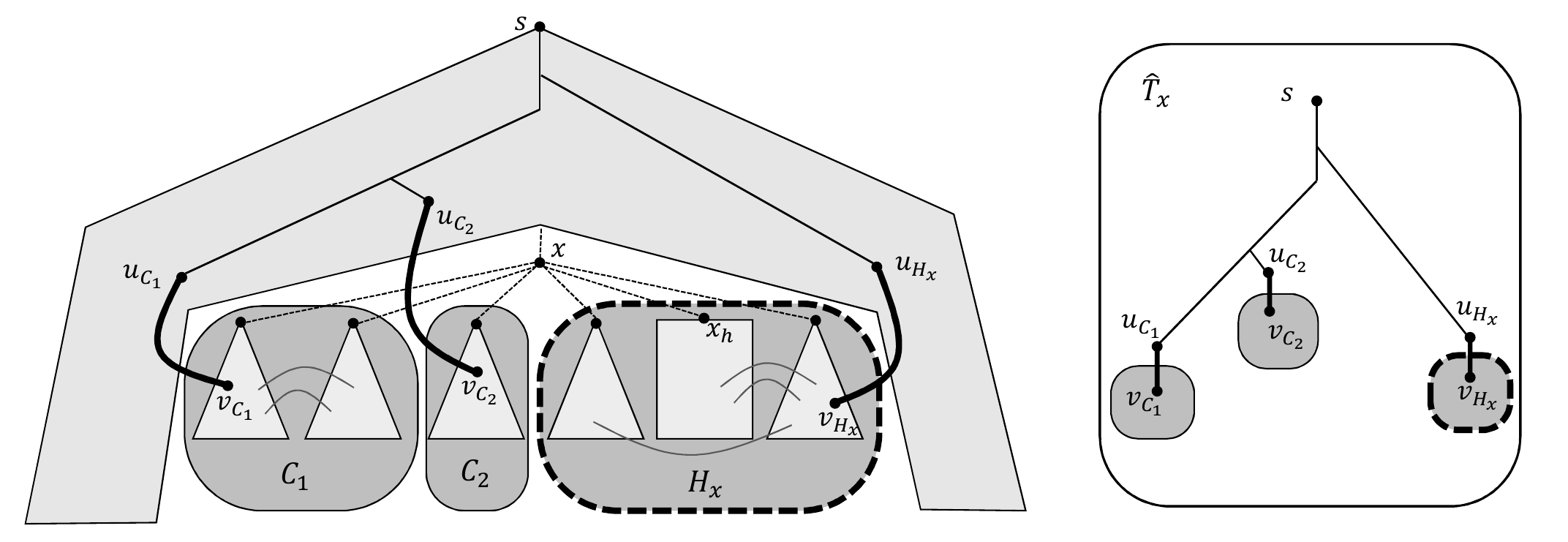}
		\caption{\sf Left: Illustration of $\mathcal{C}_x$, the connected components in $G[V_x]$. Each component is a union of subtrees rooted at children of $x$. The heavy component $H_x$ is the one containing $x_h$. For each component $C \in \mathcal{C}_x$ the corresponding edge $(u_C, v_C)$ appears as a thick solid curve. The tree paths to the $u_C$ vertices, all lying outside $T_x$, are shown by solid lines.
		Right: The corresponding tree $\mathcal{T}_x$, consisting of the paths $\pi_x (s, C) = \pi(s, u_C) \circ (u_C, v_C)$ for each $C \in \mathcal{C}_x$.
		\label{fig:connectivity-tree} 
		}
	\end{center}
\end{figure*}

\subsection{Computing $x$-Connectivity Trees}\label{sec:connectivity-trees}
The computation of connectivity trees has two main steps, both based on bottom-up aggregation of certain graph sketches over the BFS tree $T$. The purpose of the first step is letting every $x \in V$ determine the connected components $\mathcal{C}_x$ in $G[V_x]$. Each such component $C$ is identified by its component-ID, which is defined as the largest vertex ID among all the IDs of $T$-children of $x$ inside $C$. Additionally, each $u \in V$ learns the component-ID of its component $C_{x,u} \in \mathcal{C}_x$, for all of its ancestors $x \in \pi(s,u)$. The second step aggregates a special form of graph sketches, providing $x$ with the path information required to locally compute its connectivity tree $\widehat{T}_x$.

\paragraph{Step 1: Computing Connectivity in $G[V_x]$.} 
For ease of notation, denote by $d_x=\depth_T(x)$ the depth of vertex $x$ in $T$. We say that an edge $e=(u,v)$ has depth $d$ if $\depth_T (\LCA(u,v))=d$. This step is based on aggregating along $T$ the information of $D$ types of graph sketches, one for every depth $1 \leq d \leq D$. The $d^{th}$ sketch type $\Sketch^d_{G}(\cdot)$ will be restricted to sampling only edges of depth \emph{at least} $d$. This is helpful by the following observation: in order to locally simulate the connectivity Bor\r{u}vka algorithm in $G[V_x]$ at every $x$, it is required for $x$ to learn $\Sketch_{G[V_x]}(V(T_{x'}))$ for each $T$-child $x'$ of $x$. Since the edges of $G[V_x]$ can be identified as $G$-edges in $V_x \times (V \setminus \{x\})$ of depth at least $d_x$, $x$ can easily transform $\Sketch^{d_x}_{G}(V(T_{x'}))$ to the required sketch by eliminating the $x$-incident edges from it.
The following lemma summarizes the output of this step, and the detailed implementation appears in its proof.

\begin{lemma}\label{lem:connecitivity-in-Gxs}
	There is a randomized $\widetilde{O}(D)$-round algorithm that computes connectivity in $G[V_x]$ for every $x \in V$ simultaneously. At the end of its execution, w.h.p.\ all of the following hold:
	\begin{enumerate}
		\item Every $x \in V$ knows the component-ID of $C_{x,x'}$ for each of its children $x'$.
		\item Every $u \in V$ knows the component-ID of $C_{x,u}$ for each of its ancestors $x \in \pi(s,u)$.
		\item Every $v \in V$ knows the information of item 2 for each of its neighbors. That is, the component-ID of $C_{x,u}$ for every $u$ adjacent to $v$ and $x \in \pi(s,u)$.
	\end{enumerate}

\end{lemma}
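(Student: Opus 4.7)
The plan is to let every vertex $x$ locally simulate Bor\r{u}vka's algorithm inside $G[V_x]$, in analogy with \Cref{sec:single-cut}. The enabling ingredient is a family of depth-restricted sketches: for each depth $d \in [1, D]$, let $\Sketch^d_G(v)$ be defined exactly as $\Sketch_G(v)$, but sampling only from incident edges $e$ with $\depth_T(\LCA(e)) \geq d$. Since each vertex $v$ knows $\LCALabel_T(u)$ for every neighbor $u$ (one round of exchange suffices), it can locally compute $\Sketch^d_G(v)$ for every $d$, and we take $O(\log n)$ independent copies to support the $O(\log n)$ phases of Bor\r{u}vka. For each $d$, we bottom-up XOR-aggregate $\Sketch^d_G(V(T_v)) = \bigoplus_{u \in T_v} \Sketch^d_G(u)$ along $T$; this is $D$ aggregations, each of dilation $O(D)$ and $\widetilde{O}(1)$ congestion per edge per aggregation. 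Since each edge of $T$ participates in at most $D$ of them, \Cref{thm:delay} schedules all aggregations within $\widetilde{O}(D)$ rounds.

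The crucial geometric property is that for $x$ at depth $d_x$ and a child $x_i$ of $x$, the sketch $\Sketch^{d_x}_G(V(T_{x_i}))$ captures exactly the $G$-edges from $V(T_{x_i})$ to $V(T_x)\setminus V(T_{x_i})$: for any $u \in V(T_{x_i})$ and any $w \notin V(T_{x_i})$ with $(u,w) \in E$, the constraint $\depth_T(\LCA(u,w)) \geq d_x$ forces $\LCA(u,w) = x$, since otherwise the LCA would be a strict descendant of $x$ on the $x$-to-$u$ path, i.e.\ inside $T_{x_i}$, dragging $w$ into $T_{x_i}$. By canceling the $x$-incident edges (whose extended identifiers $x$ knows from the preprocessing) via \Cref{lem:cancel-sketch-property}, $x$ converts these sketches into $\Sketch_{G[V_x]}(V(T_{x_i}))$ for every child $x_i$. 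It then locally runs Bor\r{u}vka on the supergraph whose super-nodes are the $V(T_{x_i})$'s (each internally $T$-connected inside $G[V_x]$), using the $T$-ancestry labels in the extended IDs of revealed outgoing edges to identify which super-nodes get merged. This yields the partition $\mathcal{C}_x$; assigning each $C \in \mathcal{C}_x$ the maximum ID among its child-vertices $x_i$ gives item 1.

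For item 2 we propagate top-down on $T$: since every $T_{x_i}$ lies entirely in a single component of $\mathcal{C}_x$, the ID of $C_{x,u}$ for any descendant $u$ of $x_i$ equals the ID of $C_{x, x_i}$. Thus each vertex inherits its parent's ID-sequence and appends the single ID newly added at its own depth (which its parent has learned in item 1), and pipelining delivers the length-$d_u$ sequence to every $u$ in $\widetilde{O}(D)$ rounds. For item 3, each vertex transmits its $O(D)$-length sequence to all its $G$-neighbors in parallel, in $\widetilde{O}(D)$ rounds. The main obstacle is avoiding the $O(D^2)$ blow-up arising from having $D$ distinct sketch aggregations; this is precisely what the congestion-plus-dilation bound of \Cref{thm:delay} is designed to absorb. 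A secondary subtlety is verifying that the depth threshold $d_x$ matches the subtree boundary of $T_x$ so that cancellation of $x$-incident edges yields the correct $G[V_x]$-sketch, which is exactly what the LCA analysis above establishes.
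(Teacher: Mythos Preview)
Your proof is correct and follows essentially the same approach as the paper: define depth-restricted sketches $\Sketch^d_G(\cdot)$, aggregate all $D$ of them over subtrees, cancel $x$'s own edges to obtain $\Sketch_{G[V_x]}(V(T_{x_i}))$ for each child $x_i$, run Bor\r{u}vka locally at $x$, then downcast component-IDs and exchange them with neighbors. The only cosmetic differences are that the paper pipelines the $D$ aggregations directly (in increasing order of $d$) rather than invoking \Cref{thm:delay}, and has each vertex learn its full tree path $\pi(s,u)$ (rather than relying on the compressed $\LCALabel_T$-labels) to classify incident edges by depth; neither affects the argument.
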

\def\APPENDCONDEPTHSKETCH{
    \begin{proof}
	This is implemented as follows. First, within $O(D)$ rounds, we let each vertex learn its tree path $\pi(s,u)$, and exchange this information with all its neighbors. Using this information, $u$ can locally compute for every $1 \leq d \leq D$ the set of its incident edges with depth at least $d$, that is
	$
	E_d(u) = \{(u,v) \in E \mid \depth_T (\LCA(u,v))\geq d\}
	$.
	Then, the source $s$ locally samples a random seed and broadcasts it to all the vertices. Using this seed, each $u$ computes the $D$ sketches $\Sketch^1_{G}(u),\ldots, \Sketch^{D}_{G}(u)$, where the edges in the $d$'th sketch $\Sketch^d_{G}(u)$ are based on using the seed to implement the sampling of the edges only from $E_d(u)$. 
	Next, the algorithm aggregates these $D$ sketch types over subtrees. This can be done in a pipeline manner from the leaf vertices up to the root, in increasing order of the depth $d$ of the sketches. At the end of this computation, each vertex $x$ of depth $d_x$ holds the the sketch $\Sketch^{d_x}_G(V(T_{x'}))$ for each of its $T$-child $x'$. The final sketch $\Sketch_{G[V_x]}(V(T_{x'}))$ is obtained locally at $x$ by canceling-out its own edges to $V(T_{x'})$ from $\Sketch^{d_x}(V(T_{x'}))$, using Lemma \ref{lem:cancel-sketch-property}.
	At this point, each vertex $x$ has all the required sketch information to locally simulate the Bor\r{u}vka algorithm in $G[V_x]$. As a result of this computation, w.h.p.\ $x$ holds the component-ID of $C_{x,x'} \in \mathcal{C}_x$ for each $T$-child $x'$ of $x$, establishing item 1. This information is then propagated down the tree $T$ in a pipeline manner, where each vertex $u$ eventually learns the component-ID of $C_{x,u}$ for each of its ancestors $x \in \pi(s,u)$, establishing item 2. By exchanging the information of item 2 between neighbors, we obtain also item 3.  It is easy to see that this entire computation takes $\widetilde{O}(D)$ rounds.
\end{proof}
}\APPENDCONDEPTHSKETCH

\paragraph{Step 2: Computing $x$-Connectivity Trees $\widehat{T}_x$ via Path-Sketches.} 
Our next goal is to provide each vertex $x$ with the path information $\pi_x(s,C)$, for every component $C \in \mathcal{C}_x$. Recall that $\pi_x(s,C) = \pi(s,u_C) \circ (u_C, v_C)$, where $v_C \in C$ and $x \notin \pi(s,u_C)$. Such a path must exists as we assume $x$ is not a cut vertex. Towards this goal, we define \emph{path-sketches} $\Sketch_{G}^P(\cdot)$. These are defined exactly as the regular sketches, only using new edge identifiers $\EID_T^P(e=(u,v))$ obtained by augmenting $\EID_T (e)$ of Eq.\ \ref{eq:extend-ID} with the tree paths of the endpoints $u,v$. Formally,
\begin{align}\label{eq:extend-ID-path}
	\EID^P_T(e = (u,v))= [ \UID(e), \ID(u), \ID(v),
	 \LCALabel_T(u), \LCALabel_T(v),  
	\pi(s,u), \pi(s,v)]~. 
\end{align}
Note that in contrast to $\EID_T (e)$ of \Cref{eq:extend-ID} which contains $\widetilde{O}(1)$ bits, $\EID_T^P (e)$ of \Cref{eq:extend-ID-path} has $\widetilde{O}(D)$ bits. Therefore, a path-sketch also has $\widetilde{O}(D)$ bits. However, we can still aggregate the path-sketches on $T$ so that eventually every $x$ learns $\Sketch^{P}_G(V(T_{x'}))$ for all its children $x'$. Using the information of the previous step, $x$ can now locally compute the path-sketches of each component $C \in \mathcal{C}_x$, and then obtain $\pi_x (s,C)$ by extracting $\EID_T^P(e)$ of an outgoing edge from $C$. The proof of the following lemma explains this process in further detail.

\begin{lemma}\label{lem:path-sketch}
	There is a randomized $\widetilde{O}(D)$-round algorithm letting each vertex $x \in V$ learn its connectivity tree $\widehat{T}_x$, w.h.p.
\end{lemma}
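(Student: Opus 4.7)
The plan is to mirror the warm-up algorithm of \Cref{sec:single-cut}, but with path-sketches in place of ordinary sketches, and then let each vertex $x$ assemble $\widehat{T}_x$ locally from the extracted outgoing edges of its components $\mathcal{C}_x$. First, each vertex $v$ locally constructs $\Sketch^P_G(v)$ using the random seeds $\mathcal{S}_{\ID},\mathcal{S}_h$ it already holds together with the extended identifiers $\EID^P_T(e)$ of its incident edges; note that forming $\EID^P_T(e)$ only requires $v$ to know its own tree path $\pi(s,v)$ and those of its neighbors, both of which were already collected in Step~1 of \Cref{sec:connectivity-trees}. Then we aggregate $\Sketch^P_G(\cdot)$ bottom-up along $T$, so that every vertex $x$ ends up holding $\Sketch^P_G(V(T_{x'}))$ for each $T$-child $x'$.

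The main obstacle is that a single path-sketch has $\widetilde{O}(D)$ bits (because each $\EID^P_T$ contains two $O(D\log n)$-bit paths), so a naive bottom-up XOR aggregation costs $\widetilde{O}(D^2)$ rounds. The plan is to pipeline the aggregation bit-by-bit: each internal vertex maintains a running XOR of the sketch-bits received so far from its children together with the corresponding bit of its own path-sketch, and forwards one bit per round to its parent in a fixed order. Because each edge carries only $O(\log n)$ bits per round but all $\widetilde{O}(D)$-many bit positions of the XORed sketch are independent, we can stream the bits up the tree in a wavefront so that the $i$-th bit reaches a vertex at depth $d$ by round $i+d$. Since $T$ has depth $O(D)$ and each sketch has $\widetilde{O}(D)$ bits, the total number of rounds is $\widetilde{O}(D)+O(D)=\widetilde{O}(D)$.

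Once this aggregation is done, the remaining work is purely local at each $x$. By Item~1 of \Cref{lem:connecitivity-in-Gxs}, $x$ already knows the partition of its $T$-children into components of $\mathcal{C}_x$; since every component $C\in\mathcal{C}_x$ is a disjoint union of subtrees $V(T_{x'})$ over the children $x'$ it contains (as $T$-subtrees rooted at children of $x$ are internally connected and lie inside $V_x$), $x$ computes $\Sketch^P_G(C)$ by XORing the relevant $\Sketch^P_G(V(T_{x'}))$. Next, $x$ needs to convert this into $\Sketch^P_{G\setminus\{x\}}(C)$; using Item~3 of \Cref{lem:connecitivity-in-Gxs}, $x$ knows the component-ID of $C_{x,u}$ for every neighbor $u\in V_x$, so it can identify exactly the edges incident to $x$ that enter $C$ and cancel them via the path-sketch analogue of \Cref{lem:cancel-sketch-property} (the cancellation lemma applies verbatim since path-sketches differ from ordinary sketches only in replacing $\EID_T$ by $\EID^P_T$, and $x$ can construct $\EID^P_T$ for its own incident edges). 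Because $x$ is not a single cut vertex by assumption, every $C\in\mathcal{C}_x$ has at least one outgoing edge in $G\setminus\{x\}$, and any such edge necessarily goes from $C$ to $V\setminus V(T_x)$. Hence \Cref{lem:sketch-property} applied to $\Sketch^P_{G\setminus\{x\},i}(C)$ yields, with constant probability per basic sketch unit, an extended identifier $\EID^P_T(e)$ for some $e=(u_C,v_C)$ with $v_C\in C$ and $u_C\notin V(T_x)$; the embedded $\pi(s,u_C)$ gives $x$ the entire path $\pi_x(s,C)=\pi(s,u_C)\circ(u_C,v_C)$. Using $L=\Theta(\log n)$ independent path-sketches and a union bound over all $x\in V$ and all components in $\mathcal{C}_x$ yields the extraction for every component w.h.p., so every $x$ locally assembles $\widehat{T}_x$, completing the proof.
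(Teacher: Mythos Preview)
Your proposal is correct and follows essentially the same approach as the paper: compute path-sketches locally, aggregate them bottom-up on $T$ via pipelining (since each sketch has $\widetilde{O}(D)$ bits), cancel $x$'s own edges, combine the resulting sketches over the children of $x$ that lie in the same $\mathcal{C}_x$-component, and extract one outgoing edge per component using $O(\log n)$ basic sketch units. The only cosmetic difference is that the paper cancels $x$'s edges from each child-subtree sketch first and then XORs over children in the same component, whereas you XOR first and then cancel (invoking Item~3 of \Cref{lem:connecitivity-in-Gxs} to identify which incident edges land in which component); by linearity these are equivalent.
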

\def\APPENDPATHSKETCH{
    \begin{proof}
		In order to compute the $\EID_T^P$-identifiers, it is enough to let each vertex $u$ learn its tree path $\pi(s,u)$ and exchange this information with all its neighbors, which can be done within $O(D)$ rounds. The source then broadcast an $\widetilde{O}(1)$-bit random seed in $\widetilde{O}(D)$ rounds, allowing each vertex $u$ to compute its path-sketch $\Sketch_{G}^P(u)$. The subtree path-sketches $\Sketch^{P}_G(V(T_u))$ for every vertex $u$ are then computed by XOR-aggregation of $\widetilde{O}(D)$-length vectors on $T$, which can be done in $\widetilde{O}(D)$ rounds via a standard pipeline. At the end of this computation, each vertex $x$ holds the path-sketch $\Sketch^{P}_G(V(T_{x'}))$ for each of its $T$-children $x'$. Using Lemma \ref{lem:cancel-sketch-property}, $x$ can locally cancel-out its own edges to obtain $\Sketch^{P}_{G \setminus \{x\}}(V(T_{x'}))$. Using the information of Lemma \ref{lem:connecitivity-in-Gxs}, $x$ can add-up the subtree sketches of its children to obtain $\Sketch^{P}_{G \setminus \{x\}}(C)$ for all $C \in \mathcal{C}_x$. Specifically, letting $N(x,C)$ denote the set of $T$-children of $x$ inside the component $C$, we have $\Sketch^{P}_{G\setminus \{x\}}(C)=\oplus_{x' \in N(x, C)}\Sketch^{P}_{G\setminus \{x\}}(V(T_{x'}))$. Using $O(\log n)$ basic sketch units of each such component sketch enables $x$ to learn, w.h.p., the $\EID_T^P$-identifier of one outgoing edge per component. As $C$ is a maximal connected component in $G[V_x]$, its outgoing edge must connect it to $V \setminus V(T_x)$. Specifically, for $C \in \mathcal{C}_x$, $x$ learns $\EID_T^P((u_C, v_C))$ where $v_C \in C$ and $u_C \in V \setminus V(T_x)$. As the tree-path $\pi(s,u_C)$ is stored in $\EID_T^P((u_C, v_C))$, the path $\pi_x (s,C)$ is easily deduced. The final tree is given by $\widehat{T}_x=\bigcup_{C \in \mathcal{C}_x} \pi_x(s,C)$.
	\end{proof}
}\APPENDPATHSKETCH

Finally, we would like each vertex $u$ to learn some representation of the path of its component $C_{x,u}$ in the tree $\widehat{T}_x$, for all its ancestors $x \in \pi(s,u)$. To this end, we define compressed representation of these paths: for $x \in V$ and $C \in \mathcal{C}_x$ the compressed path of $\pi_x (s, C)$ is $\pi_x^* (s, C) = \pi^* (s, u_C) \circ (u_C, v_C)$, consisting only of $\widetilde{O}(1)$ bits. This enables us to achieve the following.

\begin{lemma}\label{lem:comp-path-sketch}
	Given that every vertex $x \in V$ holds its connectivity tree $\widehat{T}_x$, there is an $\widetilde{O}(D)$-round algorithm such that in the end of its execution, the following hold:
	\begin{enumerate}
		\item Each $u \in V$ learns the compressed path $\pi_x^*(s, C_{x,u})$ for every $x \in \pi(s,v)$, as well as the entire path $\pi_x(s, C_{x,u})$ in case $x \in \LA(u)$.
		\item Each $v \in V$ learns all the information of item 1 for each neighboring vertex $u$.
	\end{enumerate}
\end{lemma}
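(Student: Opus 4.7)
The plan is to carry out two pipelined top-down broadcasts on $T$, followed by a neighbor exchange, all running in $\widetilde{O}(D)$ rounds. The key observation I would exploit is that since every subtree $V(T_{x'})$ of a $T$-child $x'$ of $x$ is internally connected inside $G[V_x]$, every $u \in V(T_{x'})$ lies in the same component, i.e.\ $C_{x,u}=C_{x,x'}$. Hence both $\pi_x^*(s,C_{x,u})$ and $\pi_x(s,C_{x,u})$ depend on $u$ only through the child $x'$ of $x$ lying on $\pi(x,u)$, and since $x$ already holds $\widehat{T}_x$, it can locally prepare one such token per $T$-child.

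For the first part of item 1, I would let each $x$ disseminate the $\widetilde{O}(1)$-bit token $\pi_x^*(s,C_{x,x'})$ down the subtree $T_{x'}$ for every $T$-child $x'$. Because the same token serves every descendant of $x'$, this collapses to a standard pipelined top-down broadcast on $T$: scheduling the tokens in order of their source's depth, at most one token traverses each edge per round, and each vertex of depth $d$ receives all $d = O(D)$ of its ancestor tokens within $\widetilde{O}(D)$ rounds.

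For the second part of item 1, I would perform the analogous broadcast but restricted to \emph{light} tree-edges: $x$ hands $\pi_x(s,C_{x,x'})$ to a child $x'$ only when $(x,x')$ is a light edge, and the receiver forwards it throughout its entire subtree. By the definition of $\LA(\cdot)$, this delivers to each $u$ exactly the full paths $\{\pi_x(s,C_{x,u}) : x \in \LA(u)\}$. Each vertex collects at most $|\LA(u)| = O(\log n)$ full paths of bit-length $\widetilde{O}(D)$, so every tree edge sees at most $\widetilde{O}(D)$ bits of traffic, and standard pipelining again completes this step in $\widetilde{O}(D)$ rounds. Finally, for item 2, each $v$ now holds a bundle of size $\widetilde{O}(D)$ bits representing its item-1 data, which it can transmit to every graph neighbor in parallel in $\widetilde{O}(D)$ further rounds.

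The only real obstacle I anticipate is making the two top-down pipelines share $T$-edges without congestion blowups; this is resolved by the component identity above, which guarantees that for each tree edge a single token (respectively a single full path) suffices to serve the entire downstream subtree. Consequently both broadcasts reduce to a one-token-per-round schedule on $T$ with dilation and per-edge congestion $\widetilde{O}(D)$, and correctness of the labels received at each $u$ follows directly from $C_{x,u} = C_{x,x'}$ for $x' = \pi(x,u)\cap \{\text{children of }x\}$.
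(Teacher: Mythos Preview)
Your proposal is correct and follows essentially the same approach as the paper: both exploit that $C_{x,u}=C_{x,x'}$ so one token per child suffices, send the $\widetilde{O}(1)$-bit compressed paths down to all children, send the $\widetilde{O}(D)$-bit full paths only through light children (reaching exactly the $u$ with $x\in\LA(u)$), bound the per-edge traffic by $O(D)\cdot\widetilde{O}(1)+O(\log n)\cdot\widetilde{O}(D)=\widetilde{O}(D)$ bits, and finish with a neighbor exchange for item~2. The only cosmetic difference is that the paper merges the two downcasts into one (sending both pieces to light children and only the compressed piece to the heavy child), whereas you run them as two separate pipelines.
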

\def\APPENDCOMPPATHSKETCH{
\begin{proof}
	We let every vertex $x$ send the full path $\pi_x(s,C_{x,x'})$ and its compressed version $\pi_x^*(s,C_{x,x'})$ to each light child $x'$ of $x$, and only the compressed path $\pi^*_x(s,H_x)$ to its heavy child $x_h$. This information is propagated towards the leaf vertices of $T_x$. Since each vertex is required to receive 
	$\widetilde{O}(D)$ bits of information from each of its \emph{light} ancestors, as well as $\widetilde{O}(1)$ bits from each of its heavy ancestors, overall it is required to receive $\widetilde{O}(D)$ bits. This can be done in $\widetilde{O}(D)$ rounds, by standard pipeline techniques. This establishes item 1. For item 2, we simply exchange the  $\widetilde{O}(D)$-bit information of item 1 between neighboring nodes, within additional  $\widetilde{O}(D)$ rounds.
\end{proof}
}\APPENDCOMPPATHSKETCH

\subsection{Component Classification Based on Sensitivity} 
We use the structure of the $x$-connectivity tree $\widehat{T}_x$ to classify its potential independent cut-mates. The first immediate observation is that in case $\hat{T}_x$ is non-empty, any independent cut-mate $y$ of $x$ (if exists) must be in $\widehat{T}_x$. We further examine the \emph{sensitivity} of each component in $\mathcal{C}_x$ to different such potential cut-mates $y$, as described in the following definition.

\begin{definition}[Sensitivity Notions of $\mathcal{C}_x$ Components]\label{def:sensitive-components}
	Fix an independent pair $x,y$. A component $C \in \mathcal{C}_{x}$ is called \emph{$y$-sensitive} if $y \in \pi_x(s,C)$.
	
	We further classify the $y$-sensitive components into two types:
	\begin{itemize}
		\item $C \in \mathcal{C}_{x}$ is \emph{pseudo-$y$-senstive} if $\pi_x(s,C)$ contains a $T$-edge $(y,y')$ such that $x \notin \pi_y(s,C_{y,y'})$. That is, the component $C_{y,y'}$ of $y'$ in $\mathcal{C}_y$ is not $x$-sensitive.
		\item $C \in \mathcal{C}_{x}$ is \emph{fully-$y$-senstive} if it is $y$-sensitive but not pseudo-$y$-sensitive. Equivalently, if either $y = u_C$, or $\pi_x (s,C)$ contains a $T$-edge $(y,y')$ such that $x \in \pi_y(s,C_{y,y'})$.
	\end{itemize}
	We denote by $\mathcal{S}(x,y)$, $\mathcal{PS}(x,y)$ and $\mathcal{FS}(x,y)$ the collections of components in $\mathcal{C}_x$ that are $y$-sensitive, pseudo-$y$-sensitive and fully-$y$-sensitive, respectively\footnote{Notice that these notations are not symmetric in $x,y$, e.g.\ $\mathcal{S}(x,y)$ is different than $\mathcal{S}(y,x)$.}.
\end{definition}

\begin{lemma}\label{lem:non-or-pseudo-sensitive}
	Let $x,y$ be an independent pair, and let $C \in \mathcal{C}_x$. If $C$ is not fully-$y$-sensitive (i.e.\ $C \in \mathcal{C}_x \setminus \mathcal{FS}(x,y)$), then $C$ is connected to $s$ by a path in $G \setminus \{x,y\}$ avoiding all components in $\mathcal{FS}(x,y) \cup \mathcal{FS}(y,x)$.
\end{lemma}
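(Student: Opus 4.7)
My plan is to split the hypothesis $C \in \mathcal{C}_x \setminus \mathcal{FS}(x,y)$ into its two sub-cases from Definition \ref{def:sensitive-components}: either $C$ is not $y$-sensitive at all, or $C$ is pseudo-$y$-sensitive.

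In the first case I would simply take the path $\pi_x(s,C) = \pi(s,u_C) \circ (u_C,v_C)$ itself as witness. By construction of $\widehat{T}_x$ we have $x \notin \pi(s,u_C)$ and $v_C \in V_x$, so $x$ is off the path; by the hypothesis of this case, $y \notin \pi_x(s,C)$, so the path already lies in $G \setminus \{x,y\}$. To see that it avoids the fully-sensitive components, note that vertices of components in $\mathcal{FS}(x,y)$ lie in $V_x$, while $\pi_x(s,C)$ meets $V(T_x)$ only at its endpoint $v_C \in C$, and $C \notin \mathcal{FS}(x,y)$ by assumption. Vertices of components in $\mathcal{FS}(y,x)$ lie in $V_y$, and from $y \notin \pi(s,u_C)$ it follows that no ancestor of $u_C$ is a descendant of $y$, so $\pi(s,u_C) \cap V_y = \emptyset$; the only remaining vertex $v_C$ lies in $V_x$, which is disjoint from $V_y$ since $x,y$ are independent.

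In the pseudo-$y$-sensitive case there is a $T$-edge $(y,y') \in \pi_x(s,C)$ with $C' := C_{y,y'} \in \mathcal{C}_y$ not $x$-sensitive (in particular, $C' \in \mathcal{C}_y \setminus \mathcal{FS}(y,x)$). I would apply the first case with the roles of $x$ and $y$ swapped and with $C'$ in place of $C$, producing a path $P'$ from $s$ into $C'$ in $G \setminus \{x,y\}$ avoiding $\mathcal{FS}(x,y) \cup \mathcal{FS}(y,x)$. Then I would extend $P'$ first inside $C'$ (which is connected in $G[V_y] \subseteq G \setminus \{x,y\}$) to reach $y'$, then along the tree subpath $\pi(y',u_C) \subseteq \pi(s,u_C)$, and finally across the edge $(u_C,v_C)$ to land on $v_C \in C$. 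All vertices on $\pi(y',u_C)$ are proper descendants of $y'$, so they lie in $V_y$ and are $\neq x$ (since $x \notin \pi(s,u_C)$); moreover, every edge of this subpath is a tree edge with both endpoints in $V_y$, so the subpath stays inside the connected component $C'$ of $G[V_y]$. Hence the extension visits only $C' \cup \{v_C\}$, and by construction neither $C'$ (which is not even $x$-sensitive) nor $C$ (by hypothesis) is fully-sensitive.

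I don't expect any real obstacle here. The one point that requires care is the asymmetric split between ``$y$-sensitive'' and ``fully-$y$-sensitive,'' which is precisely what makes the pseudo-case reducible to the non-sensitive case on the opposite side: a pseudo-$y$-sensitive $C$ on the $x$-side hands off to a component $C'$ on the $y$-side that is not even $x$-sensitive, so the first case applies directly to $C'$ and no induction is needed.
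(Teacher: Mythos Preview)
Your proposal is correct and follows essentially the same approach as the paper's proof: both split into the non-sensitive and pseudo-sensitive cases, use $\pi_x(s,C)$ directly in the first case, and in the second case apply the first case with $x,y$ swapped to the component $C' = C_{y,y'}$ and then append the edge $(u_C,v_C)$. The only cosmetic difference is that the paper observes directly that $u_C \in C_{y,y'}$ (since $u_C$ is a $T$-descendant of $y'$) and extends $Q$ within $C'$ to end at $u_C$, whereas you route through $y'$ and then along the tree subpath $\pi(y',u_C)$; since that subpath lies in $C'$ anyway, the two constructions coincide.
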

\begin{proof}
	Consider first the case where $C$ is not $y$-sensitive: then $\pi_x (s,C)$ is the required path. The other case is when $C$ is pseudo-$y$-sensitive. Then, let $(y,y') \in \pi_x(s,C)$ be a $T$-edge such that $C_{y,y'} \in \mathcal{C}_y$ is not $x$-sensitive. By the previous case, $C_{y,y'}$ is connected to $s$ by a path in $G \setminus \{x,y\}$ avoiding all components in $\mathcal{FS}(x,y) \cup \mathcal{FS}(y,x)$. Denote this path by $Q$. Now, Observe that the last edge $(u_C, v_C)$ of $\pi_x (s,C)$ is such that $v_C \in C$ and $u_C \in C_{y,y'}$ (as $x$ and $y$ are independent). W.l.o.g., we may assume that the endpoint of $Q$ in $C_{y,y'}$ is $u_C$ (as any two vertices of $C_{y,y'}$ are connected by a path inside $C_{y,y'}$). Thus, $Q \circ (u_C, v_C)$ is the required path for $C$.
\end{proof}

The following lemma shows that each vertex $x$ can, in many cases, distinguish between full sensitivity and pseudo sensitivity of a component $C \in \mathcal{C}_x$ to the potential cut-mates in $\pi_x(s,C)$.

\begin{lemma}\label{lem:learning-strong-sens}
	There is a randomized $\widetilde{O}(D)$-round algorithm such that by the end of the execution, w.h.p.\ each $x \in V$ holds the following information:
	\begin{enumerate}
		\item $\pi^*_y(s, C_{y,y'})$ for every $T$-edge $(y,y') \in \pi_x(s, C)$ and every $C \in \mathcal{C}_x \setminus \{H_x\}$.
		\item $\pi^*_y(s, C_{y,y'})$ for every \emph{light $T$-edge} $(y,y') \in \pi_x(s, H_x)$
	\end{enumerate}
	Note that given $\pi^*_y(s, C_{y,y'})$, $x$ can check whether $x \in \pi_y(s, C_{y,y'})$. Consequently, the following holds: if $C \in \mathcal{S}(x,y)$ such that either $C \neq H_x$ or $(y,y_h) \notin \pi_x (s,C)$, then $x$ knows if $C \in \mathcal{FS}(x,y)$ or $C \in \mathcal{PS}(x,y)$.
\end{lemma}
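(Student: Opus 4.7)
My plan is to bootstrap on Lemma~\ref{lem:comp-path-sketch} and only route the information that is genuinely new to $x$. I would first observe that for every tree edge $(y,y')\in\pi(s,x)$, $x$ already knows $\pi^*_y(s,C_{y,y'})$: applying item~1 of Lemma~\ref{lem:comp-path-sketch} to $x$ and its ancestor $y$ gives $\pi^*_y(s,C_{y,x})$, and since $y'$ is connected to $x$ through the tree path inside $V(T_{y'})\subseteq V_y$, we have $C_{y,y'}=C_{y,x}$. Hence for each $C\in\mathcal{C}_x$, the only compressed paths $x$ still lacks are those attached to tree edges in $\pi(s,u_C)\setminus\pi(s,x)$, i.e.\ the edges strictly below $\LCA(x,u_C)$ on the $u_C$-branch, which I will call the \emph{novel} edges of the pair $(x,C)$. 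Symmetrically, item~2 of Lemma~\ref{lem:comp-path-sketch} applied to $v_C$ with neighbor $u_C$ delivers to $v_C$ the vector $\{\pi^*_z(s,C_{z,u_C}) : z\in\pi(s,u_C)\}$; since $u_C\in V(T_{y'})\subseteq C_{y,y'}$ for every $(y,y')\in\pi(s,u_C)$, this is exactly the collection of novel compressed paths that $x$ is missing.

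The algorithm I propose is therefore to let $v_C$ send the missing information up the tree to $x$: for each relevant pair $(x,C)$ --- every $C\in\mathcal{C}_x\setminus\{H_x\}$ for item~1, and $C=H_x$ for item~2 --- $v_C$ transmits along the path $\pi(v_C,x)$ the vector of compressed paths attached to the novel edges of $(x,C)$ permitted by the lemma (all novel edges for item~1, only the $O(\log n)$ light ones for item~2). Each such message carries $\widetilde{O}(D)$ bits in the item~1 case and $\widetilde{O}(1)$ bits in the item~2 case, and has dilation $O(D)$; I would then schedule all routes in parallel via Theorem~\ref{thm:delay}.

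The structural fact I would use to bound the per-edge congestion is that whenever $C\in\mathcal{C}_x\setminus\{H_x\}$, the representative $v_C$ must lie in the subtree of a \emph{light} child of $x$: because $V(T_{x_h})\subseteq H_x$, any $v_C\notin H_x$ falls into a non-heavy subtree. Consequently $x\in\LA(v_C)$, and since $|\LA(v)|=O(\log n)$, each vertex $v$ is the $v_C$ of at most $O(\log n)$ distinct item~1 pairs globally. For each such pair I would route the message up through the unique light edge $(x,x')$ separating $x$ from $v_C$'s light subtree, so that on any tree edge $(a,b)$ the number of item~1 routes crossing upward is controlled by $\sum_{v\in V(T_b)}|\LA(v)\cap\pi(s,a)|$; combined with a heavy-path aggregation of messages along vertical heavy segments, this is what I would invoke to argue congestion $\widetilde{O}(D)$ bits per tree edge. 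The item~2 traffic of $\widetilde{O}(1)$ bits per $x$ is absorbed in the same bound.

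The main obstacle I anticipate is precisely turning the $\widetilde{O}(D)$-average load implied by the light-ancestor bound into a worst-case per-edge bound: a naive upward routing could pile item~1 messages on edges near the root. Establishing the $\widetilde{O}(D)$ congestion through the light-gateway plus heavy-path argument is the technical heart of the proof, after which Theorem~\ref{thm:delay} yields the $\widetilde{O}(D)$-round algorithm. The final consequence stated in the lemma is then immediate: from $\pi^*_y(s,C_{y,y'})$ one reads off $\pi^*(s,u_{C_{y,y'}})$, and $x$ performs the $\LCALabel_T$-label ancestry test described in Section~\ref{sec:prelim} to decide whether $x\in\pi_y(s,C_{y,y'})$, i.e., whether $C\in\mathcal{FS}(x,y)$ or $C\in\mathcal{PS}(x,y)$.
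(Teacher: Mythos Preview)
Your high-level approach is exactly the paper's: let $v_C$ (who already holds, via Lemma~\ref{lem:comp-path-sketch}, the compressed paths $\pi^*_y(s,C_{y,y'})$ for all $T$-edges $(y,y')$ on $\pi(s,u_C)$) ship this information up the tree path $\pi(v_C,x)$, treating item~1 components and the single item~2 component $H_x$ separately. The paper likewise splits into two per-$x$ subroutines $\mathcal{L}_x$ (handling $C\in\mathcal{C}_x\setminus\{H_x\}$, with $\widetilde{O}(D)$-bit messages) and $\mathcal{H}_x$ (handling $H_x$, with $\widetilde{O}(1)$ bits), and schedules everything via Theorem~\ref{thm:delay}.

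Where you leave a gap is the congestion bound, and you are overcomplicating it. Your proposed upper bound $\sum_{v\in V(T_b)}|\LA(v)\cap\pi(s,a)|$ on the number of item~1 routes crossing a tree edge $(a,b)$ is far too loose (it equals $|V(T_b)|\cdot|\LA(b)|$, which can be $\widetilde\Theta(n)$), and the ``heavy-path aggregation'' you hint at is unnecessary. The missing one-line observation, which the paper uses, is that \emph{for a fixed $x$} the paths $\{\pi(x,v_C):C\in\mathcal{C}_x\setminus\{H_x\}\}$ are pairwise edge-disjoint: distinct components $C\neq C'$ contain disjoint sets of $T$-children of $x$, so $v_C$ and $v_{C'}$ lie in subtrees of different children, and their paths to $x$ share no edge. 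Hence each $\mathcal{L}_x$ has congestion only $\widetilde{O}(D)$ (one message per edge), not $\widetilde{O}(D)$ times the number of components. Combined with your correct observation that $v_C\in\LD(x)$ for non-heavy $C$, it follows that a tree edge $(a,b)$ can be used by $\mathcal{L}_x$ only when $b\in\LD(x)$, i.e.\ only for $x\in\LA(b)$, of which there are $O(\log n)$. Total item~1 congestion is therefore $O(\log n)\cdot\widetilde{O}(D)=\widetilde{O}(D)$ per edge, with no aggregation tricks needed. The item~2 analysis is the immediate one you gave: each $\mathcal{H}_x$ contributes $\widetilde{O}(1)$ bits and uses only edges inside $T_x$, hence $O(D)\cdot\widetilde{O}(1)=\widetilde{O}(D)$ per edge.
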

\def\APPENDLEARNINGSTRONGSEN{
\begin{proof}
	We start in a preprocessing step, in which we let each vertex $u$ learn the compressed-path $\pi^*_y(s,C_{y,y'})$ for every edge $(y,y') \in \pi(s,u)$, and further exchange this information with all its neighbors. This can be done by downcasting the information on $T$ within $\widetilde{O}(D)$ rounds. At this point, for every $x \in V$ and $C \in \mathcal{C}_x$, the vertex $v_C \in C$ (which is the last vertex of $\pi_x(s,C)$) holds the compressed path $\pi^*_y(s, C_{y,y'})$ for every $T$-edge $(y,y') \in \pi_x(s, C)$. Our goal is to pass (most) of this preprocessing information up to $x$. To this end, we design for each $x \in V$ two specialized algorithms, $\mathcal{L}_x$ and $\mathcal{H}_x$, letting $x$ learn the information specified in item 1 and item 2 of the lemma, respectively. They have the following properties:
	\begin{itemize}
		\item $\mathcal{L}_x$ has dilation and congestion $\widetilde{O}(D)$, sending messages only on edges incident to $\LD(x)$.
		\item $\mathcal{H}_x$ has dilation $\widetilde{O}(D)$, congestion $\widetilde{O}(1)$, sending messages only on edges incident to $V(T_x)$.
	\end{itemize}
	These enable us to execute all algorithms $\{\mathcal{L}_x, \mathcal{H}_x\}_{x \in V}$ in parallel within $\widetilde{O}(D)$ rounds using Theorem \ref{thm:delay}. Indeed, the dilation bound is obvious. As for congestion, consider some edge $e$. There are at most $O(\log n)$ vertices $x$ such that either of its endpoint belongs to $\LD(x)$, hence $e$ needs to pass at most $O(\log n) \cdot \widetilde{O}(D) = \widetilde{O}(D)$ messages of $\mathcal{L}_x$ algorithms. Also, there are at most $O(D)$ vertices $x$ such that either of its endpoints belongs to $V(T_x)$, hence $e$ needs to pass at most $O(D) \cdot \widetilde{O}(1) = \widetilde{O}(D)$ messages of $\mathcal{H}_x$ algorithms.
	
	It remains to describe the algorithms $\mathcal{L}_x$ and $\mathcal{H}_x$. For $\mathcal{L}_x$, we let each vertex $v_C$ of a non-heavy component $C \in \mathcal{C}_x \setminus \{H_x\}$ propagate the entire $\widetilde{O}(D)$-bit preprocessing information upwards to $x$ on $\pi(x, v_C)$. As the collection of paths $\{\pi(s,v_C)\}_{C \in \mathcal{C}_x \setminus \{H_x\}}$ is edge-disjoint, this can be done in parallel on all these paths, yielding the required dilation and congestion for $\mathcal{L}_x$. Finally, in $\mathcal{H}_x$, we let $v_{H_x}$ propagate the $\widetilde{O}(1)$-bits of relevant information for item 2 in the lemma up to $x$ along $\pi(x,v_{H_x})$, which clearly achieves the required dilation and congestion for $\mathcal{H}_x$.
\end{proof}
}\APPENDLEARNINGSTRONGSEN

From now on, we assume that all the (randomized) algorithms specified in lemmas \ref{lem:connecitivity-in-Gxs}, \ref{lem:path-sketch}, \ref{lem:comp-path-sketch} and \ref{lem:learning-strong-sens} were executed, and gave correct outputs (which happens w.h.p.).

\subsection{$xy$-Connectivity Algorithms Under a Promise}\label{sec:promise}
We next discuss the procedure for determining the $xy$-connectivity (i.e.\ the connectivity of $G \setminus \{x,y\}$) for collections of independent pairs $xy$ which satisfy a given \emph{promise}. For an overview, see \Cref{sec:approach}.
The following definition plays a key role in this procedure.

\begin{definition}
	Let $x,y$ be an independent pair. We define $\LDS(x,y)$ as the set of \textbf{l}ight $x$-\textbf{d}escendants in $y$-\textbf{s}ensitive components. That is,
	$
		\LDS(x,y) =\{ v \in \LD(x) \mid y \in \pi_x(s, C_{x,v})\}.
	$
\end{definition}

\begin{observation}\label{obs:bound-lds}
	Every vertex $v$ belongs to a total of $O(D\log n)$ sets $\LDS(x,y)$ for $x,y \in V$.
\end{observation}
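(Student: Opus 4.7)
The plan is to bound the count by independently bounding the number of choices for $x$ and (given $x$) the number of choices for $y$ such that $v \in \LDS(x,y)$.

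First, observe that $v \in \LDS(x,y)$ requires $v \in \LD(x)$, which is equivalent to $x \in \LA(v)$. By the heavy-light decomposition property stated in the preliminaries, $|\LA(v)| = O(\log n)$. So there are only $O(\log n)$ possible values of $x$ such that $v$ can belong to any set $\LDS(x,\cdot)$.

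Next, fix such an $x \in \LA(v)$. Since $v \in V_x$, the component $C_{x,v} \in \mathcal{C}_x$ containing $v$ is uniquely determined by $x$ and $v$. The condition $y \in \pi_x(s, C_{x,v})$ then forces $y$ to lie on the specific path $\pi_x(s, C_{x,v}) = \pi(s, u_{C_{x,v}}) \circ (u_{C_{x,v}}, v_{C_{x,v}})$, which is a prefix of the BFS tree $T$ augmented with a single non-tree edge. Its length is at most $D+1 = O(D)$, giving only $O(D)$ possible values for $y$.

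Multiplying these two bounds yields $O(\log n) \cdot O(D) = O(D \log n)$ pairs $(x,y)$ for which $v \in \LDS(x,y)$, as claimed. I don't anticipate any real obstacle here — the argument is essentially a direct unpacking of the definition combined with the two standard size bounds (light ancestors are $O(\log n)$, BFS paths are $O(D)$).
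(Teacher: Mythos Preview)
Your proof is correct and follows essentially the same approach as the paper's own proof: bound the choices of $x$ by $|\LA(v)| = O(\log n)$, then for each such $x$ bound the choices of $y$ by $|\pi_x(s, C_{x,v})| = O(D)$, and multiply. The paper compresses this into a single sentence, but the reasoning is identical.
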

\begin{proof}
	$v$ belongs to $\LD(x)$ only for its $O(\log n)$ light ancestors $x$, and $\vert \pi_x (s, C_{x,v})\vert  = O(D)$.
\end{proof}

The following theorem states the properties of a single $xy$-connectivity algorithm.

\begin{theorem}[$xy$-Connectivity Given an $x$-$y$ Path]\label{lem:connectivity-with-promise}
	Fix an independent pair $x,y$, and assume that there is an $x$-$y$ path $\Pi_{x,y} \subseteq G$ of length $O(D)$ (known in a distributed manner). Then, there is a randomized $xy$-connectivity algorithm $\mathcal{A}^P_{x,y}$ of $\widetilde{O}(D)$ rounds and $\widetilde{O}(1)$ congestion, sending messages only along (i) the edges of $\Pi_{x,y}$ and (ii) edges incident to $\LDS(x,y) \cup \LDS(y,x)$. At the end of the execution, w.h.p.\ both $x$ and $y$ know whether $G \setminus \{x,y\}$ is connected or not.
\end{theorem}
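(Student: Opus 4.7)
The plan is to design $\mathcal{A}^P_{x,y}$ as a joint simulation by $x$ and $y$ of $O(\log n)$ phases of Bor\r{u}vka's algorithm in $G\setminus\{x,y\}$, using edges incident to $\LDS(x,y)\cup\LDS(y,x)$ for the bulk of the aggregation and reserving $\Pi_{x,y}$ for only $\widetilde{O}(1)$ coordination messages per phase. By Lemma \ref{lem:non-or-pseudo-sensitive}, every component that is not fully sensitive to the partner is already $G\setminus\{x,y\}$-connected to $s$ via a path avoiding $\mathcal{FS}(x,y)\cup\mathcal{FS}(y,x)$, so I take the initial Bor\r{u}vka partition to be one ``outside'' part $O$ absorbing $s$ and all such non-fully-sensitive components, plus one separate part per $C\in\mathcal{FS}(x,y)\cup\mathcal{FS}(y,x)$. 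Observation \ref{obs:0-sketch}, applied to $V\setminus\{x,y\}$, makes $\Sketch_{G\setminus\{x,y\}}(O)$ equal to the XOR of the other parts' sketches, so $O$ never has to be sketched directly.

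The critical structural fact that keeps the algorithm within the permitted communication envelope is that every \emph{light} fully-sensitive component $C\in\mathcal{FS}(x,y)\setminus\{H_x\}$ is a union of $T$-subtrees rooted at light children of $x$, and hence $C\subseteq\LDS(x,y)$ (symmetrically for $y$). Thus both the sketch computation and the intra-side Bor\r{u}vka merges for light parts can be carried out in a fully distributed fashion in the spirit of Section \ref{sec:single-cut}, using only $\LDS$-incident edges: each $v\in\LDS(x,y)\cup\LDS(y,x)$ precomputes $\Sketch_G(v)\oplus\CanSketch(\{(v,x),(v,y)\}\cap E)$ via Lemma \ref{lem:cancel-sketch-property}, so that any aggregation along $\LDS$-edges already produces the $\Sketch_{G\setminus\{x,y\}}$-sketch of its part; then outgoing edges and partner-part identities are extracted through Lemma \ref{lem:sketch-property} together with the $\EID_T$-labels and the $\pi^*_{\cdot}(s,\cdot)$ labels disseminated in Lemma \ref{lem:comp-path-sketch}. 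Over $O(\log n)$ phases using a fresh basic sketch-unit per phase (as in Section \ref{sec:single-cut}), this costs $\widetilde{O}(D)$ rounds at $\widetilde{O}(1)$ edge-congestion, since by Observation \ref{obs:bound-lds} each edge lies in $\LDS(x,y)\cup\LDS(y,x)$ for the pair $xy$ only $O(\log n)$ times.

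The heavy parts $H_x,H_y$ are the one place where $\Pi_{x,y}$ is needed, and this is the main obstacle I foresee: their vertex sets may extend outside $\LDS(x,y)\cup\LDS(y,x)$ and so cannot be sketched by a pure $\LDS$-aggregation. The resolution is that $x$ already holds $\Sketch_G(H_x)$ from the preprocessing (as the XOR of the subtree-sketches of the children of $x$ whose $T$-subtrees compose $H_x$), cancels its own incident edges locally, and receives from $y$ a single $\widetilde{O}(1)$-bit cancellation-sketch for $y$'s edges into $H_x$; vertex $y$ can assemble that cancellation-sketch because, by Lemma \ref{lem:comp-path-sketch}(2), every neighbor $u$ of $y$ lying in $V(T_x)$ reports $\pi_x^*(s,C_{x,u})$, revealing which of $y$'s edges enter $H_x$ (the situation for $H_y$ is symmetric). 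The heavy outgoing-edge and cross-part merge decisions in each phase are exchanged along $\Pi_{x,y}$ in $\widetilde{O}(1)$ bits as well, so $\Pi_{x,y}$ carries only $\widetilde{O}(1)$ bits per phase in total. A secondary subtlety is that by Lemma \ref{lem:learning-strong-sens}, $x$ may not distinguish $H_x\in\mathcal{FS}(x,y)$ from $H_x\in\mathcal{PS}(x,y)$ when $(y,y_h)\in\pi_x(s,H_x)$; this is handled by $x$ conservatively treating $H_x$ as its own initial part, since if $H_x$ is only pseudo-sensitive, Lemma \ref{lem:non-or-pseudo-sensitive} guarantees a safe connection to $O$ that Bor\r{u}vka will discover in a subsequent phase. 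After $O(\log n)$ phases, w.h.p.\ no growable part remains; $x$ and $y$ then exchange the final part count along $\Pi_{x,y}$ and both output ``connected'' iff exactly one part survives.
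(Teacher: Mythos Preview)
Your plan matches the paper's: the same initial partition $\mathcal{FS}(x,y)\cup\mathcal{FS}(y,x)\cup\{U(x,y)\}$, light parts handled via $\LDS$-incident edges, heavy parts and the outside part coordinated over $\Pi_{x,y}$, and an $O(\log n)$-phase Bor\r{u}vka simulation held jointly by $x$ and $y$. Your per-vertex cancellation $\Sketch_G(v)\oplus\CanSketch(\{(v,x),(v,y)\}\cap E)$ is a clean alternative to the paper's routing of cancellation-sketches in Claim~\ref{cl:init-sketch-cancel}, and your conservative over-partitioning of $H_x$ is a valid substitute for Claim~\ref{cl:classifying-sensitivity}.

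Where your proposal is thin is the per-phase merge mechanism, which is the bulk of the paper's argument (Claims~\ref{cl:light-communication}--\ref{cl:merges}). Three concrete points you gloss over: (i) After the first merge, a light part may straddle $\LDS(x,y)$ and $\LDS(y,x)$; your ``intra-side'' language and appeal to Section~\ref{sec:single-cut} (which is purely \emph{local} computation at one vertex, not distributed aggregation) do not explain how an aggregate over such a part is computed --- the paper needs the two-sided light-communication primitive of Claim~\ref{cl:light-communication}. (ii) The $\EID_T$- and $\pi^*$-labels of an outgoing-edge endpoint $v_P$ reveal only its \emph{initial} component, not its current part after merges; the paper maintains invariant~(I4) and uses the reply protocol of Claim~\ref{cl:detect-legal-IDs}, which in particular handles the case $v_P\in V(T_{x_h})\setminus\LDS(x,y)$ where $v_P$ cannot be reached over $\LDS$-incident edges at all. (iii) You omit the head/tail coin-toss star-merge restriction; without it, a single phase can produce merge chains alternating between $x$-side and $y$-side light parts, and assembling the merged sketch then requires more than $\widetilde{O}(1)$ channel messages. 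Finally, your invocation of Observation~\ref{obs:bound-lds} is misplaced: that bound is over \emph{all} pairs and is used only for the scheduling in Corollary~\ref{cor:allpairsconnectivity-with-promise}, not for the single-pair $\widetilde{O}(1)$ congestion, which follows simply because each aggregation or broadcast step sends $\widetilde{O}(1)$ bits per edge.
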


The major part of this section is devoted to proving \Cref{lem:connectivity-with-promise}. Before doing so, we show that for a set of independent pairs $Q \subseteq V \times V$, all algorithms $A_{x,y}^P$ for $xy \in Q$ can be scheduled simultaneously when provided a path collection $\mathcal{P}_Q = \{\Pi_{x,y} \mid xy \in Q\}$ satisfying the following:

\begin{quote}[\textbf{Promise:}]
	\emph{$\mathcal{P}_Q$-paths have length $O(D)$, and each $G$-edge appears on $\widetilde{O}(D)$ \mbox{$\mathcal{P}_Q$-paths}.}
\end{quote}

\begin{corollary}\label{cor:allpairsconnectivity-with-promise}[All Pairs $xy$-Connectivity Under a Promise]
	Let $Q \subseteq V \times V$ be a collection of independent pairs, and $\mathcal{P}_Q=\{\Pi_{x,y} \mid xy\in Q\}$ be a collection of $x$-$y$ paths (each known is a distributed manner) satisfying the promise.
	Then, all algorithms $\mathcal{A}^P_{x,y}$ for $xy \in Q$ (where $\mathcal{A}^P_{x,y}$ uses the path $\Pi_{x,y} \in \mathcal{P}_Q$) can be executed simultaneously within $\widetilde{O}(D)$ rounds, w.h.p.
\end{corollary}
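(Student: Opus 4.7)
The plan is a direct application of the random delay scheduling theorem (Theorem~\ref{thm:delay}) to the family $\{\mathcal{A}^P_{x,y}\}_{xy \in Q}$, where the required dilation and congestion bounds follow from Theorem~\ref{lem:connectivity-with-promise}, the promise on $\mathcal{P}_Q$, and \Cref{obs:bound-lds}.

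First, I would fix the dilation bound: by Theorem~\ref{lem:connectivity-with-promise}, each individual algorithm $\mathcal{A}^P_{x,y}$ terminates within $\dilation = \widetilde{O}(D)$ rounds, so this bound is immediate. The main work is in the congestion analysis. Fix an arbitrary edge $e \in E$. Theorem~\ref{lem:connectivity-with-promise} ensures that $\mathcal{A}^P_{x,y}$ sends messages on $e$ only if (i) $e$ lies on $\Pi_{x,y}$, or (ii) at least one endpoint of $e$ lies in $\LDS(x,y) \cup \LDS(y,x)$, and in any such case contributes only $\widetilde{O}(1)$ messages through $e$.

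For case (i), the promise on $\mathcal{P}_Q$ tells us that $e$ appears on at most $\widetilde{O}(D)$ of the paths $\Pi_{x,y}$, so the total contribution from this case is $\widetilde{O}(D) \cdot \widetilde{O}(1) = \widetilde{O}(D)$ messages through $e$. For case (ii), write $e = (u,v)$ and note by \Cref{obs:bound-lds} that each of $u$ and $v$ belongs to at most $O(D \log n)$ sets of the form $\LDS(x,y)$ over all independent pairs in $V \times V$, and in particular over the pairs in $Q$. Hence the number of pairs $xy \in Q$ for which condition (ii) holds is at most $\widetilde{O}(D)$, contributing another $\widetilde{O}(D) \cdot \widetilde{O}(1) = \widetilde{O}(D)$ messages through $e$. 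Summing the two cases yields $\congestion = \widetilde{O}(D)$.

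Applying Theorem~\ref{thm:delay} with $\dilation, \congestion = \widetilde{O}(D)$, the entire collection $\{\mathcal{A}^P_{x,y}\}_{xy \in Q}$ can be scheduled within $\widetilde{O}(\dilation + \congestion) = \widetilde{O}(D)$ rounds, w.h.p. I do not anticipate any real obstacle here: the whole point of building $\mathcal{A}^P_{x,y}$ with its two restricted channels of communication (the path $\Pi_{x,y}$ and edges incident to $\LDS(x,y) \cup \LDS(y,x)$) was precisely to make the congestion charge decompose into the promise bound and \Cref{obs:bound-lds}, which are each $\widetilde{O}(D)$.
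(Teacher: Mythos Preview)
Your proposal is correct and follows essentially the same approach as the paper's own proof: bound the total congestion on each edge by splitting into the two communication channels (the $\Pi_{x,y}$ paths via the promise, and the $\LDS$-incident edges via \Cref{obs:bound-lds}), each contributing $\widetilde{O}(D)$ algorithms with $\widetilde{O}(1)$ messages apiece, and then invoke Theorem~\ref{thm:delay} with dilation and congestion both $\widetilde{O}(D)$. Your write-up is slightly more explicit about the dilation bound, but otherwise the arguments are identical.
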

\def\APPENDCORALLPAIRSCON{
\begin{proof}
	We show that the total congestion of these algorithms is $\widetilde{O}(D)$, which immediately yields the result by \Cref{thm:delay}. Consider an edge $e$. By \Cref{obs:bound-lds}, there are at most $\widetilde{O}(D)$ pairs $xy \in Q$ such that $e$ is incident to sets $\LDS(x,y) \cup LDS(y,x)$. Also, by the promise, $e$ is present in $\Pi_{x,y}$ only for $\widetilde{O}(D)$ pairs $xy \in Q$. Thus, $e$ is used by at most $\widetilde{O}(D)$ algorithms, each of sending $\widetilde{O}(1)$ messages through it, so the congestion is $\widetilde{O}(D)$.
\end{proof}
}\APPENDCORALLPAIRSCON

\paragraph{Description of the Connectivity Algorithm $\mathcal{A}^P_{x,y}$.}
The algorithm is based on simulating the Bor\r{u}vka algorithm using the sketch information of connected subsets in $G \setminus \{x,y\}$, held jointly by $x$ and $y$. Throughout, we refer to the given $x$-$y$ path $\Pi_{x,y}$ as \emph{the $xy$-channel}

The input for phase $i\geq 1$ of Bor\r{u}vka is the following.
There is a partitioning $\mathcal{P}_{i-1}=\{P_{i-1,1},\ldots, P_{i-1,k_{i-1}}\}$ of the vertices in $V \setminus \{x,y\}$ into connected subsets, called \emph{parts} (to avoid confusion with the `components' in $\mathcal{C}_x$ and $\mathcal{C}_y$).
We mark a special vertex in each $P_{i,j} \in \mathcal{P}_i$, called the \emph{leader} of the part. 
The leaders are either some chosen $T$-children of $x$ or $y$ in these parts, or (in some cases) the source $s$. The part-ID is the ID of its leader. 
The part containing $s$ is called the \emph{$s$-part}.
The part containing $x_h$ (resp., $y_h$) is called $x$-\emph{heavy} (resp., $y$-\emph{heavy})\footnote{A part can be both $x$-heavy and $y$-heavy.}. The parts that are free of $s,x_h,y_h$ are called \emph{light}.
In the initial partitioning $\mathcal{P}_0$, we make sure that all light parts are contained in $\LDS(x,y) \cup \LDS(y,x)$ (as will be described later). As the parts only get merged throughout the Bor\r{u}vka execution, this remains true also for light parts in $\mathcal{P}_{i-1}$.
A part $P$ is said to be \emph{growable} if there is an outgoing $G$-edge connecting $P$ to $V \setminus (P \cup \{x,y\})$.
The Bor\r{u}vka algorithm has $K=O(\log n)$ forest growing phases in $G \setminus \{x,y\}$, where each phase reduces the number of growable parts by a constant factor, in expectation. We maintain the following invariants for the beginning of each phase $1\leq i \leq K$, for both $z \in \{x,y\}$:

\begin{enumerate}[label={(I\arabic*)}]
	\item $z$ knows $\Sketch_{G \setminus \{x,y\}}(P)$ and the part-ID of the \emph{$s$-part} $P \in  \mathcal{P}_{i-1}$.
	\item $z$ knows $\Sketch_{G \setminus \{x,y\}}(P)$ for every \emph{light part} $P \in  \mathcal{P}_{i-1}$ whose leader is in $T_{z}$. 
	\item $z$ knows $\Sketch_{G \setminus \{x,y\}}(P)$ as well as the part-IDs of all \emph{heavy parts} $P \in \mathcal{P}_{i-1}$.
	\item $z$ knows, for each $T$-child $z'$ of $z$, the part-ID of the part containing $z'$ in $\mathcal{P}_{i-1}$.
\end{enumerate}

\paragraph{Initialization.}
We describe the initialization executed prior to the first Bor\r{u}vka phase. The initial parts are defined as the fully-$y$-sensitive components in $\mathcal{C}_x$, the fully $x$-sensitive components of $\mathcal{C}_y$, and another part containing all the rest of the vertices in $G \setminus \{x,y\}$ (and particularly the source $s$). That is, the initial partitioning is
$\mathcal{P}_0 = \mathcal{FS}(x,y) \cup \mathcal{FS}(y,x) \cup \{U(x,y)\}$,
where the $s$-part $U(x,y)$ is
$$
U(x,y) = V \setminus \left[ \{x,y\} \cup \left(\bigcup_{C \in \mathcal{FS}(y,x) \cup \mathcal{FS}(y,x)} C \right) \right]
.$$
Parts from $\mathcal{FS}(x,y)$ or $\mathcal{FS}(y,x)$ are clearly connected, and $U(x,y)$ is connected by \Cref{lem:non-or-pseudo-sensitive}. Note that light parts can come only from $\mathcal{FS}(x,y) \cap \LD(x) \subseteq \LDS(x,y)$ or from  $\mathcal{FS}(y,x) \cap \LD(y) \subseteq \LDS(y,x)$.
The leader of each part $P \in \mathcal{FS}(x,y)$ is chosen as the vertex of largest ID among all $T$-children of $x$ in $P$, and similarly for parts in $\mathcal{FS}(y,x)$. The source $s$ is the leader of $U(x,y)$.

The following claim builds on \Cref{lem:learning-strong-sens} and using the $xy$-channel.
All missing proofs in this section are deferred to \Cref{sec:promise-imp-details} on implementation details.

\begin{claim}\label{cl:classifying-sensitivity}
	By passing $\widetilde{O}(1)$ messages on the $xy$-channel, $x$ (resp., $y$) can classify all components in $\mathcal{C}_x$ (resp., $\mathcal{C}_y)$ as non-, pseudo- or fully-$y$-sensitive (resp., $x$-sensitive).
\end{claim}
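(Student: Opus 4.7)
The plan is to reduce the classification task at $x$ to a single residual case that cannot be decided from the locally-held information, and to observe that this case can be resolved by exchanging a single bit with $y$ along the $xy$-channel $\Pi_{x,y}$ (symmetrically for $y$). All other components can be classified by $x$ without any communication.

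Since $x$ holds its connectivity tree $\widehat{T}_x$ (by Lemma \ref{lem:path-sketch}), it knows $\pi_x(s, C)$ for every $C \in \mathcal{C}_x$, and can therefore locally classify any $C$ that is not $y$-sensitive (those with $y \notin \pi_x(s, C)$) and any $y$-sensitive $C$ with $y = u_C$, which is fully-$y$-sensitive by definition. For any remaining $y$-sensitive $C$, the path $\pi_x(s, C)$ contains a unique $T$-edge $(y, y')$ with $y'$ a child of $y$, and the pseudo/fully classification reduces to checking whether $x \in \pi_y(s, C_{y,y'})$, which is readable from the compressed path $\pi^*_y(s, C_{y,y'})$. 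By Lemma \ref{lem:learning-strong-sens}, $x$ already holds $\pi^*_y(s, C_{y,y'})$ in all such subcases \emph{except} the single residual one where $C = H_x$ and $y' = y_h$, where $C_{y,y'} = H_y$.

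In this sole residual case, $x$ needs only the bit $b_x = [x \in \pi_y(s, H_y)]$. Since $y$ holds $\widehat{T}_y$, it has $\pi_y(s, H_y)$ locally and can compute $b_x$ from the (already known) identifier of $x$, then forward it along $\Pi_{x,y}$. By symmetry, $y$'s only unresolved case is classifying $H_y$ when $(x, x_h) \in \pi_y(s, H_y)$, which is resolved by $x$ sending $b_y = [y \in \pi_x(s, H_x)]$ on $\Pi_{x,y}$. The total communication on the $xy$-channel is therefore $O(1)$ bits, well within the $\widetilde{O}(1)$ budget. The main conceptual step is isolating the unique gap left by Lemma \ref{lem:learning-strong-sens}; once this is done, the exchange itself is trivial, since each side already holds all the information needed to answer the single residual yes/no question asked by the other, and no routing issues arise because $\Pi_{x,y}$ is provided in the statement of \Cref{lem:connectivity-with-promise}.
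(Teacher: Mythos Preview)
Your proposal is correct and follows essentially the same approach as the paper: reduce via \Cref{lem:learning-strong-sens} to the single residual case $C=H_x$ with $(y,y_h)\in\pi_x(s,H_x)$, and resolve it by a short exchange on the $xy$-channel. The only cosmetic difference is that the paper has $y$ send the full compressed path $\pi_y^*(s,H_y)$ (still $\widetilde{O}(1)$ bits), whereas you have $y$ send just the bit $[x\in\pi_y(s,H_y)]$; both are valid since in the context of $\mathcal{A}^P_{x,y}$ each endpoint knows the other's ID (which includes its compressed path).
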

\def\APPENDCLASSIFYINGSENSITIVITY{
\begin{proof}[Proof of \Cref{cl:classifying-sensitivity}]
	W.l.og., we show this only for $x$. Recall that $x$ can classify each component in $\mathcal{C}_x$ as $y$-sensitive or not using $\widehat{T}_x$. Furthermore, by \Cref{lem:learning-strong-sens}, each $y$-sensitive component $C \neq H_x$ can be further classified as pseudo- or fully-$y$-sensitive. It remains to classify $H_x$. By \Cref{lem:learning-strong-sens}, a problem in doing so occurs only if $(y,y_h) \in \pi_x (s, H_x)$. In this case, to classify $H_x$ as pseudo- or fully-$y$-sensitive, it is enough for $x$ to learn $\pi_y^* (s, C_{y,y_h}) = \pi_y^* (s, H_y)$. This $\widetilde{O}(1)$-bit information can be passed to $x$ from $y$ over the $xy$-channel. 
\end{proof}
}

The sensitivity classification of the last claim easily enables us to ``almost" fulfill invariants (I1-4) w.r.t.\ $\mathcal{P}_0$, only with $G$-sketches instead of $(G \setminus \{x,y\})$-sketches. Formally, we define the \emph{modified} conditions (I*1-4) exactly the same as (I1-4), only replacing $\Sketch_{G\setminus\{x,y\}} (\cdot)$ with $\Sketch_{G} (\cdot)$ everywhere, and we have:

\begin{claim}\label{cl:modified-invariants}
	Conditions (I*1-4) can be satisfied for both $z \in \{x,y\}$, by passing only $\widetilde{O}(1)$ messages on the $xy$-channel.
\end{claim}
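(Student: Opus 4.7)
The plan is to observe that each of $x$ and $y$ can reconstruct the information required by (I*1--4) essentially from its own local data (the preprocessing sketches together with the outputs of \Cref{lem:connecitivity-in-Gxs}--\Cref{lem:learning-strong-sens}), with only a $\widetilde{O}(1)$-bit exchange on the $xy$-channel to share the missing pieces. As a first step I invoke \Cref{cl:classifying-sensitivity} so that $x$ (resp.\ $y$) knows the non-/pseudo-/fully-sensitive status of every component in $\mathcal{C}_x$ (resp.\ $\mathcal{C}_y$), at $\widetilde{O}(1)$ channel cost.

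The local key is that the preprocessing provides $x$ with $\Sketch_G(V(T_{x'}))$ for each $T$-child $x'$, and \Cref{lem:connecitivity-in-Gxs} tells $x$ which $x'$ lies in which component $C \in \mathcal{C}_x$; hence $x$ locally forms $\Sketch_G(C) = \bigoplus_{x' \in N(x,C)} \Sketch_G(V(T_{x'}))$ for every $C \in \mathcal{C}_x$, where $N(x,C)$ denotes the children of $x$ inside $C$. Combined with the sensitivity classification, this immediately yields (I*2) and (I*4) for $x$ with no further communication: every light part of $\mathcal{P}_0$ whose leader is in $T_x$ is exactly a member of $\mathcal{FS}(x,y)$, whose sketch and largest-$T$-child leader-ID $x$ already knows; and each $T$-child of $x$ is assigned either to one of these fully-$y$-sensitive parts or to the $s$-part $U(x,y)$ (part-ID $\ID(s)$). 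The symmetric statements hold for $y$.

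For (I*1), by \Cref{obs:0-sketch} we have $\Sketch_G(V)=0$, and so
\[
\Sketch_G(U(x,y)) \;=\; \Sketch_G(x) \oplus \Sketch_G(y) \oplus \bigoplus_{C \in \mathcal{FS}(x,y)} \Sketch_G(C) \oplus \bigoplus_{C \in \mathcal{FS}(y,x)} \Sketch_G(C).
\]
Vertex $x$ locally assembles its half of this XOR (its one-vertex sketch together with all $\Sketch_G(C)$ for $C \in \mathcal{FS}(x,y)$) and $y$ does the same for its half; a single exchange of these two $\widetilde{O}(1)$-bit partial sketches over the channel lets both compute $\Sketch_G(U(x,y))$. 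The $s$-part's ID $\ID(s)$ is already known to both endpoints.

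Finally, for (I*3), the $x$-heavy part is $H_x$ whenever $H_x \in \mathcal{FS}(x,y)$ (in which case $x$ already holds its $G$-sketch and leader-ID from the local assembly above), and otherwise coincides with the $s$-part $U(x,y)$ already handled by (I*1); analogously for $y$-heavy. To synchronize knowledge, $x$ transmits over the channel a single $\widetilde{O}(1)$-bit message indicating which case occurs, along with $\Sketch_G(H_x)$ and its part-ID when applicable, and $y$ symmetrically transmits the analogous data for $H_y$. The main (minor) point to verify is the bookkeeping: at most two heavy parts arise, and each of $x,y$ can always decide whether $U(x,y)$ is the heavy part on its own side from its own sensitivity classification (since it knows which part contains its own heavy child via (I*4)), so the cross-transmission above suffices. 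Altogether the total channel traffic is $\widetilde{O}(1)$ bits, as required.
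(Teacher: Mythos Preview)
Your proposal is correct and follows essentially the same approach as the paper's proof: both invoke \Cref{cl:classifying-sensitivity} for the sensitivity classification, observe that (I*2) and (I*4) follow from $x$'s (resp.\ $y$'s) local knowledge of the $\mathcal{C}_x$-component sketches and leaders, split the XOR defining $\Sketch_G(U(x,y))$ into an $x$-half and a $y$-half exchanged over the channel for (I*1), and exchange $\Sketch_G(H_x),\Sketch_G(H_y)$ together with their leader-IDs over the channel for (I*3). The only cosmetic difference is that you make the case split ``$H_x\in\mathcal{FS}(x,y)$ vs.\ $H_x\subseteq U(x,y)$'' explicit in (I*3), whereas the paper simply ships $\Sketch_G(H_x),\Sketch_G(H_y)$ unconditionally; both are $\widetilde{O}(1)$ channel bits.
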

\def\APPENDMODIFIEDINVARIANTS{
\begin{proof}[Proof of \Cref{cl:modified-invariants}]
	W.l.o.g., we show this for $z = x$.
	\begin{enumerate}[label={(I*\arabic*)}]
		\item The sketch of the $s$-part in $\mathcal{P}_0$ is
		\begin{align*}
			\Sketch_G (U(x,y)) = 
			&\Big[ \Sketch_G(V) \oplus \Sketch_G (x) \oplus \Big( \bigoplus_{C \in \mathcal{FS}(x,y)} \Sketch_G (C) \Big) \Big] \\
			\oplus 
			&\Big[ \Sketch_G (y) \oplus \Big( \bigoplus_{C \in \mathcal{FS}(y,x)} \Sketch_G (C) \Big)  \Big] .
		\end{align*}
		The first and second square-bracketed terms can be computed locally by $x$ and $y$, respectively, by \Cref{cl:classifying-sensitivity}. By exchanging these terms over the $xy$-channel, and adding them up, both $x,y$ learn $\Sketch_G (U(x,y))$. The part-ID is the ID of $s$.
		
		\item A light part in $\mathcal{P}_0$ whose leader is in $T_x$ is simply a component $C \in \mathcal{FS}(x,y)$, and $x$ holds the $G$-sketches of such components.
		\item We first let $x,y$ exchange $\Sketch_{G} (H_x)$ and $\Sketch_{G} (H_y)$, and the largest ID of their $T$-children in $H_x$ and $H_y$, over the $xy$-channel. Now, a heavy part in $\mathcal{P}_0$ can be either $U(x,y)$, $H_x$ or $H_y$, so its $G$-sketch and its part-ID is known to both $x,y$. 
		\item Let $x'$ be a $T$-child of $x$. If $C_{x,x'} \in \mathcal{FS}(x,y)$ (which $x$ knows by \Cref{cl:classifying-sensitivity}), then the part-ID for $x'$ is the largest vertex ID of a $T$-child of $x$ in $C_{x,x'}$. Else, $C_{x,x'} \subseteq U(x,y)$, so the part-ID for $x'$ is the ID of $s$.
	\end{enumerate}
\end{proof}
}

Finally, the following technical claim enables us to end the initialization step of $\mathcal{A}_{x,y}^P$ by transforming the $G$-sketches to $(G \setminus \{x,y\})$-sketches, thus obtaining (I1-4) w.r.t.\ $\mathcal{P}_0$.

\begin{claim}\label{cl:init-sketch-cancel}
	Assume that the modified invariants (I*1-4) hold w.r.t.\ $\mathcal{P}_0$. Then, within $\widetilde{O}(D)$ rounds with congestion $\widetilde{O}(1)$, where messages are sent only on the $xy$-channel and on edges incident to $\LDS(x,y) \cup \LDS(y,x)$, invariants (I1-4) can be fulfilled w.r.t.\ $\mathcal{P}_0$. 
\end{claim}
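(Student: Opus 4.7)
The plan is to apply \Cref{lem:cancel-sketch-property}: for each part $P \in \mathcal{P}_0$, XOR $\CanSketch(E'_P)$ into the stored $\Sketch_G(P)$, where $E'_P$ is the set of edges from $P$ to $\{x,y\}$. I would split the work into two sub-tasks, treating light parts and shared (heavy/$s$-)parts completely differently, since the former sit inside $\LDS(x,y)\cup\LDS(y,x)$ (so in-part aggregation is legal) while the latter can be arbitrarily large but are held by \emph{both} $x$ and $y$.

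For the light parts, which are contained in $\LDS(x,y)\cup\LDS(y,x)$ and whose sketches are stored by only one of $x,y$, I would let every $v\in P$ locally compute its own partial cancellation $\CanSketch(E'_v)$ for its (at most two) edges to $\{x,y\}$; this is possible because $v$ already holds the $\EID_T$'s of its incident edges and the random seeds $\mathcal{S}_{\ID},\mathcal{S}_h$ from the preprocessing. These partial sketches are XOR-aggregated bottom-up along the $T$-edges inside $P$ (all incident to $\LDS(x,y)\cup\LDS(y,x)$) up to the light leader of $P$, who forwards the aggregate one hop to its parent in $T$, which is $x$ or $y$. The owner then XORs the received value into $\Sketch_G(P)$ to obtain $\Sketch_{G\setminus\{x,y\}}(P)$. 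Every tree edge is used by the aggregation of at most one light part, yielding congestion $\widetilde{O}(1)$ and dilation $O(D)$.

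For the $O(1)$ shared parts (namely $U(x,y)$, and $H_x$/$H_y$ when they lie in $\mathcal{FS}(x,y)$/$\mathcal{FS}(y,x)$), I would have each of $x$ and $y$ compute \emph{its own} contribution by going through its incident edges, grouping them by the part of the other endpoint, and producing one $\CanSketch$ per shared part. These $O(1)$ partial sketches are sent between $x$ and $y$ over the $xy$-channel $\Pi_{x,y}$, costing $\widetilde{O}(D)$ rounds and $\widetilde{O}(1)$ congestion on each of its edges. Each of $x,y$ then XORs both its own and the received partial cancellations into its local copy $\Sketch_G(P)$, yielding $\Sketch_{G\setminus\{x,y\}}(P)$ on both sides.

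The main point I would verify carefully is that $x$ can indeed locally decide, for each of its neighbors $v$, which part of $\mathcal{P}_0$ contains $v$ (symmetrically for $y$). The easy cases are: if $v\in V_x$ then $x$ knows $C_{x,v}$ by \Cref{lem:connecitivity-in-Gxs} and classifies it via \Cref{cl:classifying-sensitivity}, placing $v$ in a $\mathcal{FS}(x,y)$ light part or in $U(x,y)$ or in $H_x$; and if $\LCALabel_T(v)$ (available inside $v$'s ID) shows that $y$ is not an ancestor of $v$, then $v\in U(x,y)$. The only interesting case is $v\in V_y$: here \Cref{lem:comp-path-sketch}(2) gives $x$ the compressed path $\pi_y^{*}(s,C_{y,v})$ for each neighbor $v$ of $x$, from which $x$ can check whether $x\in\pi_y(s,C_{y,v})$. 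If so, $x$ reads $u_{C_{y,v}}$ from this compressed path, and—if $x\neq u_{C_{y,v}}$—extracts the unique $T$-child $x'$ of $x$ appearing right after $x$ in that path; $x$ then consults its own $\widehat{T}_x$ for $\pi_x(s,C_{x,x'})$ and tests $y\in\pi_x(s,C_{x,x'})$, deciding $\mathcal{FS}(y,x)$ membership of $C_{y,v}$ exactly as in the proof of \Cref{cl:classifying-sensitivity}. The $v$'s in $\mathcal{FS}(y,x)$ light parts are then skipped by $x$ (they are handled by the first phase), and the rest go to $U(x,y)$ or $H_y$. This completes the per-edge classification that underlies the second step and finishes the proof.
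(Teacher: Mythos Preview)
Your approach is correct and reaches the same conclusion, but it is organised differently from the paper's proof, and there is one small glitch worth flagging.

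\textbf{Comparison.} The paper centralises the work at $y$ (and symmetrically at $x$): each $y$-neighbor in $\LDS(x,y)$ first announces itself to $y$, and then $y$ alone computes four cancellation sketches --- one per subtree $T_{x'}$ with $x'\in\LDS(x,y)$, one for $T_{x_h}$, one for $\LD(x)\setminus\LDS(x,y)$, and one for $U(x,y)\setminus V(T_x)$ --- and ships them to $x$ either through the subtrees (the per-$T_{x'}$ pieces) or over the channel (the remaining three). At $x$, these pieces are then reassembled per part. Your route is more ``distributed'' for the light parts (each vertex contributes its own edge to $\{x,y\}$ and aggregation happens bottom-up) and more ``direct'' for the shared parts (you have $x$ and $y$ each produce exactly one cancel-sketch per shared part and swap them). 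The price you pay for this directness is the neighbor-classification argument in your last paragraph, which the paper avoids: in the paper, $y$ never needs to decide whether a neighbor $v\in V_x$ sits in $H_x$, in a pseudo-sensitive component, or in a fully-sensitive light part --- it simply emits a per-$T_{x'}$ sketch and lets $x$ (who trivially knows its own $\mathcal{C}_x$ structure) do the sorting. Your classification via \Cref{lem:comp-path-sketch}(2) and the extraction of the child $x'$ from the compressed path is correct, so both routes work.

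\textbf{Minor glitch.} A light part $P\in\mathcal{FS}(x,y)$ is a connected component of $G[V_x]$ and can span \emph{several} subtrees $T_{x'_1},\ldots,T_{x'_k}$ rooted at distinct light children of $x$; the $T$-edges inside $P$ therefore form a forest, not a tree rooted at the leader. Your phrase ``aggregated bottom-up along the $T$-edges inside $P$ up to the light leader'' does not quite work as written. The obvious fix --- aggregate within each $T_{x'_i}$ and have every such $x'_i$ forward its subtotal to $x$, who sums them using (I*4) to identify the children of $P$ --- preserves your $\widetilde O(1)$ congestion (each tree edge lies in exactly one such subtree). Relatedly, a vertex $v\in\LDS(x,y)$ does not a priori know whether $C_{x,v}$ is fully- or pseudo-$y$-sensitive, so it cannot locally decide whether it is in a light part; you should first have $x$ broadcast the classification down the $\LDS(x,y)$ subtrees (using (I*4), which is already assumed), which stays within the allowed edges and budgets.
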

\def\APPENDINITSKETCHCANCEL{
\begin{proof}[Proof of \Cref{cl:init-sketch-cancel}]
	We show (I1-4) for $x$, and the proof for $y$ is symmetric.
	We start by letting all $y$-adjacent vertices in $\LDS(x,y)$ send messages to $y$ informing it that they belong to $\LDS(x,y)$ (which they know by \Cref{lem:comp-path-sketch}). This suffices for $y$ to locally compute the following information (where $E(y,S)$ denotes the set of edges connecting $y$ to a vertex in $S \subseteq V$):
		\begin{enumerate}
		\item $\CanSketch(E(y, V(T_{x'})))$ for every $T$-child $x'$ of $x$ such that $x' \in \LDS(x,y)$.
		\item $\CanSketch(E(y, V(T_{x_h})))$.
		\item $\CanSketch( E(y, \LD(x) \setminus \LDS(x,y)) )$.
		\item $\CanSketch( E(y, U(x,y) \setminus V(T_x)) )$.
	\end{enumerate}
	Next, we pass the above information to $x$, as follows. Items 2-4 consist only of $\widetilde{O}(1)$ bits, so they can be sent through the $xy$-channel. To handle item 1, we let $y$ send $\CanSketch(E(y, V(T_{x'})))$ to an arbitrary neighbor $v_{x'} \in T_{x'}$, for each $x'$ which is a $T$-child of $x$ satisfying $x' \in \LDS(x,y)$ (and $E(y, V(T_{x'})) \neq \emptyset$). These cancel-sketches are propagated upwards from the $v_{x'}$ vertices to $x$ in parallel, on the disjoint tree paths $\pi(x, v_{x'})$, which contain only edges incident to $\LDS(x,y)$. By the end of this process, $x$ has all the information of item 1.
	
	We are now ready to show (I1-3) (note that (I4) is trivial as it is the same as (I*4)).
	
	Consider first a light part $P \in \mathcal{FS}(x,y)$. Then $x$ can XOR its sketch $\Sketch_{G}(P)$ with $\CanSketch(E(y, V(T_{x'})))$ for every $T$-child $x'$ of $x$ such that $x' \in P$, using (I*4) and the information of item 1, thus eliminating all $y$-incident edges of $P$. For the same $x'$ vertices, it can further XOR this sketch with $\CanSketch(E(x, V(T_{x'})))$, which is easily computed locally. This eliminates also the $x$-incident edges of $P$, yielding $\Sketch_{G\setminus \{x,y\}} (P)$. This shows (I2).
	
	Next, consider the case where $H_x \in \mathcal{FS}(x,y)$, namely $H_x$ is the heavy part. Then $H_x$ is the union of the vertices in $T_{x_h}$ and in some subtrees $T_{x'}$ of $T$-children $x'$ of $x$ such that $x' \in \LDS(x,y)$. Therefore, $x$ can compute $\Sketch_{G\setminus \{x,y\}} (H_x)$ in the same manner done for light parts, but also using the information of item 2. If $H_y \in \mathcal{FS}(y,x)$, then a symmetric procedure can be applied by $y$, and then $\Sketch_{G\setminus \{x,y\}} (H_y)$ is sent to $x$ over the $xy$-channel. This shows (I3) (except for cases where the $s$-part is also a heavy part, but these are treated by (I1) next).
	
	Finally, consider the $s$-part $U(x,y)$. We show how $x$ can compute the cancellation-sketch of all edges $E(y, U(x,y))$. First, $x$ adds-up the cancellation-sketches of edges in $E(y, \LD(x) \setminus \LDS(x,y)) )$ and $E(y, U(x,y) \setminus V(T_x))$ from items 3 and 4. Next, in case $x_h \in U(x,y)$ (which $x$ knows by (I*4)), $x$ also adds the cancellation-sketch of $E(y, V(T_{x_h}))$ from item 3. At this point, it remains to add the cancellation-sketches of edges going from $y$ to $V(T_{x'})$ for every light $T$-child $x'$ of $x$ such that $C_{x,x'}$ is pseudo-$y$-sensitive (and therefore, $x' \in \LDS(x,y)$). Note that $x$ can determine all such $x'$ children using \Cref{cl:classifying-sensitivity} and (I*4), and add up the corresponding cancellation-sketches from item 1. This yields $\CanSketch(E(y, U(x,y)))$.
	A symmetric procedure allows $y$ to compute $\CanSketch(E(x, U(x,y)))$, and then send it to $x$ through the $xy$-channel. By XORing these last two cancellation sketches with $\Sketch_{G} (U(x,y))$, $x$ can obtain $\Sketch_{G\setminus \{x,y\}} (U(x,y))$. This shows (I1).
\end{proof}
}

\paragraph{Simulation of the $i^{th}$ Bor\r{u}vka Phase.} We now describe the execution of phase $i\geq 1$, assuming that at the beginning of this phase conditions (I1-4) hold w.r.t $\mathcal{P}_{i-1}$. The output of the phase will be the partitioning $\mathcal{P}_i$, for which we will show conditions (I1-4) hold as well.
Our goal is to let $x,y$ simulate a Bor\r{u}vka phase in which parts of $\mathcal{P}_{i-1}$ are merged along their outgoing edges. The main objective of this phase is to reduce the number of \emph{growable} parts by a constant factor, in expectation.
For efficiency of computation, we restrict the merges to have star shapes using random coins (see e.g., \cite{GhaffariH16}). Such star merges are obtained by letting each part of $\mathcal{P}_{i-1}$ toss a random coin, and allowing only merges centered on head-parts, each accepting incoming suggested merge-edges from tail-parts. The leader of this head-part becomes the leader of the merged part.

We now discuss the distributed implementation of the merges by the algorithm $\mathcal{A}_{x,y}^P$. Throughout, we use the following auxiliary claim which allows the vertices in every light part to exchange $\widetilde{O}(1)$ bits, in parallel. 
\begin{claim}\label{cl:light-communication}
    Assume that every vertex $v$ belonging to a light part in $\mathcal{P}_{i-1}$ holds a $\widetilde{O}(1)$-bit value $val(v)$. Within $\widetilde{O}(D)$ rounds with $\widetilde{O}(1)$ congestion, where communication is restricted to edges incident to $\LDS(x,y) \cup \LDS(y,x)$, the vertices of each light part $P$ can learn any aggregate function of the values in their part $\{val(v) \mid v\in P\}$.
\end{claim}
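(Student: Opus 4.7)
The plan is to carry out the aggregation in three stages, using the BFS tree $T$ as the primary aggregation structure with $x$ and $y$ acting as collection points for their respective sides $\LDS(x,y)$ and $\LDS(y,x)$. The key conceptual difficulty is that the $xy$-channel is forbidden in this claim, so $x$ and $y$ cannot directly exchange their side-aggregates; this will be overcome by using direct $G$-edges between $\LDS(x,y)$ and $\LDS(y,x)$ as ``bridges.''

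First I would perform, in parallel, a bottom-up aggregation of $val(\cdot)$ inside each subtree $T_{x'}$ for every light $T$-child $x' \in \LDS(x,y)$ of $x$ whose subtree lies in a light part, and symmetrically for each $T_{y'}$ with $y' \in \LDS(y,x)$. The crucial observation justifying a single $\widetilde{O}(1)$-bit aggregate $\sigma_{x'}$ per subtree is that every such $T_{x'}$ lies entirely in one component of $\mathcal{C}_x$ (components are unions of subtrees of light children of $x$) and therefore, because parts only coarsen during Bor\r{u}vka, lies entirely in one part of $\mathcal{P}_{i-1}$. Since $V(T_{x'}) \subseteq \LDS(x,y)$, all messages travel on allowed edges; because the subtrees are vertex-disjoint, the overall cost is $\widetilde{O}(D)$ rounds with $\widetilde{O}(1)$ congestion. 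Using invariant (I4), $x$ then locally combines the $\sigma_{x'}$'s by part-ID to form $\alpha_P = \mathrm{agg}\{val(v) : v \in P \cap \LDS(x,y)\}$ for every light part $P$ touching $\LDS(x,y)$; the symmetric computation at $y$ produces $\beta_P$.

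Second, $x$ broadcasts $(\alpha_P, \text{part-ID}(P))$ down each relevant $T_{x'}$, and $y$ similarly broadcasts $(\beta_P, \text{part-ID}(P))$ down each $T_{y'}$, again in $\widetilde{O}(D)$ rounds with $\widetilde{O}(1)$ congestion. Each vertex in a light part then exchanges its part-ID together with its side's partial aggregate with all its $G$-neighbors in $O(1)$ rounds. The key structural fact, shown by induction on the Bor\r{u}vka phases, is that every light part $P$ is connected using only edges incident to $\LDS(x,y) \cup \LDS(y,x)$: initial light parts are connected within a single side, and every subsequent star-merge adds an outgoing edge incident to this set. Consequently, whenever $P$ spans both sides there must exist a bridge edge $(u,w)$ with $u \in P \cap \LDS(x,y)$ and $w \in P \cap \LDS(y,x)$, and after the exchange both $u$ and $w$ hold $\alpha_P$ and $\beta_P$ and can locally compute $val(P)$.

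Third, each bridge propagates $val(P)$ up $T$ inside its own $T_{x'}$ (or $T_{y'}$) subtree. Even if a single subtree contains several bridges, they all agree on the same $val(P)$, so only one $\widetilde{O}(1)$-bit message per edge is needed, keeping the cost within $\widetilde{O}(D)$ rounds and $\widetilde{O}(1)$ congestion. After this, $x$ knows $val(P)$ for every two-sided light part $P$ and broadcasts it back down to $P \cap \LDS(x,y)$; $y$ symmetrically handles $P \cap \LDS(y,x)$. Parts residing entirely on one side need no bridging --- their $\alpha_P$ (or $\beta_P$) already equals $val(P)$ and was delivered in the second stage. The main obstacle throughout is the absence of the $xy$-channel, and the entire three-stage plan is essentially an induction-based bridging workaround for it.
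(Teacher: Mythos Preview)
Your proposal is correct and follows essentially the same three-stage approach as the paper's proof: aggregate within the child subtrees $T_{x'}$ (resp.\ $T_{y'}$) up to $x$ (resp.\ $y$), broadcast the side-aggregates $\alpha_P=val(P\cap V_x)$ and $\beta_P=val(P\cap V_y)$ back down, cross a direct $G$-edge between $P\cap V_x$ and $P\cap V_y$ to merge the two sides, and finally propagate the combined value back through $x$ and $y$. Your inductive justification for the bridge edge is slightly more elaborate than necessary: since every light part $P$ is a \emph{connected} subset that is entirely contained in $\LDS(x,y)\cup\LDS(y,x)$, the induced graph $G[P]$ is already connected using only allowed edges, so a $(P\cap V_x)$--$(P\cap V_y)$ edge exists whenever both sides are nonempty --- no induction over Bor\r{u}vka phases is needed.
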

\def\APPENDLIGHTCOMM{
	\begin{proof}[Proof of Claim \ref{cl:light-communication}]
		First, all vertices belonging to light parts can be informed of their part-ID, by propagating the information of invariant (I4) from $x,y$ to the subtrees of their children in light parts. By exchanging this information along neighbors, each vertex in a light part learns all its neighbors with the same part-ID.
		We will give a procedure for computing the aggregated value of any specific light part $P \in \mathcal{P}_{i-1}$, using only edges from $P \times (P \cup \{x,y\})$. Note that the edges used for different light parts are disjoint, and consist only of edges incident to $\LDS(x,y) \cup \LDS(y,x)$. Hence, the procedures for all light parts can be executed in parallel.
		
		It remains to describe the aforementioned procedure for some light part $P \in \mathcal{P}_{i-1}$. Let $P_x = P \cap V_x$ and $P_y = P \cap V_y$. Note that $P_x$ is the union of $T_{x'}$ subtrees, for the $T$-children $x'$ of $x$ in $P$. We start by aggregating the $val(v)$ values, in parallel, in each such $T_{x'}$ subtree. Each such $x'$ then sends the aggregated value of $T_{x'}$ to $x$. This allows $x$ to compute the aggregated value of $P_x$, which we denote by $val(P_x)$. This value is then broadcasted down the same subtrees, so it is learned by all vertices in $P_x$. A symmetric procedure allows the vertices of $P_y$ to learn $val(P_y)$. If any of $P_x$ or $P_y$ are empty, then we are done. Otherwise, $val(P_x)$ and $val(P_y)$ are exchanged along the edges in $P_x \times P_y$ (at least one such edge must exist, as the part $P$ is connected). This information is propagated upwards to $x$ and $y$ by the endpoint of these edges (ignoring repeated messages). Then, both $x,y$ can compute the final aggregated value $val(P)$, and broadcast it in parallel on each of their subtrees contained in $P$, concluding the procedure.
	\end{proof}
}

We divide the responsibility over the parts of $\mathcal{P}_{i-1}$ between $x$ and $y$ by letting each $z \in \{x,y\}$ be responsible for the parts whose leader is inside $T_z$. W.l.o.g., we make $x$ also responsible for the $s$-part in $\mathcal{P}_{i-1}$ in case its leader is $s$. We start by letting each $z \in \{x,y\}$ toss (locally) a fresh random coin for each of the parts under its responsibility. Denote by $\mathcal{P}_{i-1,z}^H$ and $\mathcal{P}_{i-1,z}^T$ the head and tail parts, respectively, for which $z$ is responsible. Next, $z$ locally computes an outgoing edge $(u_P, v_P)$ for each of its tail-parts $P \in \mathcal{P}_{i-1,z}^T$. Whenever $P$ is growable, such an edge is detected using $\Sketch_{G \setminus \{x,y\}} (P)$ with constant probability. The parts of $\mathcal{P}_i$ are formed by merging every head-part $P^*$ with all the tail-parts in $P$ whose outgoing edges point at $P^*$. The leader of the merged part is defined as the leader of the head-part $P^*$.
The following claim enables learning the part-IDs and coin-tosses of the parts to which the outgoing edges point.

\begin{claim}\label{cl:detect-legal-IDs}
	Within $\widetilde{O}(D)$ rounds with $\widetilde{O}(1)$ congestion, where communication is restricted to edges incident to $\LDS(x,y) \cup \LDS(y,x)$ and the $xy$-channel, each $z \in \{x,y\}$ can determine for all its tail-parts $P \in \mathcal{P}_{i-1,z}^T$  with an outgoing edge $e_P=(u_P,v_P)$, the following information: (i) the part-ID of the second endpoint $v_P \notin P$ and, (ii) the coin-toss of the part of $v_P$. 
\end{claim}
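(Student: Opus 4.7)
The plan is to combine three ingredients: a preliminary exchange via the $xy$-channel of the $O(1)$ pieces of truly global data; per-light-part dissemination using Claim~\ref{cl:light-communication} together with a single round of cross-$G$-edge broadcast by the vertices of light parts; and, only for the $O(1)$ non-light tail-parts of each $z$, individually routed queries along the $xy$-channel.

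First I would send along the $xy$-channel the labels $\LCALabel_T(x)$, $\LCALabel_T(y)$, and the part-IDs together with coin-tosses of the three non-light parts of $\mathcal{P}_{i-1}$ (the $s$-part, the $x$-heavy part, and the $y$-heavy part), using $\widetilde{O}(1)$ bits in total. Next, using Claim~\ref{cl:light-communication} in dissemination form, every $v$ in a light part of $\mathcal{P}_{i-1}$ learns its own part-ID and coin-toss; then, over each of its $G$-incident edges -- all of which are allowed since $v \in \LDS(x,y) \cup \LDS(y,x)$ -- $v$ transmits this pair to each of its $G$-neighbors. All of this costs $\widetilde{O}(D)$ rounds and $\widetilde{O}(1)$ congestion on allowed edges and on the channel.

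Each $z$ now handles every tail-part $P$ with outgoing edge $e_P=(u_P,v_P)$ by parsing $\LCALabel_T(v_P)$ out of $\EID_T(e_P)$ and doing case-analysis. If $v_P \in T_{x_h}$, $v_P \in T_{y_h}$, or $v_P \notin T_x \cup T_y$ (including $v_P = s$), then $v_P$'s part is the $x$-heavy, $y$-heavy, or $s$-part respectively, and the answer is read off the preliminary exchange. If $v_P \in T_{x'}$ for some light child $x'$ of $z=x$, I would use that $x$ already knows from $\widehat{T}_x$ whether $C_{x,x'}$ is $y$-sensitive: if not, then $C_{x,x'} \subseteq U(x,y)$, and since the $s$-part grows monotonically under merges $v_P$'s part must still be the $s$-part; if yes, the part-ID of $v_P$'s part coincides with the part-ID stored for $x'$ under (I4), and whenever that part-ID matches a non-light ID or the ID of a $T$-child of $x$ the coin-toss is known locally. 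The case $v_P \in T_{y'}$ with $z=y$ is symmetric.

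The only residual situations are those in which the above leaves some part-ID or coin-toss unknown to $z$. The plan is to split these by whether the tail-part $P$ is \emph{light} or \emph{non-light}. For a light $P$, we have $u_P \in \LDS(x,y) \cup \LDS(y,x)$, so $u_P$ either received $v_P$'s pair over $e_P$ in the cross-edge broadcast (when $v_P$ itself lies in a light part) or received nothing -- which by the same monotonicity argument forces $v_P$'s part to be the $s$-part. I would then invoke Claim~\ref{cl:light-communication} on $P$ twice: once to disseminate $\UID(e_P)$ into $P$ so that the unique vertex holding an incident $G$-edge with that UID identifies itself as $u_P$, and once to aggregate $u_P$'s assembled answer back to $z$; running these two calls in parallel across all of $z$'s light tail-parts keeps the cost at $\widetilde{O}(D)$ rounds and $\widetilde{O}(1)$ congestion. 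For a non-light $P$ (the $s$-part, $x$-heavy, or $y$-heavy under $z$), there are only $O(1)$ such tails per $z$ per phase, and $z$ simply sends a short packet containing $\EID_T(e_P)$ through the $xy$-channel; $\bar z$ answers using its own invariants together with its sensitivity data from Lemma~\ref{lem:learning-strong-sens}, and the cumulative traffic remains $\widetilde{O}(1)$ per channel edge. The central obstacle the plan is designed to overcome is that the number of tail-parts whose answer actually requires help from $\bar z$ may be as large as $\Omega(n)$, far too many to ship individually along the $xy$-channel; the cross-edge broadcast together with Claim~\ref{cl:light-communication} is exactly what lets us resolve all the light ones internally, reserving the channel for the $O(\log n)$ non-light tails.
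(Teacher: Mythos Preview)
Your overall architecture matches the paper's: exchange the $O(1)$ global items on the channel, disseminate inside light parts via Claim~\ref{cl:light-communication}, do a one-round cross-edge broadcast, aggregate back, and handle the $O(1)$ non-light tails by direct queries on the channel. But there is a real gap in your residual handling for \emph{light} tail-parts.

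Take $z=x$, a light tail-part $P\in\mathcal P_{i-1,x}^T$, and $v_P\in T_{y'}$ for some \emph{light} child $y'$ of $y$. Your case analysis at $x$ does not resolve this (you only treat $v_P\in T_{x'}$ when $z=x$), so it falls to the residual. There you assert that if $u_P$ received nothing in the cross-edge broadcast, then $v_P$'s part must be the $s$-part. This is false. Your broadcast is performed only by vertices whose \emph{current} part is light. But $v_P$ can lie in $\LDS(y,x)$ and yet belong to a non-light part: e.g.\ $C_{y,y'}\in\mathcal{FS}(y,x)$ starts as a light part in $\mathcal P_0$, but by phase $i-1$ may have merged into the $y$-heavy (or $x$-heavy) part. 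Then $v_P$ is silent, $u_P$ receives nothing, and your ``monotonicity'' conclusion is wrong; knowing $v_P\notin T_{x_h}\cup T_{y_h}$ does not help, since the heavy parts can have grown well beyond those subtrees. You cannot fall back on shipping such $v_P$'s through the channel either, since there may be $\Omega(n)$ light tail-parts in this situation.

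The paper closes this gap with one extra step you omit: before the cross-edge exchange, it lets \emph{every} vertex of $\LDS(x,y)\cup\LDS(y,x)$ (not only those in light parts) learn its current part-ID, by having $x$ and $y$ broadcast the (I4) information down the relevant $T$-subtrees. Then $v_P$, whenever it lies in $\LDS(x,y)\cup\LDS(y,x)$, can always reply with its part-ID even if its part is non-light; a null reply then genuinely forces $v_P\notin\LDS(x,y)\cup\LDS(y,x)$, which does imply $v_P\in T_{x_h}\cup T_{y_h}$ or $v_P$ is in a non-sensitive component (hence in $U(x,y)$). With this single modification your argument goes through essentially as in the paper.
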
 
\def\APPENDCLLEGAL{
	\begin{proof}[Proof of Claim \ref{cl:detect-legal-IDs}]
		We show this for $z = x$, and the proof for $z = y$ is similar.
		Consider first the (at most three) non-light tail-parts $P \in \mathcal{P}_{i-1,x}^T$. By using the $\LCALabel_T$-labels of the endpoints of $e_P$ (found in the $\EID_T (e_P)$), $x$ can determine if they lie in $V(T_x)$, in $V(T_y)$, or in $V \setminus (V(T_x) \cup V(T_y))$. The part-IDs of endpoints in $V(T_x)$ can be determined by $x$ using (I4). The part-ID of endpoints in $V \setminus V(T_x) \cup V(T_y)$ is the part-ID of the $s$-part, known to $x$ by (I1). In order to learn the part-IDs of the remaining endpoints in $V(T_y)$, $x$ sends all (at most six) such endpoints to $y$ through the $xy$-channel, and $y$ can determine their part-IDs using (I4) and report them back to $x$. At this point, $x$ has the information of (i) for its non-light tail-parts, i.e.\ it knows to which parts their outgoing edges point. If the coin-toss of such a pointed part is not known to $x$, then it can query $y$ for it using the $xy$-channel. This establishes (ii) for non-light $P \in \mathcal{P}_{i-1,x}^T$.
		
		It remains to handle the light tail-parts in $\mathcal{P}_{i-1,x}^T$. First, we let each vertex in $\LDS(x,y) \cup \LDS(y,x)$ learn its part-ID. This is done by using (I4) and broadcasting the information from $x,y$ along the subtrees of $T_x$ of $T_y$ whose union consists of $\LDS(x,y) \cup \LDS(y,x)$. Next, each $z \in \{x,y\}$ sends to the leaders of \emph{light} parts lying in $T_z$ their part-ID, coin-toss, and also the outgoing edge in case the latter is tail. This information is then broadcasted on all light parts in parallel using \Cref{cl:light-communication}. Within additional $\widetilde{O}(1)$ rounds, the $u_P$ vertices of light tail-parts inform their $v_P$ neighbors of this information. The $v_P$ vertices then reply by one of the three following options, where $P'$ is the part containing $v_P$.
		\begin{enumerate}
			\item The part-ID and coin-toss of $P'$. This happens if $P'$ is a light part.
			\item Only the part-ID of $P'$. This happens if $v_P \in \LDS(x,y) \cup \LDS(y,x)$, but $P'$ is non-light.
			\item A null message. This happens in all remaining cases.
		\end{enumerate}
		The replies, along with the identity of the $v_P$ vertices, are then broadcasted on the light parts using \Cref{cl:light-communication}, and sent to $x$ from the leaders of each tail-part $P \in \mathcal{P}_{i-1,x}^T$. If $v_P$ replied with option 1, then clearly $x$ learns both (i) and (ii) for $P$. As $x,y$ can share the coin-tosses of all (at most three) non-light parts on the $xy$-channel, $x$ can learn (i) and (ii) also if $v$ replied with option 2. Finally, assume $v_P$ responded with option 3. In case $v_P \in V(T_{x_h})$ (which $x$ can detect using $\LCALabel_T (v_P)$), then $v_P$ belongs to the $x$-heavy part. If $v_P \in V(T_{y_h})$, then $v_P$ belongs to the $y$-heavy part. Otherwise, $v_P$ must belong to $U(x,y)$, and thus to the $s$-part in $\mathcal{P}_{i-1}$, as it can either be outside $V(T_x)$ and $V(T_y)$, or in a non-sensitive component in $\mathcal{C}_x \cup \mathcal{C}_y$. As the part-IDs of the non-light parts are known to $x$ by (I1) and (I3), and their coin-tosses are also known (as in option 2), this yields (i) and (ii) also when $v_P$ responds with option 3.
	\end{proof}
}

The main issue in implementing the merges creating $\mathcal{P}_{i}$ is letting each $z \in \{x,y\}$ to learn the updated sketch information of the new parts formed by the merges, each identified with its center which is the corresponding head-part in $\mathcal{P}_{i-1,z}^H$. This is shown in the following claim. For an illustration of the proof, see \Cref{fig:promise}.
\begin{claim}\label{cl:merges}
	Within $\widetilde{O}(D)$ rounds with $\widetilde{O}(1)$ congestion, where communication is restricted to edges incident to $\LDS(x,y) \cup \LDS(y,x)$ and the $xy$-channel, each $z \in \{x,y\}$ can learn the $(G \setminus \{x,y\})$-sketch of each part of $\mathcal{P}_i$ whose center is in $\mathcal{P}_{i-1,z}^H$.
\end{claim}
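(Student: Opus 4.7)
The plan is to first observe that, because sketches are XORs over vertex contributions, any edge internal to a single part contributes twice in a sketch sum and cancels automatically; in particular, the outgoing edge $e_P=(u_P,v_P)$ that witnesses the merge of a tail-part $P$ into a head-part $P^*$ becomes internal to the merged part and cancels without any explicit edge-cancellation. Hence for each head-part $P^*$ under $z$'s responsibility, the desired new $(G \setminus \{x,y\})$-sketch is simply $\Sketch_{G \setminus \{x,y\}}(P^*) \oplus \bigoplus_{P} \Sketch_{G \setminus \{x,y\}}(P)$, the XOR over all tail-parts $P$ pointing to $P^*$. The task reduces to routing these tail sketches to the appropriate $z$.

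For the at most three \emph{non-light} head-parts, both $x$ and $y$ already know the head-part sketches by (I1) and (I3), as well as the sketches of the at most three non-light tail-parts. By (I2) and Claim~\ref{cl:detect-legal-IDs}, each $z$ further knows, for every light tail-part under its responsibility, both its sketch and the part-ID of its target (so in particular whether it targets one of the $O(1)$ non-light head-parts, whose part-IDs are shared via (I1)/(I3)). Each $z$ thus locally computes the XOR of the sketches of its light tail-parts per target non-light head-part, producing $O(1)$ sketch values that $x$ and $y$ exchange over the $xy$-channel. Combining these yields the new sketches of all non-light head-parts at both $x$ and $y$ using only $\widetilde{O}(1)$ bits on the $xy$-channel.

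For a \emph{light} head-part $P^*$ under $z$'s responsibility, $z$ already has $\Sketch(P^*)$, the sketches of non-light tail-parts pointing to $P^*$, and the sketches of light tail-parts under $z$'s own responsibility pointing to $P^*$, so it XORs these in locally. The only missing contributions are from light tail-parts $P$ under $z'$'s responsibility (where $\{z,z'\}=\{x,y\}$) whose outgoing edges point into $P^*$. I would gather these in four micro-steps. First, each $z'$ hands $\Sketch(P)$ to the leader of each light part $P$ under its responsibility (a $T$-child of $z'$), and we apply Claim~\ref{cl:light-communication} once (with value $\Sketch(P)$ at the leader and $0$ elsewhere, aggregated by XOR) so that every vertex of $P$ learns $\Sketch(P)$. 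Second, for each light tail-part $P$ whose outgoing edge $e_P=(u_P,v_P)$ targets a light head-part (identifiable via Claim~\ref{cl:detect-legal-IDs}), $u_P$ sends $\Sketch(P)$ to $v_P$; this uses each such edge once, and $e_P$ is incident to $\LDS(x,y) \cup \LDS(y,x)$ since both $u_P$ and $v_P$ belong to light parts. Third, each vertex $v$ in a light head-part XORs all sketches it received as its input value, and we invoke Claim~\ref{cl:light-communication} again within each light head-part to deliver the aggregate XOR to its leader. Fourth, the leader (a $T$-child of $z$) forwards this $\widetilde{O}(1)$-bit aggregate to $z$, which XORs it with its local contributions to obtain the required new sketch of the merged part centered at $P^*$.

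The round complexity is $\widetilde{O}(D)$ with $\widetilde{O}(1)$ congestion since Claim~\ref{cl:light-communication} is invoked only a constant number of times, each outgoing edge is used at most once, and the $xy$-channel carries only $\widetilde{O}(1)$ bits. The main obstacle is precisely the mismatch case where the head-part is light under $z$ and the tail-part is light under $z'$: there may be many such pairs in a phase, so the $xy$-channel cannot absorb them; the resolution exploits both the containment of light parts in $\LDS(x,y) \cup \LDS(y,x)$ and the intra-light-part aggregation of Claim~\ref{cl:light-communication} to route all such sketch transfers entirely through permitted $\LDS$-incident edges.
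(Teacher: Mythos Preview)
Your approach is essentially the paper's: XOR-aggregate tail-part sketches into their head-part centers, handling the $O(1)$ non-light parts via the $xy$-channel and the light parts via the intra-part broadcast/aggregation of Claim~\ref{cl:light-communication} together with the single hop $u_P \to v_P$. Two small presentation slips to watch: (i) you assert that $z$ already knows which non-light tail-parts point to $P^*$, but Claim~\ref{cl:detect-legal-IDs} only gives $z$ the targets of its \emph{own} tail-parts --- the paper explicitly ships the (at most three) non-light tail-part targets of $z'$ across the channel; (ii) your four micro-steps, read globally, deliver to each light head-part the XOR over \emph{all} light tail-parts pointing to it (both $x$'s and $y$'s), so additionally XORing in ``$z$'s own light tail-parts locally'' double-counts them --- the paper avoids this by letting the broadcast/aggregate mechanism handle all light tail-parts uniformly and having $z$ add only $\Sketch_{G\setminus\{x,y\}}(P^*)$ plus the non-light contributions.
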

\begin{proof}
	We show this for $z = x$, and the proof for $y$ is symmetric. We consider two types of star centers: non-light and light.
	
	\textbf{Non-Light Star Centers.} These correspond to the non-light parts in $P_{i-1,x}^H$. There are at most three such parts (at most two heavy parts, and the $s$-part). For each such non-light $P^* \in P_{i-1,x}^H$, the sketch of its corresponding star in $\mathcal{P}_i$ is
	\begin{gather*}
	\Sketch_{G \setminus \{x,y\}} (P^*) 
	\oplus
	\Big[ \bigoplus_{\substack{P \in \mathcal{P}_{i-1,x}^T \\ v_P \in P^*}} \Sketch_{G \setminus \{x,y\}} (P) \Big] 
	\oplus
	\Big[ \bigoplus_{\substack{P \in \mathcal{P}_{i-1,y}^T \\ v_P \in P^*}} \Sketch_{G \setminus \{x,y\}} (P) \Big] .
	\end{gather*}
	The first two terms can be computed locally by $x$. The third one, i.e.\ the sketch of all $y$-tail-parts pointing to $P^*$, can be computed by $y$ and then sent to $x$ through the $xy$-channel. Thus, $x$ can compute the sum above and obtain the sketch of the new part for $P^*$.
	
	\textbf{Light Star Centers.} It remains to merge the stars centered at \emph{light} parts in $\mathcal{P}_{i-1,x}^H$. We start by letting $y$ send $x$ the sketches of the (at most three) non-light tail-parts in $\mathcal{P}_{i-1,y}^T$, together with the part-IDs of their star centers, through the $xy$-channel. At this point, in order to compute the sketch of a star centered at a light part $P^* \in \mathcal{P}_{i-1,x}^H$, it is enough for $x$ to learn the sum-of-sketches of all \emph{light} tail-parts in $\mathcal{P}_{i-1}$ whose outgoing edge point at $P^*$, which we denote by $val(P^*)$. Namely, $x$ is required to learn
	$$
	val(P^*) = \bigoplus_{\substack{P \in \mathcal{P}_{i-1,x}^T \cup \mathcal{P}_{i-1,y}^T \\ v_P \in P^* \text{ and $P$ is light}}} \Sketch_{G \setminus \{x,y\}} (P)
	$$
	for each light $P^* \in \mathcal{P}_{i-1,x}^H$.
	
	This is done as follows. First, each $z \in \{x,y\}$ sends to the leader of each light part $P \in \mathcal{P}_{i-1,z}^T$ its sketch $\Sketch_{G \setminus \{x,y\}} (P)$ and its outgoing edge $(u_P, v_P)$. This information is then broadcasted over each light tail-part, in parallel, using \Cref{cl:light-communication}.
	Then, the $u_P$ vertices of the light tail-parts send this information to their $v_P$ neighbors. At this point, for each light star center $P^* \in \mathcal{P}_{i-1,x}^H$, the sketch of each light tail-part $P \in \mathcal{P}_{i-1}$ which points at $P^*$ is held by exactly one vertex $v_P \in P^*$ (the same vertex can be holding several such sketches). Thus, each $v \in P^*$ can locally compute
	$$
	val(v) = \bigoplus_{\substack{P \in \mathcal{P}_{i-1,x}^T \cup \mathcal{P}_{i-1,y}^T \\ \text{$P$ is light and $v = v_P$}}} \Sketch_{G \setminus \{x,y\}} (P).
	$$
	These values are summed on each light star center $P^* \in \mathcal{P}_{i-1,x}^H$ in parallel using \Cref{cl:light-communication}, letting all vertices of $P^*$ learn $\bigoplus_{v \in P^*} val(v) = val(P^*)$. The leaders of the light star centers $P^* \in \mathcal{P}_{i-1,x}^H$ can now send this information to $x$, as required.
\end{proof}

\begin{figure}
	\begin{center}
		\includegraphics[height=7cm]{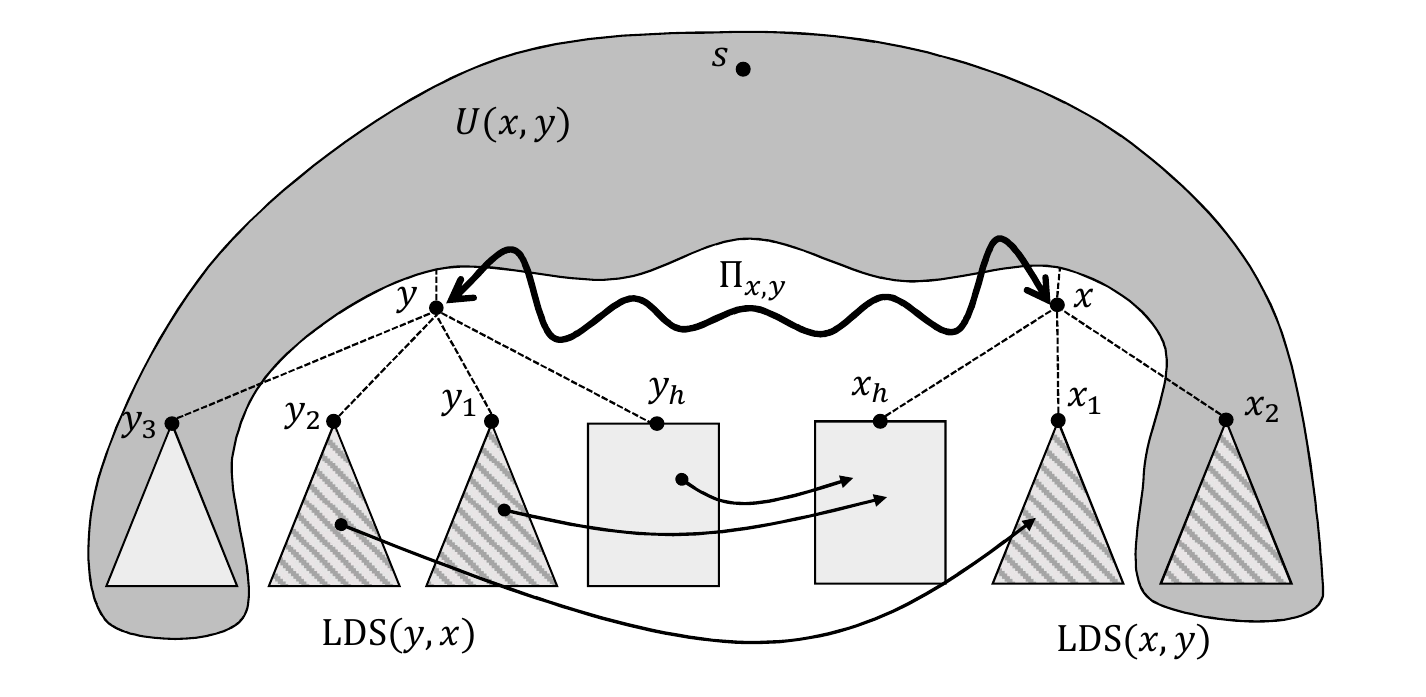}
		\caption{ \sf Illustration of the first Bor\r{u}vka phase in algorithm $\mathcal{A}^P_{x,y}$.
			Here, we assume each subtree of $x$ and $y$ forms a distinct component in $\mathcal{C}_x$ and $\mathcal{C}_y$, respectively.
			The components in $\LDS(x,y) \cup \LDS(y,x)$ are striped triangles.
			The heavy components are squares.
			The non-sensitive or pseudo-sensitive components are part of the $s$-part $U(x,y)$.
			The solid arrows correspond to the outgoing edges obtained from tail parts, pointing to head parts.
			The sketches of components $C_{y,y_2}$ and $C_{x,x_1}$ are merged via communication of the outgoing edge connecting them.
			The sum-of-sketches of ${C_{y,y_1}}$ and $H_{y}$ is sent from $y$ to $x$ on the $xy$-channel $\Pi_{x,y}$, allowing it to merge them with $H_x$.
			\label{fig:promise} 
		}
	\end{center}
\end{figure}

Finally, we show how conditions (I1-4) are satisfied w.r.t.\ $\mathcal{P}_i$. 
Condition (I2) is immediate by \Cref{cl:merges}. 
We next show conditions (I1) and (I3). In order for $x,y$ to learn the part-IDs of the non-light parts in the new partition $\mathcal{P}_i$, it is enough for them to learn the new part-ID given to each of the (at most three) non-light parts in the previous partition $P \in \mathcal{P}_{i-1}$. The vertex $z \in \{x,y\}$ which is responsible for $P$ knows its new-part ID by \Cref{cl:detect-legal-IDs} (and knows that $P$ is the $s$-part, $x_h$-part and/or $y_h$-part by (I1) and (I3) w.r.t.\ $\mathcal{P}_{i-1}$), so $z$ can share this over the $xy$-channel. Now, the $(G \setminus \{x,y\})$-sketches of each non-light part in the new partition $\mathcal{P}_i$ is known to either $x$ or $y$ by \Cref{cl:merges}, and sharing them over the $xy$-channel yields (I1) and (I3) w.r.t.\ $\mathcal{P}_i$.
Lastly, we show condition (I4). Let $z \in \{x,y\}$, let $z'$ be a child of $z$. By (I4) w.r.t.\ $\mathcal{P}_{i-1}$, $z$ knows to which part $z'$ belongs in $\mathcal{P}_{i-1}$.
If this part is non-light, then the new part-ID (of the part containing $z'$ in $\mathcal{P}_i$) is known by the previous step handling (I1) and (I3). It remains to handle the case where $z'$ belongs to a light part in $\mathcal{P}_{i-1}$. For this, we let $x,y$ inform the leaders of light parts under their responsibility in $\mathcal{P}_{i-1}$ of their new part-ID in $\mathcal{P}_i$ (which $x,y$ know by \Cref{cl:detect-legal-IDs}), and this information is broadcasted in parallel on all light parts in $\mathcal{P}_{i-1}$ using \Cref{cl:light-communication}. This informs each such $z'$ of its part-ID, and by sending it to $z$ we establish (I4) w.r.t. $\mathcal{P}_{i}$.

We are now ready to conclude the description of the algorithm $\mathcal{A}_{x,y}^P$ and prove \Cref{lem:connectivity-with-promise}.
\begin{proof}[Proof of \Cref{lem:connectivity-with-promise}]
	We have shown that the initialization of $\mathcal{A}_{x,y}^P$ and each of the Bor\r{u}vka phases are implemented in $\widetilde{O}(D)$ rounds with $\widetilde{O}(1)$ congestion, where communication is restricted only to edges incident to $\LDS(x,y) \cup \LDS(y,x)$ and the $xy$-channel.
	Next, we assert that the number of growable part reduces by a constant factor in each phase, in expectation. Indeed, given a sketch information $\Sketch_{G \setminus \{x,y\}}(P)$ for a growable part $P$, one can infer an outgoing edge $(u,v)$ from $P$ with constant probability. In addition, with probability $1/4$ this edge is valid (i.e., $P$ is a tail part and $v$ is in a head part).
	We now deduce, by Markov's inequality, that w.h.p.\ there is no growable part after $K = O(\log n)$ phases, so $\mathcal{P}_K$ are maximal connected components in $G \setminus \{x,y\}$. As $x,y$ jointly hold the part-IDs of all parts in $\mathcal{P}_K$, they can use the $xy$-channel to determine if there is more than one such part. The theorem follows.
\end{proof}

\subsubsection{Implementation Details for \Cref{sec:promise}}\label{sec:promise-imp-details}
\APPENDCLASSIFYINGSENSITIVITY
\APPENDMODIFIEDINVARIANTS
\APPENDINITSKETCHCANCEL
\APPENDLIGHTCOMM
\APPENDCLLEGAL

\subsection{Omitting the Promise}\label{sec:omit-promise}

Our goal now is omitting the promise of \Cref{sec:promise}. To this end, we first classify the independent pairs into two types, light and heavy, defined next. In this section, it is convenient to have the distinction between unorderd and ordered pairs (which was not significant in previous sections). We denote unordered pairs by $xy$, and ordered pairs by $\langle x,y \rangle$. 

\begin{definition}[Light and Heavy Pairs]\label{def:classification-pairs}
	An ordered independent pair $\langle x,y \rangle$ is called an \emph{ordered light pair} if there exists a component $C \in \mathcal{C}_x$ for which one of the following holds:
	\begin{itemize}
		\item $\pi_x (s,C)$ contains a \emph{light} $T$-edge $(y,y')$ such that $x \in \pi_y (s,C_{y,y'})$.
		\item  $y = u_C$. (Recall that $(u_C, v_C)$ is the last edge of $\pi_x (s,C)$, with $v_C \in C$, see beginning of \Cref{sec:indep}.)
	\end{itemize}
	Equivalently, $\langle x,y \rangle$ is an ordered light pair if there is exists a fully-$y$-sensitive component $C \in \mathcal{FS}(x,y)$ for which $(y,y_h) \notin \pi_x (s,C)$.
	
	An (unordered) independent pair $xy$ is called a \emph{light pair} if at least one of its orderings $\langle x,y \rangle, \langle y,x \rangle$ is an ordered light pair. Otherwise, it is called a \emph{heavy pair}.
\end{definition}

\begin{figure}
	\begin{center}
		\includegraphics[height=9cm]{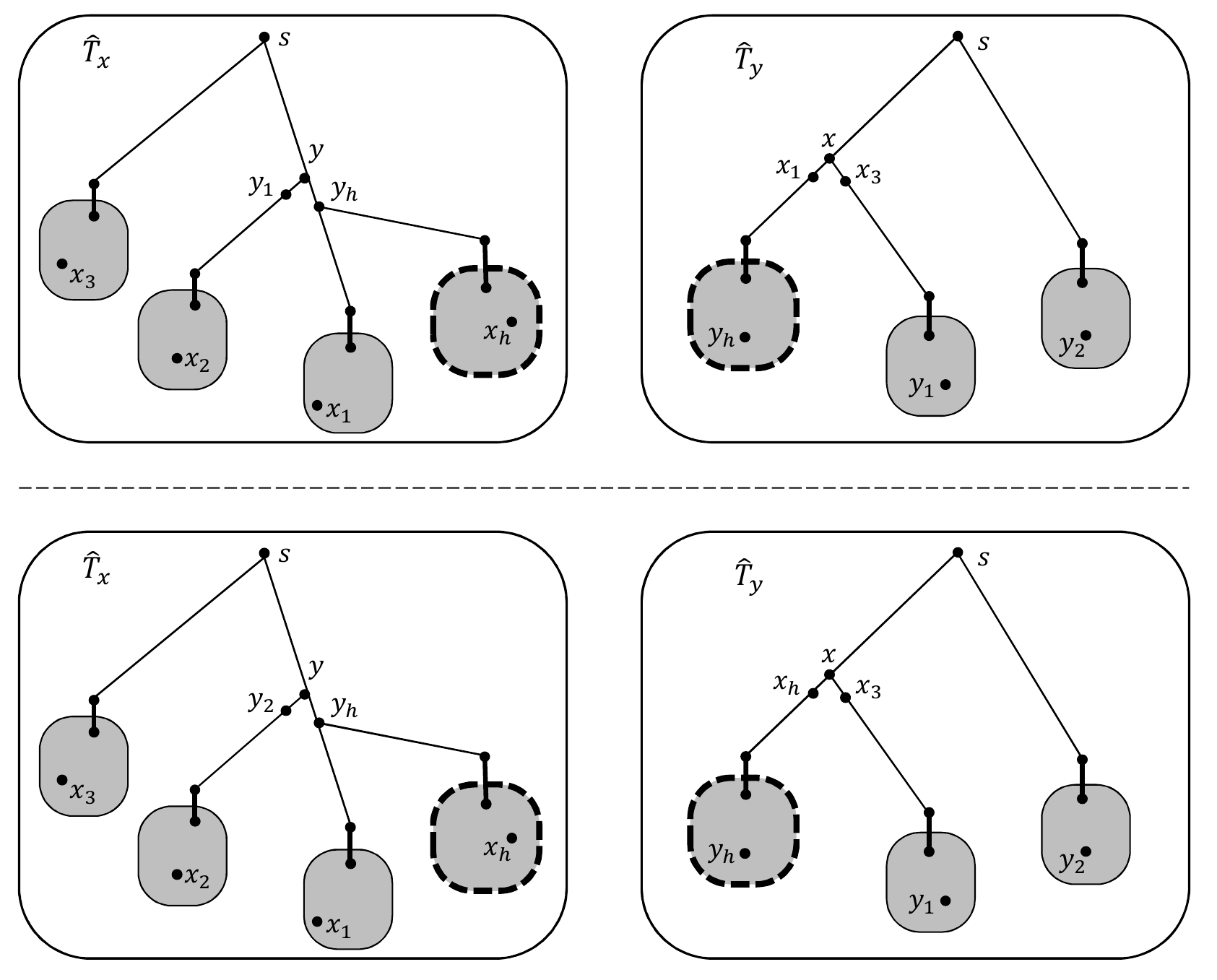}
		\caption{
        \sf Top: A light $xy$ pair.
        $C_{x,x_2}$ is a fully-$y$-sensitive component lying below the light edge $(y,y_1)$ in $\widehat{T}_x$.
		Bottom: A heavy $xy$ pair. All fully-$y$-sensitive components in $\mathcal{C}_x$ lie below the heavy edge $(y,y_h)$ in $\widehat{T}_x$, and vice-versa.
		\label{fig:light-heavy-pairs}
		}
	\end{center}
\end{figure}

By negating the definition of light pairs, we immediately get:
\begin{observation}\label{obs:heavy-pairs-property}
	An independent pair $xy$ is a heavy pair iff both of the following hold:
	\begin{itemize}
		\item $C \in \mathcal{FS}(x,y) \implies (y,y_h) \in \pi_x (s,C)$, and
		\item $C \in \mathcal{FS}(y,x) \implies (x,x_h) \in \pi_y (s,C).$
	\end{itemize}
\end{observation}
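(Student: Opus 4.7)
My plan is to obtain the observation by directly negating the definition of light pairs. Concretely, an independent pair $xy$ is heavy iff both orderings $\langle x,y\rangle$ and $\langle y,x\rangle$ fail to be ordered light pairs. By symmetry it suffices to understand when $\langle x,y\rangle$ is \emph{not} ordered light, and then apply the same reasoning with the roles of $x,y$ swapped to get the second bullet.

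For this I would rely on the ``equivalent'' characterization stated in \Cref{def:classification-pairs}: $\langle x,y\rangle$ is ordered light iff there exists $C\in\mathcal{FS}(x,y)$ with $(y,y_h)\notin \pi_x(s,C)$. Negating this gives exactly the first bullet of the observation. So really the only content is checking that this equivalent characterization itself is correct, since the rest is pure logical negation.

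To verify the equivalence, I would argue both directions. For the forward direction, assume one of the two bullets of \Cref{def:classification-pairs} holds for some $C$. In the first bullet, $\pi_x(s,C)$ contains a \emph{light} edge $(y,y')$, so $y'\neq y_h$, hence $(y,y_h)\notin \pi_x(s,C)$ (since the tree part $\pi(s,u_C)$ of $\pi_x(s,C)$ passes through $y$ exactly once, going to $y'$). The condition $x\in\pi_y(s,C_{y,y'})$ means $C_{y,y'}$ is $x$-sensitive, hence $C$ is fully-$y$-sensitive by \Cref{def:sensitive-components}, i.e.\ $C\in\mathcal{FS}(x,y)$. In the second bullet, $y=u_C$ immediately gives $C\in\mathcal{FS}(x,y)$, and $(y,y_h)\notin\pi_x(s,C)$ follows because $y=u_C$ is the penultimate vertex on $\pi_x(s,C)$ and the final edge $(u_C,v_C)$ has $v_C\in C\subseteq V_x$, while $y_h\in V_y$ is disjoint from $V_x$ by independence of $x,y$. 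For the reverse direction, given $C\in\mathcal{FS}(x,y)$ with $(y,y_h)\notin\pi_x(s,C)$, apply the definition of fully-$y$-sensitive: either $y=u_C$ (which is the second bullet), or there is a $T$-edge $(y,y')\in\pi_x(s,C)$ with $x\in\pi_y(s,C_{y,y'})$, and the condition $y'\neq y_h$ forces $(y,y')$ to be light, giving the first bullet.

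I do not anticipate any genuine obstacle here; the claim is essentially a definitional unpacking, and the only subtlety is making sure that when $y=u_C$ the edge $(y,y_h)$ does not sneak back into $\pi_x(s,C)$ through the non-tree edge $(u_C,v_C)$, which is ruled out by the independence of $x$ and $y$ (so $V(T_x)\cap V(T_y)=\emptyset$). Once the equivalence in \Cref{def:classification-pairs} is in hand, the observation follows by writing ``$xy$ heavy'' as the conjunction of ``$\langle x,y\rangle$ is not ordered light'' and ``$\langle y,x\rangle$ is not ordered light'' and negating each quantifier.
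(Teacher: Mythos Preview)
Your proposal is correct and follows exactly the paper's approach: the paper simply states ``By negating the definition of light pairs, we immediately get'' the observation, with no further proof. You supply the details the paper omits, including a careful verification of the ``equivalently'' clause in \Cref{def:classification-pairs}, and your handling of the $y=u_C$ case (using independence of $x,y$ to rule out $v_C=y_h$) is correct.
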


See \Cref{fig:light-heavy-pairs} for an illustration of light and heavy pairs.

\subsubsection{Light Pairs}
For every ordered light pair $\langle x,y \rangle$, we choose some arbitrary component $C^{\langle x,y \rangle} \in \mathcal{FS}(x,y)$ such that $(y,y_h) \notin \pi_x (s, C^{\langle x,y \rangle})$, which exists by \Cref{def:classification-pairs}, and define \emph{the $\langle x,y \rangle$-channel} by
\begin{gather}\label{eq:light-xy-channel}
	\Pi_{\langle x,y \rangle} = 
	\pi(x,v_{C^{\langle x,y \rangle}}) \circ (v_{C^{\langle x,y \rangle}},u_{C^{\langle x,y \rangle}}) \circ \pi(u_{C^{\langle x,y \rangle}},y)~. \nonumber
\end{gather}
Let 
$$
	\mathcal{P}_{light} = \{\Pi_{\langle x,y \rangle} \mid \text{$\langle x, y \rangle$ is an ordered light pair} \}.
$$
The following lemmas show that this path collection satisfies the promise, and can be efficiently learned in a distributed manner.

\begin{lemma}\label{lem:promise-light}
	The path collection $\mathcal{P}_{light}$ satisfies the promise. That is, each path in $\mathcal{P}_{light}$ has length $O(D)$, and each $G$-edge appears in at most $\widetilde{O}(D)$ many paths from $\mathcal{P}_{light}$.
\end{lemma}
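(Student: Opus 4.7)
My plan is to verify the two assertions of the promise separately, exploiting the explicit decomposition $\Pi_{\langle x,y \rangle} = \pi(x, v_C) \circ (v_C, u_C) \circ \pi(u_C, y)$, where $C = C^{\langle x,y \rangle}$. The length bound is immediate: each of the two tree paths has length at most $D$ since $T$ is a BFS tree of $G$, and the middle contributes one additional edge, so $|\Pi_{\langle x,y \rangle}| = O(D)$.

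For the congestion bound, I would fix an edge $e$ and bound, by $\widetilde{O}(D)$ in each case, the number of ordered light pairs $\langle x, y \rangle$ whose path uses $e$ in each of the three possible segments. \textbf{Case A} is when $e = (p, c)$ is a tree edge on $\pi(x, v_C)$. Then $x$ is an ancestor of $p$, giving $O(D)$ choices, and $v_C$ lies in $T_c$. The key observation is that $T_c$ is a connected subgraph of $G[V_x]$ and hence lies entirely inside a single component of $\mathcal{C}_x$, so at most one $C \in \mathcal{C}_x$ can have $v_C \in T_c$. Once $(x, C)$ is fixed, the ordered-light condition forces $y \in \{u_C\} \cup \LA(u_C)$, contributing $O(\log n)$ choices. \textbf{Case B} handles $e = (v_C, u_C)$: after fixing which endpoint of $e$ plays the role of $v_C$, the vertex $x$ must be a strict ancestor of that endpoint while not lying on the root path to the other endpoint, providing $O(D)$ choices for $x$; then $C$ is determined and $y$ again has $O(\log n)$ choices.

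The main obstacle is \textbf{Case C}, where $e = (p, c)$ is a tree edge on $\pi(u_C, y)$, because a priori the number of feasible $x$'s looks unconstrained. Since $\pi(u_C, y)$ is nontrivial, we are in the light-edge branch of \Cref{def:classification-pairs}, meaning $\pi_x(s, C)$ contains a light $T$-edge $(y, y')$ with $x \in \pi_y(s, C_{y, y'})$, and the vertex $c$ lies in $T_{y'}$. This forces $y \in \LA(c)$, contributing $O(\log n)$ choices of $y$. For each such $y$, the child $y'$ on $\pi(s, c)$ is uniquely determined, so the component $C_{y, y'} \in \mathcal{C}_y$ is fixed, and the full-sensitivity condition then confines $x$ to $\pi_y(s, C_{y, y'}) = \pi(s, u_{C_{y, y'}}) \circ (u_{C_{y, y'}}, v_{C_{y, y'}})$, a specific tree path of length $O(D)$ together with one extra vertex. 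This yields $O(D)$ choices for $x$, and thus $\widetilde{O}(D)$ pairs in total from Case C. Summing over the three cases gives the claimed $\widetilde{O}(D)$ congestion bound.
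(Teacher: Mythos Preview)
Your proof is correct and follows essentially the same approach as the paper: in both, the first segment pins down $x$ as one of $O(D)$ ancestors and then constrains $y$ to $\{u_C\}\cup\LA(u_C)$ (giving $O(\log n)$ choices), while the last segment pins down $y$ as one of $O(\log n)$ light ancestors and then constrains $x$ to the $O(D)$-length path $\pi_y(s,C_{y,y'})$. The only cosmetic difference is that the paper bounds appearances of \emph{internal vertices} rather than edges, which lets it fold your Case~B into the two segment cases (the endpoints $v_C$ and $u_C$ of the middle edge are already covered as vertices of the first and last segments, respectively).
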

\begin{proof}
	The length bound is immediate. For the second condition, it is enough to prove that every vertex in $G$ appears \emph{as an internal vertex} in at most $\widetilde{O}(D)$ many paths from $\mathcal{P}_{light}$. 
	We first observe that for an internal vertex $a \in \Pi_{\langle x,y \rangle}$, the following hold by the properties of $C^{\langle x,y \rangle}$:
	\begin{enumerate}
		\item If $a \in \pi(x,v_{C^{\langle x,y \rangle}})$, then $x$ is an ancestor of $a$, and $y \in \LA(u_{C_{x,a}}) \cup \{u_{C_{x,a}}\}$ (as $C_{x,a} = C^{\langle x, y \rangle}$).
		\item If $a \in \pi(u_{C^{\langle x,y \rangle}}, y)$, then $y \in \LA(a)$, and $x \in \pi_y(s, C_{y,a})$ (as $C_{y,a} = C_{y,y'}$, where $(y,y') \in \pi_x (s, C^{\langle x, y \rangle})$ is a light $T$-edge).
	\end{enumerate}
	By the first observation, each vertex $a$ can be internal in $\Pi_{\langle x, y \rangle}$ and appear on the first segment for $O(D)$ different $x$ vertices, and for $O(\log n)$ different $y$ vertices for each such $x$. From the second observation, $a$ can be internal in $\Pi_{\langle x, y \rangle}$ and appear in the last segment for $O(\log n)$ different $y$ vertices, and $O(D)$ different $x$ vertices for each such $y$.
	So, overall $a$ can be an internal vertex in at most $O(D \log n)$ paths from $\mathcal{P}_{light}$.
\end{proof}

\begin{lemma}\label{lem:light-channels}
	There is a randomized $\widetilde{O}(D)$-round algorithm such that w.h.p., for every ordered light pair $\langle x, y \rangle$, both $x$ and $y$ are informed of their pairing and of the vertices $v_{C^{\langle x,y \rangle}}, u_{C^{\langle x,y \rangle}}$. Consequently, $x$ and $y$ can route messages to each other on $\Pi_{\langle x,y \rangle}$.
\end{lemma}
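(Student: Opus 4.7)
The plan is to let each $x$ directly inform each of its ordered-light-pair partners $y$ by routing a single $\widetilde{O}(1)$-bit message along the corresponding path $\Pi_{\langle x,y\rangle}$. First I would observe that $x$ can locally enumerate these partners. By \Cref{def:classification-pairs}, the witness component $C = C^{\langle x,y\rangle}$ satisfies $(y,y_h) \notin \pi_x(s,C)$, so the hypothesis of \Cref{lem:learning-strong-sens} is met and $x$ knows $\pi^*_y(s,C_{y,y'})$ for every $T$-edge $(y,y') \in \pi_x(s,C)$; this is precisely the information needed to test whether $C \in \mathcal{FS}(x,y)$. Scanning $\widehat{T}_x$ (available from \Cref{lem:path-sketch}), $x$ therefore fixes, for each partner $y$, a component $C^{\langle x,y\rangle}$ and the associated endpoints $v_{C^{\langle x,y\rangle}}, u_{C^{\langle x,y\rangle}}$ (both of whose IDs and $\LCALabel_T$-labels are known to $x$ since the full path $\pi_x(s,C^{\langle x,y\rangle})$ is encoded in $\widehat{T}_x$).

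Next, $x$ will route the message $M_{\langle x,y\rangle} = \bigl(\ID(x),\ID(y),\LCALabel_T(v_{C^{\langle x,y\rangle}}),\LCALabel_T(u_{C^{\langle x,y\rangle}})\bigr)$ along $\Pi_{\langle x,y\rangle}$ in three segments: downward from $x$ inside $T_x$ to $v_{C^{\langle x,y\rangle}}$ (at each hop the current vertex picks the child whose $\LCALabel_T$-label is a prefix of $\LCALabel_T(v_{C^{\langle x,y\rangle}})$, using the neighbor-labels exchanged in the preprocessing), across the $G$-edge $(v_{C^{\langle x,y\rangle}},u_{C^{\langle x,y\rangle}})$, and then upward along $T$ to $y$ (stopping at the vertex whose ID matches the $\ID(y)$ field). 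Upon receipt, $y$ learns the pairing and the two endpoints, which together with what $x$ already holds enables either endpoint to subsequently route messages along $\Pi_{\langle x,y\rangle}$ in either direction.

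For the round-complexity analysis I would invoke \Cref{thm:delay}. The dilation of each individual delivery is $O(D)$ by \Cref{lem:promise-light}. The congestion on any $G$-edge equals the number of $\mathcal{P}_{light}$-paths passing through it times $\widetilde{O}(1)$ bits per path, which by \Cref{lem:promise-light} is $\widetilde{O}(D)$. Random-delay scheduling then completes all deliveries w.h.p.\ within $\widetilde{O}(D)$ rounds.

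I do not expect a major obstacle: the groundwork laid by \Cref{lem:learning-strong-sens} and \Cref{lem:promise-light} has been tailored to exactly this situation. The only mild point of care is the downward segment of the routing, where each intermediate vertex must locally decide which child to forward to; this is handled by the compressed-path labels that every vertex already holds for each of its neighbors from the preprocessing.
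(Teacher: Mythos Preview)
Your proposal is correct and follows essentially the same approach as the paper: have $x$ locally enumerate its ordered light partners via $\widehat{T}_x$ and \Cref{lem:learning-strong-sens}, then route an $\widetilde{O}(1)$-bit packet from $x$ to $y$ along $\Pi_{\langle x,y\rangle}$, invoking \Cref{lem:promise-light} for the dilation and congestion bounds and \Cref{thm:delay} for the scheduling. Your three-segment description of how intermediate vertices forward the packet using the $\LCALabel_T$-labels is a slightly more explicit version of the paper's one-line remark that any vertex on $\Pi_{\langle x,y\rangle}$ can determine the next edge from the labels contained in the message.
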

\begin{proof}
	First, observe that each vertex $x$ can identify all vertices $y$ such $\langle x,y \rangle$ is a light ordered pair, and choose the corresponding component $C^{\langle x,y \rangle}$, by using the information of its connectivity tree $\widehat{T}_x$ and of Lemma \ref{lem:learning-strong-sens}. The vertices $v_{C^{\langle x,y \rangle}}, u_{C^{\langle x,y \rangle}}$ can also be identified by $x$ from $\pi_x (s, C^{\langle x,y \rangle})$. Hence, our goal is to supply the $y$ vertices of each ordered light pair $\langle x,y \rangle$ with all the required information, which is the list $I_{\langle x,y \rangle} = [\langle x,y \rangle, v_{C^{\langle x,y \rangle}}, u_{C^{\langle x,y \rangle}}]$ consisting of $\widetilde{O}(1)$ bits (recall that each vertex is identified with its compressed-path, i.e.\ its $\LCALabel_T$-label). We do this by routing the message $I_{\langle x,y \rangle}$ from the source $x$ to target $y$ along $\Pi_{\langle x,y \rangle}$. Note that any vertex in $\Pi_{\langle x,y \rangle}$, upon receiving the message $I_{\langle x,y \rangle}$, can determine the next edge of $\Pi_{\langle x,y \rangle}$ along which $I_{\langle x,y \rangle}$ should be routed, by using the $\LCALabel_T$-labels found in $I_{\langle x,y \rangle}$. Thus, we essentially have an instance of the packet-routing problem, where the source-target pairs, are the ordered light pairs $\langle x,y \rangle$, the packets are the messages $I_{\langle x,y \rangle}$, and the paths are the channels $\Pi_{\langle x,y \rangle}$ respectively. The dilation and congestion are both $\widetilde{O}(D)$ by \Cref{lem:promise-light}. Hence, we can use the scheduling of \Cref{thm:delay} to execute this packet-routing instance within $\widetilde{O}(D)$, w.h.p.
\end{proof}

Finally, by combining \Cref{lem:promise-light,lem:light-channels} with \Cref{cor:allpairsconnectivity-with-promise}, we get that within $\widetilde{O}(D)$-rounds, w.h.p.\ all ordered light pairs (and thus, also the unordered light pairs) can determine if they form a cut pair in $G$.

\subsubsection{Heavy Pairs}

We now consider the more challenging task of connectivity checking for heavy pairs. Our strategy is based on classifying these pairs further as \emph{mutual} and \emph{non-mutual}. The special configuration of mutual pairs is dealt with in a similar fashion to light pairs, namely, implementing the promise for the algorithm of \Cref{sec:promise} by choosing bounded-congestion communication channels. Perhaps surprisingly, for the remaining non-mutual pairs it is sufficient for the vertices to collect a small amount of information over the tree $T$, which enables every non-mutual pair $xy$ to determine the connectivity of $G \setminus \{x,y\}$ by local computation in $x$ or $y$.

We start with some preliminary notions required for our definition of mutual pairs. For a vertex $x \in V$, let $\Gates(x) = \{v \in V \setminus V(T_x) \mid \text{$v$ has a neighbor in $H_x$}\}$, and let $\lcagates(x)$ be the lowest common ancestor of the vertices in $\Gates(x)$ (w.r.t.\ $T$). Observe that for the last edge $(u_{H_x}, v_{H_x})$ of $\pi_x (s, H_x)$ we have $u_{H_x} \in V \setminus V(T_x)$ and $v_{H_x} \in H_x$, hence $u_{H_x} \in \Gates(x)$. As $\lcagates(x)$ is an ancestor of any vertex in $\Gates(x)$, we obtain that $\lcagates(x) \in \pi(s, u_{H_x})$.

\begin{definition}
	A heavy pair $xy$ is called a \emph{mutual pair} if $\lcagates(x) = y$ and $\lcagates(y) = x$. Otherwise, $xy$ is a \emph{non-mutual pair}.
\end{definition}

The following claim concerns the collection of preliminary information required for the connectivity algorithm of both mutual and non-mutual pairs.

\begin{claim}\label{cl:lcagates}
	Within $\widetilde{O}(D)$ rounds, w.h.p.\ each $x \in V$ learns $\lcagates(x)$ and the following information for every $y \in V(\widehat{T}_x)$: (i) $\lcagates(y)$, (ii) $\pi_y^*(s, H_y)$, and (iii) $\Sketch_{G \setminus \{y\}} (H_y)$.
\end{claim}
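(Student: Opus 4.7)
The plan is to first compute at every $y \in V$ the triple $I(y) := [\lcagates(y), \pi_y^*(s, H_y), \Sketch_{G\setminus\{y\}}(H_y)]$, and then route each $I(y)$ to every $x$ with $y \in V(\widehat{T}_x)$ by piggy-backing on an augmented variant of the path-sketches from \Cref{lem:path-sketch}. Two of the three fields of $I(y)$ are immediate for $y$ to produce locally: $\pi_y^*(s, H_y) = \pi^*(s, u_{H_y}) \circ (u_{H_y}, v_{H_y})$ is the compression of a segment of $\widehat{T}_y$ already known to $y$ by \Cref{lem:path-sketch}, and $\Sketch_{G \setminus \{y\}}(H_y)$ equals $\bigl[\bigoplus_{y' \text{ child of } y,\, y' \in H_y} \Sketch_G(V(T_{y'}))\bigr] \oplus \CanSketch(E(y, H_y))$, where the children of $y$ in $H_y$ and the neighbors of $y$ in $H_y$ are recognized via items 1 and 3 of \Cref{lem:connecitivity-in-Gxs} and the cancellation sketch is formed using \Cref{lem:cancel-sketch-property}.

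The main technical step is producing $\lcagates(y)$ for every $y$. The key identity I would use is $\lcagates(y) = \mathrm{LCA}_T\{f_y(u) : u \in H_y\}$, where $f_y(u)$ denotes the $T$-LCA of those $G$-neighbors $v$ of $u$ lying outside $V(T_y)$ (null if no such $v$ exists). To set this up, each $y$ broadcasts the component-ID of $H_y$ down its subtree $T_y$ in a pipelined $\widetilde{O}(D)$-round phase; combined with the per-ancestor component-IDs each vertex holds via \Cref{lem:connecitivity-in-Gxs}, this lets every $u$ decide the membership $u \in H_y$ for each ancestor $y$. Since $v \notin V(T_y)$ is equivalent to $\depth_T(\LCA(u,v)) < \depth_T(y)$, and $u$ already holds $\LCALabel_T(v) = \pi^*(s,v)$ for each neighbor $v$ from the preprocessing, $u$ can locally produce $f_y(u)$ as an $\widetilde{O}(1)$-bit compressed path for every ancestor $y$. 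A single bottom-up pipelined LCA-aggregation on $T$ then delivers $\lcagates(y)$ at every $y$: at each vertex $v$, and for every ancestor-slot $y$, $v$ combines (via LCA on compressed paths) its own $f_y(v)$ (taken as null if $v \notin H_y$) with the aggregated values received from its children. Each tree edge carries $O(D)$ compressed paths totaling $\widetilde{O}(D)$ bits, so the aggregation finishes in $\widetilde{O}(D)$ rounds.

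To route the $I(y)$'s, each $y$ first broadcasts $I(y)$ down its subtree $T_y$ so that every vertex $u$ ends up holding $\{I(y') : y' \in \pi(s, u)\}$, an $\widetilde{O}(D)$-bit list deliverable in $\widetilde{O}(D)$ pipelined rounds. Each vertex then exchanges this list with all of its neighbors (an additional $\widetilde{O}(D)$ rounds, since each edge needs to carry $\widetilde{O}(D)$ bits each way). I then define augmented identifiers $\EID^{P'}_T(e=(u,v)) := [\EID^P_T(e), \{I(y') : y' \in \pi(s, u) \cup \pi(s, v)\}]$, still of size $\widetilde{O}(D)$, and rerun the path-sketch construction of \Cref{lem:path-sketch} using $\EID^{P'}_T$ in place of $\EID^P_T$. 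The resulting $\widetilde{O}(D)$-bit sketches are aggregated over $T$ in $\widetilde{O}(D)$ rounds exactly as before, and $x$ decodes, for every $C \in \mathcal{C}_x$, the augmented identifier of the outgoing edge $(u_C, v_C)$, reading off $I(y')$ for every $y' \in \pi(s, u_C) \cup \pi(s, v_C)$. Taking the union over $C \in \mathcal{C}_x$ covers $V(\widehat{T}_x) = \bigcup_{C}(\pi(s, u_C) \cup \{v_C\})$, as required; $x$ itself already learned $\lcagates(x)$ in the first stage.

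The hard part is the simultaneity of all the $\lcagates(\cdot)$ computations: each vertex is a descendant of up to $D$ different $y$'s and must contribute to each of their LCA aggregations, so naive solutions (separate aggregations per $y$, or forwarding raw external-neighbor sets) would blow up to $\poly(\Delta)$ or $\widetilde{O}(D^2)$ rounds. The trick that keeps us at $\widetilde{O}(D)$ is packaging each per-ancestor contribution as a single $\widetilde{O}(1)$-bit compressed-path value $f_y(u)$ that $u$ computes purely locally, and then running one pipelined bottom-up LCA pass on $T$ to aggregate all ancestor-slots in parallel.
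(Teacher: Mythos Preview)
Your proposal is correct and follows the paper's approach: a pipelined bottom-up LCA aggregation on $T$ to compute $\lcagates(\cdot)$ everywhere, followed by a rerun of the path-sketch aggregation of \Cref{lem:path-sketch} with identifiers augmented by the $I(y)$ lists so that decoding the outgoing edge of each $C\in\mathcal{C}_x$ delivers the data to $x$. One detail you should make explicit (the paper states it parenthetically): the rerun must reuse the \emph{same} random seeds as the original path-sketch pass, so that the decoded outgoing edge for each $C$ is indeed the $(u_C,v_C)$ that defined $\widehat{T}_x$; with fresh randomness you could decode a different edge and miss some $y\in V(\widehat{T}_x)$. As a minor aside, the paper organizes the $\lcagates$ aggregation slightly differently---it aggregates $\ell_{x,a}$ over all of $V(T_a)$ and defers the restriction to $H_x$ until the value reaches $x$ (who knows which children lie in $H_x$), thereby avoiding your preliminary broadcast of $H_y$'s component-ID; both variants fit in $\widetilde{O}(D)$ rounds.
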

\def\APPENDLCAGATES{
\begin{proof}
	We first describe the procedure letting each $x$ learn $\lcagates(x)$.
	For a (non-empty) $U \subseteq V$, denote by $\LCA(U)$ the lowest common ancestor of all vertices in $U$. We first importantly note that $\LCA$ is an aggregate function. Also, note that given the compressed-paths of the vertices in $U$, the compressed-path of $\LCA(U)$ can be deduced. Next, observe that a vertex $a$ can use the known compressed-paths of its $G$-neighbors to locally determine which of them belong to $V \setminus V(T_x)$, for each of its ancestors $x \in \pi(s,a)$.
	Hence, by executing bottom-up $\LCA$-aggregation on $T$ and using standard pipeline techniques, within $\widetilde{O}(D)$ rounds each vertex $a$ can learn, for each of its ancestors $x \in \pi(s,a)$, the compressed-path of 
	$$
	\ell_{x,a} := \LCA(\{v \in V \setminus V(T_x) \mid \text{$v$ has a neighbor in $V(T_a)$}\}).
	$$
	Within another $\widetilde{O}(1)$ rounds, each vertex $x$ can learn, for each of its $T$-children $x'$, the compressed-path of $\ell_{x,x'}$. By \Cref{lem:connecitivity-in-Gxs}, $x$ can detect which of its children $x'$ belongs to $H_x$, and thus it can locally compute the compressed path of $\lcagates(x)$, since it holds that
	$$
	\lcagates(x) = \LCA(\{\ell_{x,x'} \mid \text{$x'$ is a $T$-child of $x$ inside $H_x$}\}).
	$$
	This compressed-path enables $x$ to locate $\lcagates(x)$ in the tree $\widehat{T}_x$ (stored locally at $x$ by \Cref{lem:path-sketch}).
	
	To finish the proof, it remains to show the procedure letting each $x$ learn the information $I_y = [\lcagates(y), \pi_y^*(s, H_y), \Sketch_{G \setminus \{y\}} (H_y)]$ for every $y \in V(\widehat{T}_x)$.
	We first observe that each vertex $y$ holds $I_y$: The vertex $\lcagates(y)$ is known to $y$ by the previous part of the proof. $\pi_y^*(s, H_y)$ is known from $\widehat{T}_y$ (by \Cref{lem:path-sketch}). $\Sketch_{G \setminus \{y\}} (H_y)$ is obtained from $\Sketch_{G} (H_y)$ by \Cref{lem:cancel-sketch-property}.
	We let every vertex $y$ exchange $I_y$ with its neighbors. By propagating this information down on $T$ in a pipeline manner, each vertex $v$ learns $I_y$ for every $y \in \pi(s,v)$, within $\widetilde{O}(D)$ rounds. Finally, we run again Step 2 of the procedure for computing connectivity trees $\widetilde{T}_x$ in \Cref{sec:connectivity-trees} (with the same random seeds), only now augmenting the path-sketches further by including next to each stored path $\pi(s,u)$ the additional $\widetilde{O}(D)$ information of $I_y$ for each $y\in \pi(s,u)$. This lets $x$ learn $I_y$ for all vertices $y$ in the paths included in $\widehat{T}_x$ (which are obtained by the path-sketches), as required.
\end{proof}
}\APPENDLCAGATES

\paragraph{Handling Mutual Pairs.}
The mutual pairs $xy$ are handled almost exactly as the light pairs, by defining $xy$ channels satisfying the promise for the connectivity algorithms of \Cref{sec:promise}. 

First, notice that for a mutual pair $xy$, both $x$ and $y$ know the identity of their mutual-mate by \Cref{cl:lcagates}. To break the symmetry between $x$ and $y$, we assume w.l.o.g.\ that $x$ has the smaller ID, and define the $xy$ channel as
$$ \Pi_{x,y} = \pi(x, v_{H_x})  \circ (v_{H_x}, u_{H_x}) \circ \pi (u_{H_x}, y). $$
Let $\mathcal{P}_{mutual} = \{\Pi_{x,y} \mid \text{$xy$ is a mutual pair} \}$. Note that the mutual pairs form a matching in the sense, each vertex $x$ appears in at most one mutual pair (i.e., $xr(x)$).

We assert that $\mathcal{P}_{mutual}$ satisfies the promise. The fact that each path has length $O(D)$ is clear. Next, consider any vertex $a$. Then $a$ can appear in the first segment of a $\mathcal{P}_{mutual}$-path $\Pi_{x,y} = \Pi_{x,\lcagates(x)} = \Pi_{\lcagates(y), y}$ only if $x \in \pi(s,a)$, and in the second segment only if $y \in \pi(s,a)$. Thus, each vertex $a$ can appear in at most $O(D)$ many $\mathcal{P}_{mutual}$-paths, hence this is also true for edges.

By applying a procedure identical to \Cref{lem:light-channels} (only replacing $u_{C^{\langle x,y \rangle}}$ and $v_{C^{\langle x,y \rangle}}$ there with $u_{H_x}$ and $v_{H_x}$, respectively), we make sure that each mutual pair $xy$ can route messages on their channel $\Pi_{x,y}$. \Cref{cor:allpairsconnectivity-with-promise} then applies with $Q$ being the set of all mutual pairs, hence within $\widetilde{O}(D)$ rounds, w.h.p.\ each such $xy$ pair determines the connectivity of $G \setminus \{x,y\}$.

\paragraph{Handling Non-Mutual Pairs.}
Finally, we turn to handle the remaining non-mutual pairs. We will show that for each such pair $xy$, one of $x,y$ can apply a local computation for determining the connectivity of $G \setminus \{x,y\}$. To this end, we start with an observation regarding heavy pairs.

\begin{observation}\label{obs:heavy-pairs-connectivity}
	Let $xy$ be a heavy pair. Then $G \setminus \{x,y\}$ is connected iff there is some $z_h \in \{x_h, y_h\}$ that is connected to the source $s$ in $G \setminus \{x,y\}$.
\end{observation}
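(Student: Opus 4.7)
The ``only if" direction is immediate, since if $G \setminus \{x,y\}$ is connected then in particular every vertex of $V \setminus \{x,y\}$ (including both $x_h$ and $y_h$, which lie in this set as $x,y$ are independent) is connected to $s$. So the content of the statement lies in the ``if" direction, where I would assume w.l.o.g.\ that $x_h$ is connected to $s$ in $G \setminus \{x,y\}$ and aim to show every $v \in V \setminus \{x,y\}$ is connected to $s$ in $G \setminus \{x,y\}$.

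My plan is to partition $V \setminus \{x,y\}$ into four classes and show each connects to $s$, using the heavy-pair characterization of \Cref{obs:heavy-pairs-property}. First, any $v \in V \setminus (V(T_x) \cup V(T_y))$ is connected to $s$ via $\pi(s,v,T)$, which avoids $x,y$ since neither is an ancestor of $v$. Second, any component $C \in \mathcal{C}_x \setminus \mathcal{FS}(x,y)$ (and analogously $\mathcal{C}_y \setminus \mathcal{FS}(y,x)$) is connected to $s$ in $G \setminus \{x,y\}$ directly by \Cref{lem:non-or-pseudo-sensitive}. Note that the set $V \setminus \{x,y\}$ is covered by these together with the fully-sensitive components $\mathcal{FS}(x,y) \cup \mathcal{FS}(y,x)$, since $V_x$ (resp.\ $V_y$) is partitioned by $\mathcal{C}_x$ (resp.\ $\mathcal{C}_y$).

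The key step is handling the fully-sensitive components. Consider any $C \in \mathcal{FS}(y,x)$. By \Cref{obs:heavy-pairs-property} the heavy-pair property gives $(x,x_h) \in \pi_y(s,C) = \pi(s,u_C) \circ (u_C,v_C)$. Since $(u_C, v_C)$ is the only non-tree edge on this path, the tree-edge $(x,x_h)$ lies on $\pi(s,u_C)$, forcing $x_h$ to be an ancestor of $u_C$ in $T$, hence $u_C \in V(T_{x_h})$. Since the tree path from $x_h$ to $u_C$ stays in $V_x$ and avoids $x$, the vertex $u_C$ lies in the same component of $G[V_x]$ as $x_h$, namely $H_x$. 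Thus $C$ is connected via $(v_C,u_C)$ to $H_x$, which by assumption is connected to $s$ in $G \setminus \{x,y\}$.

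Symmetrically, every $C \in \mathcal{FS}(x,y)$ is connected in $G \setminus \{x,y\}$ to $H_y$. It remains only to show $H_y$ itself is connected to $s$ in $G \setminus \{x,y\}$: if $H_y \in \mathcal{FS}(y,x)$, this follows from the previous paragraph (since $H_y$ connects to $H_x$, which connects to $s$); otherwise $H_y \in \mathcal{C}_y \setminus \mathcal{FS}(y,x)$ and it connects to $s$ by \Cref{lem:non-or-pseudo-sensitive}. The case where $y_h$ (rather than $x_h$) is assumed connected to $s$ is fully symmetric. The main subtlety here is the routing through $H_y$: one must use the heavy-pair property \emph{in both directions} so that the chain $\mathcal{FS}(x,y) \to H_y \to \text{(everything else)} \to s$ and in particular $H_y \to H_x$ (when needed) can be closed.
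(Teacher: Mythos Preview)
Your proof is correct and follows essentially the same approach as the paper: reduce to the fully-sensitive components via \Cref{lem:non-or-pseudo-sensitive}, then use the heavy-pair characterization of \Cref{obs:heavy-pairs-property} to route each $C \in \mathcal{FS}(x,y)$ (resp.\ $\mathcal{FS}(y,x)$) through the heavy child $y_h$ (resp.\ $x_h$). The paper's version is slightly more streamlined: after handling $\mathcal{FS}(y,x)$ via $x_h$, it observes that \emph{all} of $V_y$ is now connected to $s$ (combining steps 2 and 3 in your decomposition), so $y_h$ is connected and the case split you make for $H_y$ is unnecessary --- but this is a cosmetic difference, not a substantive one.
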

\begin{proof}
	The `only if' direction is trivial.  We show the `if' direction for $z_h = y_h$ (the case $z_h = x_h$ is symmetric). By \Cref{lem:non-or-pseudo-sensitive}, the only vertices that can disconnect from $s$ in $G \setminus \{x,y\}$ are those of components in $\mathcal{FS}(x,y) \cup \mathcal{FS}(y,x)$.
	
	Consider first a component $C \in \mathcal{FS}(x,y)$. By \Cref{obs:heavy-pairs-property}, it holds that $(y,y_h) \in \pi_x(s, C)$. Hence, in $G \setminus \{x,y\}$, it is possible to get from $s$ to $C$ in $G \setminus \{x,y\}$ by first walking to $y_h$, and then continuing along $\pi_x(s, C)$. We can now conclude that all vertices in $V_x$ are connected to $s$ in $G \setminus \{x,y\}$.
	
	Finally, consider a component $C \in \mathcal{FS}(y,x)$. By \Cref{obs:heavy-pairs-property}, it holds that $(x,x_h) \in \pi_y(s, C)$. Then, as $x_h \in V_x$, it is possible to get from $s$ to $C$ in $G \setminus \{x,y\}$ by first walking to $x_h$, and then continuing along $\pi_y(s, C)$.
\end{proof}

Fix some vertex $x$. By the observation above, if $xy$ is a heavy-pair such that $G \setminus \{x,y\}$ is disconnected, then it must be that $y \in \pi_x (s,H_x)$. By \Cref{lem:light-channels}, $x$ knows all of its light-mates, hence it can determine all its heavy-mates $y \in \pi_x (s,H_x)$. We will show that for any heavy-mate $y \in \pi_x (s,H_x)$ such that $\lcagates(x) \neq y$, $x$ can determine the connectivity of $G \setminus \{x,y\}$ by local computation (after $\widetilde{O}(D)$ rounds of common preprocessing for all vertices). This implies that at least one vertex of every non-mutual pair can determine if it is a cut pair, as desired.

First, we assert that $x$ can disregard all heavy-mates $y$ lying (strictly) below $\lcagates(x)$ in $\pi_x (s,H_x)$, by the following observation.
\begin{observation}\label{obs:y_below_lcagates(x)}
	Let $x \in V$. If $y$ is a vertex lying below $\lcagates(x)$ in $\pi(s, u_{H_x})$, then the source $s$ is connected to $H_x$ in $G \setminus \{x,y\}$.
\end{observation}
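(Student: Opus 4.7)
My plan is to leverage the definition of $\lcagates(x)$ as the lowest common ancestor of the gate set $\Gates(x)$. The hypothesis is that $y$ lies strictly below $\lcagates(x)$ on the tree path $\pi(s, u_{H_x})$, so in particular $y \neq \lcagates(x)$ and $y$ is a proper descendant of $\lcagates(x)$ in $T$.

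The key observation is that $y$ cannot be a common ancestor of every vertex in $\Gates(x)$: if it were, then $y$ would be a common ancestor of all gates that is strictly lower in $T$ than $\lcagates(x)$, contradicting the minimality of $\lcagates(x)$. Hence I can pick some gate $g \in \Gates(x)$ such that $y \notin \pi(s,g,T)$. By the definition of $\Gates(x)$, we also have $g \in V \setminus V(T_x)$, i.e.\ $x$ is not an ancestor of $g$, so $x \notin \pi(s,g,T)$ as well.

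With this $g$ chosen, the tree path $\pi(s,g,T)$ is a path in $G \setminus \{x,y\}$ from $s$ to $g$. Since $g \in \Gates(x)$, there is a $G$-edge $(g, v)$ for some $v \in H_x$, and this edge is also present in $G \setminus \{x,y\}$ (as $v \in H_x \subseteq V_x$ implies $v \neq x$ and $v \neq y$ because $y \notin V_x$ by independence). Concatenating $\pi(s,g,T)$ with the edge $(g,v)$ yields the required path from $s$ to $H_x$ in $G \setminus \{x,y\}$, completing the proof.

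The only slightly subtle step is the first one — arguing rigorously that a strict descendant of $\lcagates(x)$ on $\pi(s, u_{H_x})$ fails to dominate all gates — but this follows directly from the LCA's minimality: any common ancestor of $\Gates(x)$ must lie on $\pi(s, \lcagates(x))$, hence be an ancestor of $\lcagates(x)$, which rules out proper descendants of $\lcagates(x)$. No case analysis or additional machinery should be needed.
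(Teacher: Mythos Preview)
Your proof is correct and essentially identical to the paper's: both argue that $y$, being strictly below $\lcagates(x)$, cannot be an ancestor of every vertex in $\Gates(x)$, pick a gate $a$ (your $g$) with $y\notin\pi(s,a)$, and use $\pi(s,a)\circ(a,b)$ for a neighbor $b\in H_x$ as the connecting path. One very minor remark: your justification ``$y\notin V_x$ by independence'' is fine in context, but the statement of the observation itself does not assume independence; the same conclusion follows more directly from the hypothesis $y\in\pi(s,u_{H_x})$ together with $u_{H_x}\in V\setminus V(T_x)$, which forces $y\notin V(T_x)$.
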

\begin{proof}
	First note that $y$ cannot be an ancestor of all vertices in $\Gates(x)$, since this would contradict $\lcagates(x)$ being the lowest such vertex. Hence, there exists some $a \in \Gates(x)$ such that $y \notin \pi(s,a)$. By definition of $\Gates(x)$, $a \in V \setminus V(T_x)$, and $a$ has a neighbor $b$ inside $H_x$. Thus the path $\pi(s, a) \circ (a,b)$ connects $s$ to $H_x$ in $G \setminus \{x,y\}$.
\end{proof}

We turn to handle the remaining relevant heavy-mates. Fix some heavy-mate $y$ of $x$ which is strictly above $\lcagates(x)$ in $\pi(s, u_{H_x})$ (recall that we only care now about $y \neq \lcagates(x)$). If $(y,y_h) \notin \pi_x(s, H_x)$, then by \Cref{obs:heavy-pairs-property} we have that $H_x \notin \mathcal{FS}(x,y)$, so by \Cref{lem:non-or-pseudo-sensitive} we have that $x_h$ is connected to $s$ in $G \setminus \{x,y\}$. Hence, by \Cref{obs:heavy-pairs-connectivity}, $G \setminus \{x,y\}$ connected, and we are done. So, from now on assume that $(y,y_h) \in \pi_x (s, H_x)$.  Denote by $\widehat{C}(x,y)$ the union of all components in $\mathcal{C}_x$ that are below $y_h$ in $\widehat{T}_x$, that is
$$
\widehat{C}(x,y) = \bigcup_{\substack{C \in \mathcal{C}_x \\ (y,y_h) \in \pi_x (s, C)}} C ~.
$$
We have the following key claim:

\begin{claim}\label{cl:outgoing-from-hat(C)-or-Hy}
	$G \setminus \{x,y\}$ is connected iff there is an outgoing edge from $\widehat{C}(x,y) \cup H_y$ in $G \setminus \{x,y\}$.
\end{claim}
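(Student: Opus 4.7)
The forward direction is almost immediate once we observe that $S := \widehat{C}(x,y) \cup H_y$ is a \emph{proper} subset of $V \setminus \{x,y\}$. Since the pair $x,y$ is independent, $y \neq s$, and the parent of $y$ in $T$ lies outside $V(T_x) \cup V(T_y) \supseteq S$, so it witnesses the proper inclusion; connectivity of $G \setminus \{x,y\}$ then forces an outgoing edge from $S$.

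For the converse, I plan to first observe that $S$ is internally connected in $G \setminus \{x,y\}$: every component $C \in \widehat{C}(x,y)$ is linked to $y_h \in H_y$ via the tail of $\pi_x(s,C)$ from $y_h$ down to $v_C$, which lies entirely inside $V(T_{y_h}) \cup C$ and avoids $\{x,y\}$. Given this, the backward direction reduces to the subclaim that every $v$ in $T_{out} := V \setminus (\{x,y\} \cup S)$ is connected to $s$ in $G \setminus \{x,y\}$: if so, any outgoing edge $(a,b)$ of $S$ has $b \in T_{out}$ reaching $s$, hence $a \in S$ reaches $s$, and (combining internal connectivity of $S$ with the subclaim) all of $G \setminus \{x,y\}$ is $s$-connected. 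I would prove the subclaim by splitting $v$ into three cases. If $v \in V \setminus (V(T_x) \cup V(T_y))$, the tree path $\pi(s,v)$ itself avoids $\{x,y\}$. If $v$ lies in some $C \in \mathcal{C}_x \setminus \widehat{C}(x,y)$, then \Cref{obs:heavy-pairs-property} (applied to the heavy pair $xy$) forces $C \notin \mathcal{FS}(x,y)$, so \Cref{lem:non-or-pseudo-sensitive} provides the required path. The remaining case is $v$ in some $C' \in \mathcal{C}_y \setminus \{H_y\}$, which again reduces to \Cref{lem:non-or-pseudo-sensitive} once $C' \notin \mathcal{FS}(y,x)$ is established.

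The main obstacle is this last step, and it is precisely where the standing hypothesis that $y$ is \emph{strictly} above $\lcagates(x)$ on $\pi(s,u_{H_x})$ enters. Suppose for contradiction that $C' \in \mathcal{FS}(y,x) \setminus \{H_y\}$. By \Cref{obs:heavy-pairs-property}, $(x,x_h) \in \pi_y(s,C')$, which forces $u_{C'} \in V(T_{x_h}) \subseteq H_x$; hence $v_{C'} \in C'$ is adjacent (from outside $V(T_x)$) to a vertex of $H_x$, placing $v_{C'} \in \Gates(x)$. On the other hand, the hypothesis puts $\lcagates(x)$ strictly below $y$ on the tree path through $y_h$, so $\lcagates(x) \in V(T_{y_h})$ and therefore $\Gates(x) \subseteq V(T_{\lcagates(x)}) \subseteq V(T_{y_h}) \subseteq H_y$. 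This forces $v_{C'} \in H_y$, contradicting $C' \neq H_y$, and closes the plan.
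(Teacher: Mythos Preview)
Your proof is correct and follows essentially the same approach as the paper's: both argue that $\widehat{C}(x,y) \cup H_y$ is internally connected via $y_h$, then split the complement into the three cases (outside both subtrees; in $\mathcal{C}_x \setminus \widehat{C}(x,y)$; in $\mathcal{C}_y \setminus \{H_y\}$), invoking \Cref{obs:heavy-pairs-property} and \Cref{lem:non-or-pseudo-sensitive} for the latter two, and deriving the contradiction in the third case from $v_{C'} \in \Gates(x)$ together with $\lcagates(x) \in V(T_{y_h})$. The only cosmetic difference is that you phrase the contradiction as $\Gates(x) \subseteq V(T_{y_h}) \subseteq H_y$ forcing $v_{C'} \in H_y$, whereas the paper phrases it as $v_{C'} \notin V(T_{y_h})$ forcing $\lcagates(x) \notin V(T_{y_h})$; these are the same deduction read in opposite directions.
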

\begin{proof}
	The `only if' direction is trivial. For the `if' direction, first note that the vertices of $\widehat{C}(x,y) \cup H_y$ are all in the same connected component in $G \setminus \{x,y\}$, since they are all connected to $y_h$ in $G \setminus \{x,y\}$.
	Therefore, it is enough to show that every $a \in V \setminus (\widehat{C}(x,y) \cup H_y \cup \{x,y\})$ is connected to $s$ in $G \setminus \{x,y\}$. If $a$ is not a descendant of $x$ or $y$ in $T$ then this is obvious. If $a$ is a descendant of $x$, then it belongs to some component $C \in \mathcal{C}_x$ such that $(y,y_h) \notin \pi_x (s,C)$, so by \Cref{obs:heavy-pairs-property} (as $xy$ is a heavy pair) we have $C \notin \mathcal{FS}(x,y)$, and by \Cref{lem:non-or-pseudo-sensitive} $C$ is connected to $s$ in $G \setminus \{x,y\}$. It remains to consider $a$ which is a descendant of $y$, in some component $C \in \mathcal{C}_y$, $C \neq H_y$. It suffices to show that $(x,x_h) \notin \pi_y (s,C)$, as this would imply that $C$ and $s$ are connected in $G \setminus \{x,y\}$ by using \Cref{obs:heavy-pairs-property} and \Cref{lem:non-or-pseudo-sensitive} (similarly to the previous case). Assume towards a contradiction that $(x,x_h) \in \pi_y (s,C)$. Then for the last edge $(u_C, v_C)$ of $\pi_y (s,C)$, we have $u_C \in H_x$ and $v_C \in V \setminus V(T_x)$. Hence, $v_C \in \Gates(x)$, and thus $\lcagates(x)$ is an ancestor of $v_C$ in $T$. As $v_C \notin V(T_{y_h})$ (since $C \neq H_y$), it follows that $\lcagates(x) \notin V(T_{y_h})$. But this is a contradiction: as $(y,y_h) \in \pi(s, u_{H_x})$ and $y$ is strictly above $\lcagates(x)$ in this path, we have that $\lcagates(x) \in V(T_{y_h})$.
\end{proof}

By this last claim, in order to determine the connectivity of $G \setminus \{x,y\}$, it is enough for $x$ to have $\Sketch_{G \setminus \{x,y\}} (\widehat{C}(x,y) \cup H_y)$. Indeed, by using \Cref{lem:sketch-property} with $O(\log n)$ fresh basic sketch units\footnote{Recall that each sketch contains $L=c\log n$ basic sketch units. Hence, by taking $c$ to be a sufficiently large constant, we can guarantee that $O(\log n)$ fresh basic sketch units exist.}, $x$ can determine w.h.p.\ if such an outgoing edge exists. The only obstacle is canceling the edges of $x,y$ from the sketches, as $\Sketch_{G \setminus \{x\}} (\widehat{C}(x,y))$ can be easily computed locally in $x$, and $\Sketch_{G \setminus \{y\}} (H_y)$ is known to $x$ by \Cref{cl:lcagates}. The following technical lemma gives a procedure allowing $x$ to overcome this obstacle.

\begin{lemma}\label{lem:cancel-xy}
	Within $\widetilde{O}(D)$ rounds, each vertex $x \in V$ can learn $\Sketch_{G \setminus \{x,y\}} (\widehat{C}(x,y) \cup H_y)$ for each heavy-mate $y$ of $x$ lying above $\lcagates(x)$ in $\pi(s, u_{H_x})$.
\end{lemma}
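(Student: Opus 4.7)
By linearity of sketches and the disjointness $\widehat{C}(x,y) \subseteq V_x$, $H_y \subseteq V_y$ (which holds because $xy$ is independent), the target decomposes as
\[
\Sketch_{G \setminus \{x,y\}}(\widehat{C}(x,y) \cup H_y) \,=\, \Sketch_{G \setminus \{x\}}(\widehat{C}(x,y)) \oplus \Sketch_{G \setminus \{y\}}(H_y) \oplus A_1(x,y) \oplus A_2(x,y),
\]
where $A_1(x,y) := \CanSketch(E(y,\widehat{C}(x,y)))$ and $A_2(x,y) := \CanSketch(E(x,H_y))$. The first summand is locally computable at $x$, since $x$ holds $\Sketch_{G\setminus\{x\}}(C)$ for every $C \in \mathcal{C}_x$ (from its preprocessing plus \Cref{lem:cancel-sketch-property}) and reads membership in $\widehat{C}(x,y)$ off $\widehat{T}_x$. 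The second summand is supplied by \Cref{cl:lcagates}. Thus the task reduces to computing $A_1(x,y)$ and $A_2(x,y)$ at $x$ for every relevant heavy-mate $y$ of $x$, within $\widetilde O(D)$ rounds.

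I would begin with a short preprocessing. First, augment the downward pipeline of \Cref{cl:lcagates} so that every vertex $u$ also learns the component-ID of $H_y$ for each $T$-ancestor $y$ of $u$ (this ID is $\widetilde{O}(1)$ bits and is known to $y$ from \Cref{lem:connecitivity-in-Gxs}(1)). In parallel, every $y$ announces $\pi^*(s,y_h)$ to its $G$-neighbors in one round. After these $\widetilde{O}(D)$-round tasks, any neighbor $u$ of $x$ can locally decide $[u \in H_y]$ for each $T$-ancestor $y$ of $u$, by comparing the learned $H_y$-ID with its own $C_{y,u}$-ID from \Cref{lem:connecitivity-in-Gxs}(3); symmetrically, any neighbor $w$ of $y$ can locally decide $[w \in \widehat{C}(x,y)]$ for each $T$-ancestor $x$ of $w$ via the compressed-path prefix check ``$\pi^*(s,y_h)$ is a prefix of $\pi^*(s,u_{C_{x,w}})$'', where the latter compressed path is supplied by \Cref{lem:comp-path-sketch}.

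To deliver $A_2(x,y)$ to $x$, each neighbor $u$ of $x$ packs its $\widetilde{O}(D)$-bit indicator vector $([u \in H_y])_{y \in \pi(s,u)}$ and ships it to $x$ along the edge $(u,x)$; these transmissions run in parallel across $G$-edges in $\widetilde{O}(D)$ rounds, after which $x$ locally sets $A_2(x,y) = \bigoplus_{u \,:\, [u \in H_y]} \EID(x,u)$ for each relevant $y$. Symmetrically, each $w$ sends $([w \in \widehat{C}(x,y)])_{x \in \pi(s,w)}$ to $y$ over the edge $(w,y)$, enabling every $y$ to locally assemble $A_1(x,y) = \bigoplus_{w \,:\, [w \in \widehat{C}(x,y)]} \EID(y,w)$ for every $x$ to which $y$ contributes. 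The final step is to route each $A_1(x,y)$ value from $y$ back to $x$ along the $T$-path $\pi(y,x)$, with partial XOR-aggregation at intermediate vertices whenever two flows with the same $(x,y)$ label converge.

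The main obstacle is controlling the congestion of this last routing step on $T$. I would exploit the structural restriction that every relevant $y$ must lie on the specific root-path $\pi(s,u_{H_x})$ above $\lcagates(x)$ --- a path $x$ already holds via $\widehat{T}_x$, so $x$ knows in advance the at most $O(D)$ heavy-mate candidates for which it demands a value --- together with the heavy-light decomposition's bound on how many distinct heavy-mate pair routes cross any single $T$-edge. An application of \Cref{thm:delay} then schedules the aggregate routing within $\widetilde{O}(D)$ rounds.
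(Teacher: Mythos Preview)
Your decomposition is sound, and the computation of $A_2(x,y)$ is essentially what the paper does for its third term. The gap is in delivering $A_1(x,y) = \CanSketch(E(y,\widehat{C}(x,y)))$ to $x$. You correctly flag the routing step as ``the main obstacle,'' but the argument you sketch does not close it. The restriction that each $x$ has only $O(D)$ relevant heavy-mates $y$ bounds the number of pairs \emph{per $x$}, not \emph{per $y$}. From $y$'s side, every neighbor $w$ of $y$ can contribute a distinct light ancestor $x \in \LA(w)$ for which $y$ may need to emit $A_1(x,y)$; in the worst case this is $\Theta(\deg(y))$ distinct destinations, all of which (since $x$ and $y$ are independent) must exit $y$ through the single $T$-edge to its parent. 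Your appeal to a ``heavy-light decomposition's bound on how many distinct heavy-mate pair routes cross any single $T$-edge'' is not an argument; no such bound is established, and I do not see one that yields $\widetilde O(D)$ here.

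The paper avoids this routing entirely via two ideas you are missing. First, because $y$ lies strictly above $\lcagates(x)$, we have $E(y,H_x)=\emptyset$, so the $y$-cancellation concerns only $\widehat{C}(x,y)\setminus H_x \subseteq \LD(x)$. Second, this lets one compute $\Sketch_{G\setminus\{y\}}(V(T_{x'}))$ for every light child $x'$ of $x$ and every candidate $y$ by a single bottom-up aggregation: each vertex $v$ stores an $O(\log n)\times O(D)$ matrix whose $(i,j)$ entry is $\Sketch_{G\setminus\{y_{ij}(v)\}}(v)$, where $i$ indexes the light ancestors of $v$ and $j$ indexes positions along $\pi_{x_i(v)}(s, C_{x_i(v),v})$. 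XOR-aggregating these $\widetilde O(D)$-bit matrices over $T$ gives every $x$, for each light child $x'$, the sketch $\Sketch_{G\setminus\{y\}}(V(T_{x'}))$ for all relevant $y$ simultaneously---no $y\!\to\! x$ routing needed. The pieces $\Sketch_{G\setminus\{x,y\}}(H_x)$ and $\Sketch_{G\setminus\{x,y\}}(H_y)$ are then handled as you do (the former equals $\Sketch_{G\setminus\{x\}}(H_x)$ by the $E(y,H_x)=\emptyset$ observation).
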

\def\APPENDCANCELXY{
\begin{proof}
	Note that 
	\begin{gather*}
		\Sketch_{G \setminus \{x,y\}} (\widehat{C}(x,y) \cup H_y) = \\
		\Sketch_{G \setminus \{x,y\}} (\widehat{C}(x,y) \setminus H_x)
		 \oplus \Sketch_{G \setminus \{x,y\}} (H_x) 
		 \oplus \Sketch_{G \setminus \{x,y\}} (H_y) ~.
	\end{gather*}
	
	We show how the three terms on the right hand side can be computed by every $x$, for each heavy-mate $y$ above $\lcagates(x)$ in $\pi(s, u_{H_x})$.
	
	For the first term, we start by giving an $\widetilde{O}(D)$-round procedure letting every vertex $v \in V$ learn $\Sketch_{G \setminus \{y\}} (V(T_v))$ for every $y \in \bigcup_{x \in \LA(v)} \pi_x (s, C_{x,v})$. Denote by $x_i (v)$ the $i^{th}$ highest vertex in $\LA(v)$, and by $y_{ij} (v)$ the $j^{th}$ highest vertex in $\pi_{x_i (v)} (s, C_{x_i (v), v})$. Let $S(v)$ be a matrix where the $ij$ entry, for $1 \leq i \leq O(\log n)$, $1 \leq j \leq O(D)$, is the bitstring of length $\ell = \widetilde{O}(1)$ defined by
	$$
	[S(v)]_{ij} = \begin{cases}
		\text{if $y_{ij} (v)$ is defined:} & \Sketch_{G \setminus \{y_{ij} (v)\}} (v)\\
		\text{otherwise:} & 0^{\ell} \text{ (string of $\ell$ zeros) }
	\end{cases}
	$$
	By \Cref{lem:comp-path-sketch,lem:cancel-sketch-property}, each vertex $v$ can locally compute $S(v)$. Next, observe that if $y_{ij} (v)$ is defined, then for every descendant $u$ of $v$, we have $y_{ij} (u) = y_{ij} (v)$, and therefore
	\begin{gather*}
		\bigoplus_{u \in V(T_v)} [S(u)]_{ij} = 
		\bigoplus_{u \in V(T_v)} \Sketch_{G \setminus \{y_{ij} (v)\}} (u) = 
		\Sketch_{G \setminus \{y_{ij} (v)\}} (V(T_v)) ~.
	\end{gather*}
	Thus, XOR-aggreagting the $S(u)$ matrices over subtrees yields the required information for each vertex $v$. Since each such matrix consists of $\widetilde{O}(D)$ bits, this can be done within $\widetilde{O}(D)$ rounds using standard pipeline techniques.  Within another $\widetilde{O}(D)$ rounds, each vertex can send its aggregated matrix to its parent.
	At this point, every $x \in V$ holds $\Sketch_{G \setminus \{y\}} (V(T_{x'}))$ for every light $T$-child $x'$ of $x$, and $y \in \pi_x (s, C_{x,x'})$. Using \Cref{lem:cancel-sketch-property,lem:connecitivity-in-Gxs} and the connectivity tree $\widehat{T}_x$ now allows $x$ to remove its own edges from the relevant subtree sketches and add them up to locally compute $ \Sketch_{G \setminus \{x,y\}} (\widehat{C}(x,y) \setminus H_x)$ for any heavy-mate $y$ of interest.
	
	We now handle the second term. Recall that $x$ can easily locally compute $\Sketch_{G \setminus \{x\}} (H_x)$ (as shown e.g.\ in the proof of \Cref{cl:lcagates}). Also, if $y$ lies (strictly) above $\lcagates(x)$ in $\pi(s, u_{H_x})$, then $y$ cannot have a neighbor in $H_x$ (as $\lcagates(x)$ is an ancestor of every $v \in V(T_x)$ that has neighbor in $H_x$). Thus, for every heavy-mate $y$ of interest we have $\Sketch_{G \setminus \{x\}} (H_x) = \Sketch_{G \setminus \{x,y\}} (H_x)$.
	
	Finally, we deal with the third term. By item 3 of \Cref{lem:connecitivity-in-Gxs}, $x$ knows for each of its neighbors $u$ and for each $y \in \pi(s,u)$ the component-ID of $C_{y,u}$. Thus, for every heavy-mate $y$ of interest, $x$ can identify all its edges going into $H_y$, and cancel them from $\Sketch_{G \setminus \{y\}} (H_y)$ (which is known to $x$ by \Cref{cl:lcagates}) to obtain  $\Sketch_{G \setminus \{x,y\}} (H_y)$, as required.
\end{proof}
}\APPENDCANCELXY

This concludes the $\widetilde{O}(D)$-round algorithm for detecting w.h.p. all independent cut pairs. By \Cref{sec:depend}, within another $\widetilde{O}(D)$ rounds, we can also detect w.h.p. all dependent cut pairs. \Cref{thm:distributed-cut-pairs} follows.

\section{Conclusion}
In this work, we provide distributed \congest\ algorithms for detecting vertex cuts of size at most two, with round complexity of $\widetilde{O}(D)$.
These bounds are nearly tight, and in addition, nearly match the current bounds obtained for the corresponding edge cut problems.
It would be interesting if our techniques, e.g., for eliminating the dependency in the maximum degree $\Delta$, could be extended to detecting vertex cuts of any constant size within $\poly(D, \log n)$ rounds. The latter bound is currently only achievable for the (simpler) edge cut problem \cite{parter2019small}.

\paragraph{Acknowledgments.}
We thank the anonymous reviewers of Distributed Computing for their insightful comments and suggestions that considerably improved the presentation of this work.

\bibliographystyle{alpha}
\bibliography{dist-cut}

\end{document}